\newtheorem{prop}{Proposition}
\newcommand{\hmb}[1]{{\color[rgb]{0,0,1}{#1}}}
\begin{document}

\title{Critical Phenomena in  Complex Networks: from Scale-free to Random Networks}

%\title{Statistical Description of Complex Scale-free Networks}

\author{Alexander I. Nesterov}
   \email{nesterov@cencar.udg.mx}
\affiliation{Departamento de F{\'\i}sica, CUCEI, Universidad de Guadalajara,
 Guadalajara, CP 44420, Jalisco, M\'exico}

\author{Pablo H\'ector Mata Villafuerte}
   \email{themata@hotmail.com}
\affiliation{Departamento de F{\'\i}sica, CUCEI, Universidad de Guadalajara,
 Guadalajara, CP 44420, Jalisco, M\'exico}

\date{\today}

\begin{abstract}

Within the conventional statistical physics framework, we study critical phenomena in a class of configuration network models with hidden variables controlling links between pairs of nodes.  We find analytical expressions for the average node degree, the expected number of edges, and the Landau and Helmholtz free energies, as a function of the temperature and number of nodes. We show that the network's temperature is a parameter that controls the average node degree in the whole network and the transition from unconnected graphs to a power-law degree (scale-free) and random graphs. With increasing temperature, the degree distribution is changed from power-law degree distribution, for lower temperatures, to a Poisson-like distribution for high temperatures. We also show that phase transition in the so-called {\em Type A} networks leads to fundamental structural changes in the network topology. Below the critical temperature, the graph is completely disconnected. Above the critical temperature, the graph becomes connected, and a giant component appears.

\end{abstract}

%\pacs{89.75.Hc, 89.20.Hh,02.50.-r, 05.30.-d}
 \keywords{complex networks; statistical mechanics; graph ensembles; phase transitions; hidden variables;  graph temperature}

\maketitle

\section{Introduction}

Network science has contributed to very diverse fields in both the natural and human sciences due to its intrinsic interdisciplinary nature. The phenomena and processes in networks belonging to nature's fundamental structures are quite different from those in lattices and fractals. That is why studying these intriguing effects will lead to a new understanding of a broad class of natural, artificial, and social systems \cite{BB1, DSMF, GCal, BABM, BAL, MN2018}.

Current research in complex networks (or graphs) focuses on three main classes of models: random graphs, small-world, and scale-free networks \cite{ARB}.  {In contrast to regular networks, in the random graph, some properties' values are fixed. Still, others, such as the number of nodes, edges, and connections between them, are determined randomly \cite{BB1, BABM, BAL, MN2018}. }

In general, small-world networks are characterized by the property that the typical distance between any pairs of nodes is short; it depends logarithmically on the number of nodes. A subclass of the small-world networks, the so-called Watts-Strogatz networks, are additionally characterized by a relatively high clustering level compared to a random graph with the same node-edge density \cite{WDSS}. This subclass is often identified with the whole class of \st{the} small-world networks. On the whole, small-world networks are intermediate between highly clustered regular lattices and random graphs. The coexistence of small path length and clustering may be taken as the main feature of small-world networks \cite{ARB,KJSD}.
 
Scale-free networks, having a power-law degree distribution, are characterized by large hubs, i.e., a few nodes highly connected to other network nodes. Nowadays, scale-free models are of significant interest since many real networks such as social networks, airline networks, the World Wide Web, computer networks, the Internet, and others, can be treated as scale-free networks \cite{BB1,DSMF, BABM, GCal, BAL1, BAL, ARB,KJSD, MN2018, GMNM, VIHP}. 

During the two last decades, methods of statistical mechanics applied to complex networks became a powerful tool for the study and explanation of the properties of real-world networks 
\cite{NMSW, MEJN1, RPMR, ARB, MN2018, PJNM, CDLM, CGST, BAL1, BAL}. The idea that a 
statistical approach is adequate to study complex networks is a natural one, since networks are 
large complex systems, and a deterministic approach cannot describe their collective behavior. 
The recent development of these methods has revealed new and unexpected challenges in the 
statistical physics of networks. One of them is the concept of network temperature and its function in the formation and dynamics of complex networks \cite{MDC}.

Usually, the temperature of a network is considered as a dummy variable. However, to complete 
the analogy with statistical physics, this concept should be more meaningful. One of the 
successful attempts to introduce the {\em graph temperature} as a parameter which controls 
clustering and/or the degree of topological optimization of a network, was made in Refs. 
\cite{CDAS,KDPF1,KDPF2}. Further progress in this approach has been achieved by determining 
the network temperature in terms of empirical data, such as the number of nodes, average node 
degree, and exponent of the degree distribution \cite{ANHM}.

Critical phenomena in networks result in drastic changes in the networks' topological properties, 
such as cluster and community structure of the network, the emergence of percolation and a 
giant connected component, etc. \cite{DCNMS, DSGAM}. This leads to another challenge: is it 
possible to treat the critical phenomena in complex networks using statistical physics methods 
in terms of thermodynamic potentials?

 { The purpose of this paper is twofold. First, to reveal the role of temperature. Second, to employ methods of statistical mechanics to study the critical phenomena in detail. In this paper, we study the statistical properties of complex networks with hidden variables, assigned to each edge $\langle i,j \rangle$ of the network \cite{PJNM,BMPSR, SMKD1}. Motivated by the importance of scale-free models for real networks, we concentrate on class configuration models consistent with the scale-free networks. 

 We use essentially analytical methods in our work, in contrast to the numerical simulations that form most studies' core. While our approach works equally well for undirected and directed graphs, we restrict ourselves to considering the undirected case only for the sake of simplicity. The generalization to directed graphs is straightforward. It's worth mentioning that in our work we consider only unweighted networks. The extension of our approach to the weighted network models requires a more profound development of the analytical methods.}

The paper is organized as follows. In Sec. II, we discuss the statistical properties of complex networks. In particular, we show how a network's temperature can be determined in terms of the empirical data available, such as the average node degree, the number of nodes, and degree exponent. In Sec. III, we explore phase transitions in complex networks with hidden variables. The transition from an unconnected to a connected network, the formation of a giant component, and other essential aspects of network phase transitions are discussed. In the Conclusion, we summarize our results and discuss possible generalizations of our approach.  {In Appendix A we study analytical properties of the global clustering coefficients in  the limit of low temperatures.} In Appendix B, we estimate the critical exponents of the phase transition.

\section{ Statistical description of complex networks} 

\subsection{General formalism }

A network is a set of $N$ nodes (or vertices) connected by $L$ links (or edges). One can describe the network by an adjacency matrix, $a_{ij}$, where each existing or non-existing link between pairs of nodes ($ij$) is indicated by a 1 or 0 in the $i,j$ entry. Individual 
nodes possess local properties such as node degree $k_i = \sum_j a_{ij}$, and clustering coefficient $c_i = \sum_{jk} a_{ij}a_{jk}a_{ki}/k_i(k_i -1)$ 
\cite{WDSS,NMSW,BSLV,ARB}. The network as a whole can be described quantitatively by its 
degree distribution $P_k$ and connectivity. The connectivity is characterized by the connection 
probability $p_{ij}$, i.e. the probability that a pair of nodes $(ij)$ is connected.  

Unlike the conventional approach to the statistical mechanics, where the Gibbs distribution is derived by considering a system in weak interaction with the environment, the statistical description of complex networks is based on the informational Shannon-Gibbs entropy, subject to certain constraints \cite{PJNM,STCD,HPLG}. For a particular graph $G \in \mathcal G$, belonging to ensemble of graphs, $\mathcal G$, we denote by  $P(G)$ the probability of obtaining this graph. Consider the entropy of the graph, $S = - \sum P(G) \ln P(G)$, and assume that the ollowing additional constraints are imposed: $\sum P(G) = 1$ and $E = \sum H(G)P(G)$, where $H(G)$ is the graph Hamiltonian and $E$ is the ``energy'' of the graph. 

 {From the principle of the maximum entropy, that taking into account the imposed constraint can be written as
 \begin{align}
 & \frac{\partial}{\partial  P(G)}\Big (- \sum P(G) \ln P(G) + \lambda \big ( \sum P(G) - 1\big ) \nonumber \\
   & + \beta \big(E -  \sum H(G) P(G)\big) \Big)=0,
    \end{align}
 where $\lambda$ and $\beta $ are the Lagrange multipliers,  we obtain the convenient Gibbs distribution:
   \begin{align}
    P(G) = \frac{1}{Z}e^{-\beta  H(G) }.
   \end{align}
 The expected value of any function, $X (G)$, is calculated as follows:
\begin{align}
 \langle X \rangle= \sum X(G) P(G)	
\end{align}
  }

The computation of the Shannon-Gibbs entropy yields
  \begin{align}
  S = -\sum 	P(G) \ln P(G) = \ln Z + \beta E.
  \end{align}
Employing the relation $\partial S/\partial E = 1/T $, we find that the Lagrange multiplier $\beta = 1/T$  is the inverse ``temperature'' of the graph. Next, using the thermodynamic relation $F= E -TS$, where $F$ is the Helmholtz free energy, we find $F = - T \ln Z$.
  
 To fit the model to the real empirical network, $G^\ast$\hmb{,} one can apply the maximum-likelihood principle, which specifies the parameter choice for the given set of constraints \cite{STCD}. 
  The log-likelihood
  \begin{align}
  	\mathcal L =\ln P(G^\ast) = - \beta H(G^\ast)  - \ln Z (\beta)
  \end{align}
  is maximized by a particular choice of $\beta^\ast$ yielding $E^\ast = -\partial \ln Z/\partial 
  \beta^\ast$, where $E^\ast$ is the empirical value measured on the real network:
  \begin{align}
  	E(G^\ast) = E^\ast = \sum H(G)P(G|\beta^\ast).
  \end{align}
  Thus, the temperature of the graph becomes a parameter that can be defined from the empirical data.  

The most general statistical description of an undirected network in equilibrium, with a fixed number of vertices $N$ and a varying number of links, is given by the grand canonical ensemble \cite{PJNM,CDLM,CDAS,CGST}. For an undirected network with fixed number of vertices $N$ and varying number of links, the probability of obtaining a graph, $A$, with energy $E_A$ can be written as \cite{PJNM,CDLM,CDAS,CGST}
\begin{align}
P_{A}=\frac{1}{\mathcal{Z}} \exp \big (\beta(\mu L_{A}-E_{A})\big),
\end{align}
where  $\beta =1/T$, with $T$ being the network temperature, $\mu$ is the chemical potential, and $L_{A}=\sum_{i j} a_{i j}$ is number of links in the graph $A$. The partition function reads 
\begin{align}
\mathcal{Z} =\sum_{A} \exp \big (\beta(\mu L_{A}-E_{A})\big).
\end{align}

The temperature is a parameter that controls clustering, and the chemical potential controls the link density and the connection probability in the network. It's worth noting that $P_{A}=2^{-N(N-1) / 2}$ for all graphs when $T \rightarrow \infty$, and when $T \rightarrow 0$ we have $P_A =1$ for the graph with the maximum value of $\mu L_{A}-E_{A}$, and $P_A = 0$ for all other graphs \cite{CDAS}.

 {To obtain the grand potential, $\Omega$, which we will refer to as the Landau free energy, we use the relation $\Omega = - \beta^{-1}\ln \mathcal Z$. Next, one can recover the Helmholtz free energy $F$, internal energy $ E $, entropy $S$, and 
heat capacity $C_N$, using the following relations:
\begin{align}   \label{Eq3a}
     F  &=\Omega + \mu L, \\
      E  &  =F+ TS , \label{EqE}\\
      S  &  =- \frac{\partial \Omega}{\partial T}\bigg |_{\mu} , \label{EqS}\\
      C_N & =T \frac{\partial S}{\partial T}.
     \label{Eq3b}
\end{align}
Finally, having the Landau free energy, one can find the expected number of links as follows: $L =-\partial \Omega/\partial \mu$.

{\em Remark.} If we have empirical information about the number of nodes, average node degree, chemical potential, and other parameters that can be included in the density of states, i.e., the exponent of the degree distribution, etc., then we can define the temperature of a given network employing the equation of state,  $L = - \partial \Omega/\partial \mu$. 

\subsection{Fermionic and bosonic graphs}

Let us assign to each edge $\langle i,j \rangle$ the `energy' $\varepsilon_{ij}$. Then the energy of the graph can be written as $E_{A} =\sum_{i<j} \varepsilon_{i j} a_{i j}$, and the partition function and the graph probability are given by \cite{CDAS}
\begin{align} \label{Z}
    \mathcal{Z}=&\sum_{\{A\}} \prod_{i<j} e^{\beta\left(\mu-\varepsilon_{i j}\right)  a_{i j} }.
\end{align}
The connection probability of the existing link between nodes $i$ and $j$ is
\begin{align}
	 p_{i j} =- \frac{\partial \ln \mathcal{Z}}{\partial (\beta \varepsilon_{ij})}.
	 \label{EqP}
\end{align}
The probability $p_{i j}$ is equivalent to the expected number of edges $\bar a_{i j}$ between vertices $i$ and $j$, namely, $\bar a_{i j}= p_{i j}$.

Consider two sets of undirected graphs $G$: (1) simple ({\em fermionic}) graphs with only one edge allowed between any pair of vertices; (2) multi-edge ({\em bosonic}) graphs with any number of edges between any pair of vertices (excluding self-edges) \cite{PJNM}. The computation of the partition function yields
\begin{itemize}
\item Fermionic graph: 	
\begin{align} \label{Z}
    \mathcal{Z}=\prod_{i<j}\big (1+e^{\beta\left(\mu-\varepsilon_{i j}\right)  }\big ).
\end{align}
\item Bosonic graph: 	
\begin{align} \label{Z}
    \mathcal{Z} =\prod_{i<j}\frac{1}{  \big (1-e^{\beta\left(\mu-\varepsilon_{i j}\right)  }\big )} .
\end{align}
\end{itemize}
Employing Eq. \eqref{EqP}, we obtain
\begin{align}
    p_{i j}=\frac{1}{e^{\beta \left(\varepsilon_{i j}-\mu\right) }\pm1},
    \label{FDBE}
\end{align}
where the upper sign corresponds to the Fermi-Dirac distribution and lower sign to the Bose-Einstein distribution.

Using the relation $\Omega = -\beta^{-1} \ln \mathcal Z$, we obtain
\begin{align}
\Omega = \mp \beta^{-1} \sum_{i<j}\ln \big(1 \pm e^{\beta\left(\mu-\varepsilon_{ij } \right)  }\big).
\label{Eq7f}    
\end{align}
Having the Landau free energy, one can find the expected number of links as follows: $L =-\partial \Omega/\partial \mu$. The computation yields
\begin{align}
L = \sum _{i < j } \frac{1}{e^{\beta \left(\varepsilon_{i j}-\mu\right) }\pm1}.
\label{Eq.3_1}
\end{align}

The expected degree of a vertex $i$ reads
\begin{align}
\bar k_i= \sum_j p_{ij}.	
\label{DV1a}
\end{align}
Denoting the average node degree in the whole network with $\langle  k \rangle = (1/N)\sum_i  \bar k_i $ , we obtain
\begin{align}
\langle  k \rangle  =\frac{2}{N}\sum _{i < j } \frac{1}{e^{\beta \left(\varepsilon_{i j}-\mu\right) }\pm1},
\label{Eq.4_1}
    \end{align}
  Comparing this expression with \eqref{Eq.3_1}, we obtain the following relation between the expected number of links and average node degree: $L  =  N \langle  k \rangle/2$.

\subsection{Configuration fermionic model}

We will focus now on the model introduced in \cite{PJNM} with the graph Hamiltonian given 
by
\begin{align}
	H(G) = \sum_i \varepsilon_i k_i,
	\label{SGH}
\end{align}
where  $\varepsilon_i $ is an``energy"   assigned to each vertex $i$. Noting that $k_i =  \sum_j a_{ij}$, one can rewrite \eqref{SGH} as follows:
\begin{align}
H(G) = \sum_{i<j}\varepsilon_{ij} a_{ij}
 = \sum_{i<j}(\varepsilon_{i}   + \varepsilon_{j}  )a_{ij}  . 
\label{SGH1}
\end{align}

The Landau free energy \eqref{Eq7f} takes the form
\begin{align}
\Omega = -\beta^{-1} \sum_{i<j}\big (1+e^{\beta\left(\mu-\varepsilon_{i } - \varepsilon_{j }\right)  }\big ).
\label{Eq7f}    
\end{align}
For $N \gg 1$ one can replace the sum by an integral, $\frac{2}{N(N-1)}\sum_{i<j} \rightarrow \iint$, and recast \eqref{Eq7f} as
\begin{align}
\Omega = -\frac{N(N-1)}{2 \beta} \iint\ln \big (1 + e^{\beta (\mu- \varepsilon - \varepsilon') } \big) \rho (\varepsilon) \rho (\varepsilon')d \varepsilon  d \varepsilon',
\label{Eq7h}    
\end{align}
where $\rho(\varepsilon)$ denotes the density of states, with the standard normalization $\int \rho (\varepsilon) d \varepsilon =1$. Next, employing the relation $L =-\partial \Omega/\partial \mu$, we obtain
\begin{align}
L = \frac{N(N -1)}{2}\iint \frac{\rho (\varepsilon ) d \varepsilon \rho (\varepsilon') d \varepsilon'}{e^{\beta \left(\varepsilon + \varepsilon' -\mu\right) }+1}.
\label{Eq.3}
\end{align}

In the same limit, the expected degree of a vertex $i$ with energy $\varepsilon_i$ is
\begin{align}
	 \bar k (\varepsilon_i ) = (N - 1)\int p(\varepsilon_i, \varepsilon' ) \rho(\varepsilon' ) d \varepsilon' ,
\end{align}
 where
\begin{align}
    p(\varepsilon_i, \varepsilon') = \frac{1}{e^{\beta \left(\varepsilon_i + \varepsilon' -\mu\right) }+1}.
    \label{FD}
\end{align}
For the average node degree in the whole network this yields\begin{align}
\langle  k \rangle  =\int \bar k(\varepsilon) \rho (\varepsilon) d \varepsilon = (N -1) \iint \frac{\rho (\varepsilon ) \rho (\varepsilon') d \varepsilon d \varepsilon'}{e^{\beta \left(\varepsilon + \varepsilon' -\mu\right) }+1}.
\label{Eq.4}
    \end{align}

{\em Remark.} Unlike the soft configuration models, where the expected degree sequence is constrained to a given sequence, expected degrees are random variables in our model. In recent terminology these models are called {\em hypersoft configuration models } \cite{KAKD,HPLG,VIHPKM}. 

The density of a simple network is characterized by the {\em connectance} or {\em density} $\varrho $ of the network, which is defined as the fraction of those edges that are actually present \cite{MN2018}
\begin{align}
	\varrho = \frac{L}{L_{\max}} = \frac{2L}{N(N-1)}.
	\label{DN}
\end{align}
As one can see, the range of density values is $0 \leq \varrho \leq 1$. Using the relation $L  =  N \langle  k \rangle/2$, one can rewrite the expression for the network's density \eqref{DN} in the equivalent form $ \varrho =\kappa $, where $ \kappa =\langle k \rangle/(N-1) $ is the average node degree per node. 

There are two important classes of networks to consider. One is when the network remains connected ($\varrho$ is non-zero) in the limit of large $N$ ({\em dense} network). The opposite case is when the network becomes empty ({\em sparse}) for large $N$, i.e. $\varrho\rightarrow 0$.

 We will say that a network is {\em asymptotically sparse} if for large $N$ the density $\varrho\rightarrow 0$ as $T\rightarrow T_c$, where $T_c$ is a critical temperature. It's worth noting that for both networks the density converges to $  \varrho _\infty= 1/2$ as $T\rightarrow \infty$.
}

\subsubsection*{ Clustering coefficient} 

The  clustering coefficient is defined as the probability that two 
nodes, connected to a third node, will also be connected to each other \cite{WDSS}. For a given node $i$ with energy $\varepsilon_i$, the local clustering coefficient, $ c_i=
c(\varepsilon_i ) $, can be calculated as follows \cite{BMPSR}:
\begin{align}
c_i= \frac{(N -1)^2}{\bar k^2(\varepsilon_i )} \iint p(\varepsilon_i, \varepsilon'  ) p(\varepsilon', \varepsilon''  ) p(\varepsilon_i, \varepsilon''  )\rho(\varepsilon') \rho(\varepsilon'')  d \varepsilon' d \varepsilon''.
\label{C1}
\end{align}
The clustering coefficient for vertices of degree $k$ is given by \cite{BMPSR}
\begin{align}
\bar c_k=\frac{1}{P_k} \int g(k | \varepsilon) c (\varepsilon) \rho(\varepsilon) d\varepsilon , \, k=2,3, \dots ,
\label{EqP1}
\end{align}
where $g(k | \varepsilon)$ is the propagator, yielding
\begin{align}
P_k= \int g(k | \varepsilon) \rho(\varepsilon) d\varepsilon.
\end{align}

 {There are two possible characterizations of the global clustering coefficient. First, and most used, is by averaging of $c_k$ over the whole network, i.e. writing $C_1 =\sum_k c_k P_k$. In the continuous limit we obtain 
\begin{align}
 C_1=  \int c(\varepsilon) \rho(\varepsilon )d \varepsilon.
 \label{C1}
\end{align}
In the computation of this expression, we used the normalization condition $\sum_k g(k | \varepsilon) =1$. Substitution of $c(\varepsilon)$ yields
\begin{align}
    C_1= \int d\varepsilon\rho(\varepsilon)  \frac{\iint  p(\varepsilon, \varepsilon'  ) p(\varepsilon', \varepsilon''  ) p(\varepsilon, \varepsilon''  )\rho(\varepsilon' )\rho(\varepsilon'')  d \varepsilon' d \varepsilon'' }{\big (\int p(\varepsilon, \varepsilon' ) \rho(\varepsilon' ) d \varepsilon' \big )^2  }.
    \label{C1a}
\end{align}
}
The second definition is as follows \cite{BAWM,PJNM}:
\begin{align}
    C_2=\frac{3\times\text { number of triangles }}{\text { number of connected triples }}.
    \label{C2}
    \end{align}
This can be recast as \cite{BBOR}
\begin{align}
    C_2=\frac{\bar \sum_i \bar k(\varepsilon_i)(\bar k(\varepsilon_i) -1) c(\varepsilon_i ) }{\sum_i  \bar k(\varepsilon_i)(\bar k(\varepsilon_i) -1)}.
    \label{C2a}
    \end{align}
In the continuous limit we have
\begin{align}
    C_2=\frac{\int {\bar k}^2(\varepsilon)  c(\varepsilon) \rho(\varepsilon )d \varepsilon }{\int {\bar  k}^2(\varepsilon)  \rho(\varepsilon )d \varepsilon }.
    \label{C2a}
\end{align}
 {This can be rewritten as
 \begin{align}
    C_2= \frac{\iiint  p(\varepsilon, \varepsilon'  ) p(\varepsilon', \varepsilon''  ) p(\varepsilon, \varepsilon''  )\rho(\varepsilon) \rho(\varepsilon' )\rho(\varepsilon'') d \varepsilon d \varepsilon' d \varepsilon'' }{\int \big (\int p(\varepsilon, \varepsilon' ) \rho(\varepsilon' ) d \varepsilon' \big )^2 \rho(\varepsilon )d \varepsilon }.
    \label{C2b}
\end{align}
}
In the limit of high temperatures, the connection probability $p(\varepsilon, \varepsilon') \rightarrow 1/2$, as $T \rightarrow \infty$.  Using this result, one can show that the average node degree in the whole network, $\langle  k \rangle$, converges to $N/2$, and for both definitions the average clustering coefficient $C_{1,2}\rightarrow 1/2$, when $T \rightarrow \infty$.

{\em Comment.} The definitions for the global clustering coefficient introduced above are highly non-equivalent, i.e., in some situations, one can obtain $C_1=1$ and $C_2=0$ (see, for instance, the discussion in  Refs. \cite{BBOR}). In Section III, we will show that the first definition leads to the wrong results for asymptotically sparse networks.

\subsubsection*{Generating function formalism}

Critical phenomena in networks with arbitrary degree distribution can be treated successfully using the generating function formalism \cite{WHS,NMSW}. Following  \cite{NMSW}, we define a generating function as
\begin{align}
G_0(z)	= \sum_k z^k P_k,
\label{Eq12}	
\end{align}
where $P_k$ is the degree distribution (the probability that any given vertex has degree $k$). Further, all calculations will be confined to the region $0\leq z\leq1$. 

To compute $G_0(z)$, we use the generating function formalism for networks with hidden variables developed in \cite{BMPSR}. If we consider $\varepsilon$ as a hidden variable, then the degree distribution can be written
\begin{align}
P_k= \int g(k | \varepsilon)\rho(\varepsilon)  d\varepsilon,
\end{align}
where $g(k | \varepsilon)$ denotes the propagator,  with the normalization condition $\sum_k g(k | \varepsilon) =1$. Substituting $P_k$ in Eq. \eqref{Eq12}, we obtain
\begin{align}
G_0(z)	= \int  d\varepsilon\rho(\varepsilon)\sum_k z^k g(k | \varepsilon).
\label{Eq12p}	
\end{align}
As shown in \cite{BMPSR},
\begin{align}
\ln \sum_k z^k g(k | \varepsilon)= N\int d\varepsilon^{\prime}\rho(\varepsilon^{\prime}) \ln \big (1-(1-z) p(\varepsilon, \varepsilon^{\prime})\big).	
\end{align}
Using this result in Eq. \eqref{Eq12p}, we obtain
\begin{align}
G_0(z)	=\int d\varepsilon\rho(\varepsilon)\exp \Big (N\int d\varepsilon^{\prime}\rho(\varepsilon^{\prime}) \ln \big (1-(1-z) p(\varepsilon, \varepsilon^{\prime})\big)\Big ).
\label{Eq13f}
\end{align}

Having the generating function, one can easily calculate the degree distribution and its moments:
\begin{align} \label{Eq13a}	
P_k =  \frac{1}{k!}\frac{d^k}{dz^k}G_0(z)\Big|_{z=0}, \\
	\langle k^n \rangle =  \Big(z \frac{d}{dz}\Big)^nG_0(z)\Big|_{z=1}.
\label{Eq13b}	
\end{align}
In particular, this yields
\begin{align} \langle k \rangle =G'_0(1), \,\,
	\langle k^2 \rangle = G''_0(1) + G'_0(1).
	\label{Eq13c}
\end{align}

Further, it is convenient to introduce the abbreviation for derivatives of the generating function:
\begin{align}
	z_n =\frac{d^n}{dz^n}G_0(z)\Big|_{z=1}.
	\label{Eq13d}
\end{align}
Then, using Eq. \eqref {Eq13b},  we obtain $ \langle k \rangle = z_1$, $ \langle k^2 \rangle = z_2  + z_1$, etc.

 Using \eqref{Eq13f} and the results of Ref. \cite{BMPSR},  we find that the generating function can be written as
\begin{align}
G_0(z)    =\int e^{(z-1) \bar k (\varepsilon ) }   \rho (\varepsilon) d \varepsilon ,
\label{Eq13p}    
\end{align}
where  $ \bar k (\varepsilon ) = N \int p(\varepsilon, \varepsilon' ) \rho(\varepsilon' ) d \varepsilon' $ is the expected degree of the node with the hidden variable $\varepsilon$. Straightforward computation leads to the following relation: $z_n =\langle \bar  k^n \rangle$,
where we denote by $ \langle \bar  k^n \rangle$ the 
$n$-th moment of the node degree with the hidden variable $\varepsilon$,
 \begin{align}
 \langle \bar  k^n \rangle	 = \int  \bar k^n(\varepsilon) \rho (\varepsilon) d \varepsilon.
 \label{Eq.4a}
 \end{align}
Using  the relation $z_n =\langle \bar  k^n \rangle$ and Eq. \eqref{Eq13c} , one can calculate the $n$-th moment of the degree distribution, $\langle  k^n \rangle$, if we know the corresponding moments for the hidden variables, $\langle \bar  k^n\rangle$. In particular, we obtain
\begin{align}
	 \langle k^2 \rangle = \langle \bar k^2\rangle  + \langle  k \rangle.
\end{align}

\subsection{Configuration models consistent with scale-free networks}

Scale-free networks are characterized by a power-law degree distribution,  $P(k)\propto k^{-\gamma}$, where $1\leq k \leq k_0$, and the exponent of the distribution is $\gamma > 1$. We will focus now on the class of configuration models consistent with the scale-free networks. The simplest example is the soft configuration model, whose expected degree sequence is constrained to a given (observed) sequence  $\{k^\ast_i \}$  \cite{PJNM,CFLL,CFLL1,CGCA, SVCG,TSGD},
\begin{align}
k^\ast_i=\bar k(\varepsilon_i) = \sum _j p_{ij}.	
\end{align}

In the limit of $N \gg 1 $, the expected degree $ \bar  k (\varepsilon)$ of a vertex with energy $\varepsilon$ can be written as 
\begin{align}
    \bar k (\varepsilon)  = (N- 1) \int \frac{\rho (\varepsilon' )d \varepsilon' }{e^{\beta \left(\varepsilon+ \varepsilon' -\mu\right) }+1}.
    \label{ED1}
\end{align}
If $ k =\bar k (\varepsilon) $ is a monotonous function of $\varepsilon$, then
the probability distribution $P(k)$ is given by \cite{CGCA, SVCG}
\begin{align}
P(k) =\rho(\varepsilon(k)) \varepsilon'(k).
\label{PK}
\end{align}

 To specify the model one should define a density of state, $\rho(\varepsilon ) $.  {Let us assume that it is distributed according to $\rho(\varepsilon 
)  \propto \beta_c(\gamma -1)e^{\beta_c(\gamma- 1)\varepsilon}$, where $0 \leq 
\varepsilon \leq \varepsilon_0$; $\beta_c $ is a constant with dimension of inverse 
temperature. Substituting $\rho(\varepsilon)$ in \eqref{ED1}, we find that in the 
low-temperature limit the leading term is $\bar  k (\varepsilon) \propto e^{-\beta \varepsilon}$. Using this result, we assign to each node with  the energy $\varepsilon$  a degree $ k (\varepsilon) =e^{\beta ( \varepsilon_0 - \varepsilon)}$. The parameter $k$ is bounded by $k_0 =e^{\beta \varepsilon_0 }$.  Substituting $k(\varepsilon)$ in Eq. \eqref{PK}, we obtain a power-law degree distribution,  $P(k)\propto k^{-\tau}$ with $\tau = \beta_c (\gamma -1)/\beta+1$}. Thus, in the low-temperature regime, our model describes a scale-free network. 

As pointed out in Ref. \cite{CDAS}, the soft configuration model considered above can also be regarded as a particular case of the so-called {\em fitness model} introduced in \cite{CGCA}. In the fitness model, each node $i$ is characterized by a {\em fitness}   $x_i = e^{-\beta \varepsilon_i}$, and the connection probability takes the form $p_{i j} = f(x_ix_j)$, with $f(x_i,x_j)$ being a symmetric function of the variables $x_i$ and $x_j$.  In terms of the hidden variables,  $x_i$, we  can rewrite Eq. \eqref{FD} as
\begin{equation}
p_{i j}=\frac{1}{e^{\beta \left(\varepsilon_{i }+ \varepsilon_j-\mu\right) 
}+1}=\frac{z x_{i} x_{j}}{1+z x_{i} x_{j}},
\end{equation}
where $z= e^{\beta \mu}$.  In Ref. \cite{PYNM2} it has been shown that the choice of $\rho (x) 
\sim x^{-\gamma}$ leads to a scale-free degree distribution with the same exponent, 
$-\gamma$. A similar measure, $\rho(x) \sim e^{\alpha x}$, was considered in sparse 
hypersoft configuration models \cite{KAKD,HPLG,VIHPKM}.

The alternative, yielding a power-law degree distribution in the low-temperature limit, can be made by choosing the density with exponential growth, $\rho(\varepsilon )  \propto e^{q\varepsilon}$. A similar consideration, as has been made in the case of the exponential decay of $\rho(\varepsilon)$, leads to $ k (\varepsilon) \propto e^{\beta \varepsilon }$. The parameter $k$ is bounded by $k_0 =k(\varepsilon_0)$. We obtain a power-law degree distribution $P(k)\propto k^{-\tau}$ with $\tau = q/\beta+1$. 

{\em Comments.} In contrast to Ref. \cite{CDAS}, where the temperature of a network was directly related to the exponent of its degree distribution by setting $T= \gamma -1$, we allow the temperature to be a free parameter. It can be calculated if the empirical data are known. Next, since the re-scaling of $\varepsilon$ can adjust $ T_c$, without loss of generality, one can take its value to be $T_c =1$.  Throughout the paper, we choose $T_c =1$, and in most of numerical simulations we take $\gamma = 2.1$. (Note the chosen value of $\gamma$ is typical for many real networks.)

 {Further, the networks with exponential growth (decay) of the density of states, $\rho (\varepsilon)$, we  will denote as {\em Type A} ({\em Type B}), respectively.} Imposing the standard normalization condition, $\int_0^{\varepsilon_0 }\rho (\varepsilon) d \varepsilon =1$, we obtain 
\begin{align}\label{Eq5a}
    \rho_g(\varepsilon) =&\frac{\alpha\beta e^{\alpha \beta (\varepsilon- 
\varepsilon_0/2)}}{2\sinh(a\beta\varepsilon_0/2)},  \quad 
    \rm Type \, A \\
    \rho_d(\varepsilon) =&\frac{\alpha\beta e^{-a \beta (\varepsilon- 
\varepsilon_0/2)}}{2\sinh(a\beta\varepsilon_0/2)}, \quad \rm Type \, B
\label{Eq5b}
\end{align}
where  $\alpha = \beta_c (\gamma -1)/\beta$.

We are now able to calculate the expected vertex degree $ \bar k (\varepsilon)$, expected 
number of links $ L $, and the Landau free energy $\Omega$. After some algebra we obtain 
rather long expressions:  
\begin{widetext}
\begin{itemize}
	\item Type A
	\begin{align}\label{Eq8}	
& \bar k_g(\varepsilon) =  \frac{N-1}{2\sinh(\alpha\beta\varepsilon_0/2)} \Big ( e^{\alpha\beta 
\varepsilon_0/2 }{}_{2}F_{1} \big (1, \alpha ;  1+\alpha  ;-e^{ \beta (\varepsilon + \varepsilon_0 
- \mu)} \big )  -   e^{-\alpha \beta \varepsilon_0 /2}{}_{2}F_{1} \big (1, \alpha ;  1+\alpha  ;- 
e^{ \beta(\varepsilon -\mu)} \big ) \Big ) ,\\
&L_g=  \frac{N(N-1) }{8\sinh^2(\alpha\beta\varepsilon_0/2)} \Big ( e^{\alpha \beta \varepsilon_0 }{}_{3}F_{2} \big (1, \alpha, \alpha ; 1+\alpha , 1+\alpha  ;- e^{ \beta (2 \varepsilon_0 - \mu)} \big )  - 2 {}_{3}F_{2} \big (1, \alpha, \alpha ; 1+\alpha , 1+\alpha  ;- e^{ \beta (\varepsilon_0 - \mu)}\big) \nonumber \\
&	+ e^{ -\alpha\beta \varepsilon_0} {}_{3}F_{2}\big (1, \alpha, \alpha ; 1+\alpha , 1+\alpha  ;- 
e^{ -\beta \mu} \big)\Big ), \label{Eq8a}	\\
	&\Omega_g=   -\frac{1}{a \beta}L_g -  \frac{N(N-1) }{8 
	\beta\sinh^2(\alpha\beta\varepsilon_0/2)} \Big ( e^{\alpha \beta  
				\varepsilon_0}\ln \big (1 + e^{\beta (\mu- 2\varepsilon_0)} \big ) -2\ln \big (1 + e^{\beta (\mu- \varepsilon_0)} \big ) +e^{-\alpha \beta  
						\varepsilon_0} \ln \big (1 + e^{\beta \mu} \big ) \nonumber \\
		&+ e^{\alpha \beta\varepsilon_0} \Phi(-e^{\beta (2\varepsilon_0-\mu)},1,\alpha) \big ) -2 \Phi(-e^{\beta (\varepsilon_0-\mu )},1,\alpha) \big ) + e^{-\alpha \beta  \varepsilon_0}\Phi(-e^{-\beta \mu},1,\alpha) \Big) ,
				\label{Eq8b}	
	\end{align}
	\item Type B 
\begin{align}\label{Eq9a}	
& \bar k_d (\varepsilon)=  \frac{(N-1) \alpha e^{\beta(\mu - \varepsilon)  }  e^{\alpha\beta 
\varepsilon_0/2 } }{2(1 +\alpha)\sinh(\alpha\beta\varepsilon_0/2)} \Big ({}_{2}F_{1} \big(1, 
1+\alpha ;  2+\alpha  ;-e^{ \beta (\mu -\varepsilon )} \big)  -   e^{-(1+\alpha\beta 
)\varepsilon_0 }{}_{2}F_{1} \big(1, 1+\alpha ;  2 +\alpha  ;- e^{ \beta( \mu - \varepsilon - 
\varepsilon_0 )} \big) \Big ) , \\
	&L_d = \frac{N(N-1)\alpha^2 e^{\beta( \mu - \varepsilon_0 )}}{8(1+\alpha)^2\sinh^2(\alpha\beta\varepsilon_0/2)} \Big ( e^{-(1+\alpha)  \beta \varepsilon_0}{}_{3}F_{2} \big(1, 1+\alpha, 1+\alpha ; 2+\alpha , 2+\alpha  ;- e^{\beta (\mu- 2\varepsilon_0 )} \big) \nonumber \\
	& - 2  {}_{3}F_{2} \big (1,1+ \alpha,1+ \alpha ; 2+\alpha , 2+\alpha  ;- e^{\beta (\mu- \varepsilon_0 )} \big ) 
	+e^{(1+\alpha)  \beta \varepsilon_0} {}_{3}F_{2} \big (1, 1+\alpha, 1+\alpha ; 2+\alpha , 2+\alpha  ;- e^{ \beta \mu} \big )\Big ).	\label{Eq9b}	 {OK!}  \\
	&\Omega_d=   \frac{1}{\alpha \beta}L_d -  \frac{N(N-1) }{8 
	\beta\sinh^2(\alpha\beta\varepsilon_0/2)} \Big ( e^{-\alpha \beta\varepsilon_0}
	\ln \big (1 + e^{\beta (\mu- 2\varepsilon_0)} \big ) 
	-2\ln \big (1 + e^{\beta (\mu- \varepsilon_0)} \big ) +e^{\alpha \beta \varepsilon_0} \ln 
		\big (1 + e^{\beta \mu} \big ) \nonumber \\
		&+ e^{-\alpha \beta\varepsilon_0} \Phi(-e^{\beta (2\varepsilon_0-\mu)},1,-\alpha \big ) -2 \Phi \big (-e^{\beta (\varepsilon_0-\mu )},1,-\alpha \big ) + e^{\alpha\beta  \varepsilon_0}\Phi\big (-e^{-\beta \mu},1,-\alpha \big) \Big) ,
				\label{Eq9c}	
	\end{align}
\end{itemize}
\end{widetext}
where ${}_{p}F_{q}(a_1, \dots, a_p; b_1, \dots, b_q; z)$ is the generalized hypergeometric function, and $\Phi(z,a,b)$ denotes the Lerch transcendent \cite{AEWM,NIST}.  

 {\subsubsection*{Thermodynamic limit}

To study the thermodynamic limit, one should specify the model by determining the parameters $\varepsilon_0$ and $\mu$.  As one can see the graph density and the clustering coefficients do not depend explicitly on the network's size, $\varrho = \varrho (\mu, T, \varepsilon_0) $ and  $C_{1,2} =  C_{1,2}(\mu, T, \varepsilon_0) $. Since the chemical potential and temperature are intensive variables, the graph density and the correlation coefficient would depend on the system's size if the cut-off $\varepsilon_0$ depends on  $N$. Therefore it is instructive to analyze their asymptotics for large $\varepsilon_0$. Before proceeding, it's worth mentioning that, from the relation $k_0 = e^{\beta \varepsilon_0}$, it follows $k_0 \rightarrow \infty$ when $\varepsilon_0 \rightarrow \infty$. 

We  make use of the asymptotic properties of the generalized hypergeometric functions \cite{AEWM,NIST,abr} to obtain
\begin{align} \label{Eq9c}	
&\bar k_g(\varepsilon) =  \frac{\alpha e^{ \beta ( \mu - \varepsilon  )} }{\alpha -1} e^{ -\beta \varepsilon_0 } \big( 1+ {\mathcal O}(e^{- \beta  \varepsilon_0} )\big) , \\
&\bar k_d (\varepsilon)=  {}_{2}F_{1} \big(1, 1+\alpha ;  2+\alpha  ;-e^{ \beta (\mu -\varepsilon )}  \big ) + {\mathcal O}(e^{- \beta  \varepsilon_0} ), \label{Eq9d}	
\\
	&\varrho_s  = \frac{\alpha^2}{(\alpha -1 )^2}e^{- \beta (2 \varepsilon_0 - \mu)} \big( 1+ {\mathcal O}(e^{- \beta  \varepsilon_0} )\big), \\
	&\varrho_d =  \frac{\alpha ^2 e^{\beta  \mu}}{(1+\alpha )^2}\,
	 {}_{3}F_{2} \big (1, 1+\alpha , 1+\alpha  ; 2+\alpha  , 2+\alpha   ;- e^{ \beta \mu}\big ) \nonumber \\
	 &+{\mathcal O}(e^{- \beta  \varepsilon_0} ).
\end{align}

The asymptotics of the global clustering coefficient can be found by substituting \eqref{Eq9c}	and \eqref{Eq9d} in Eqs. \eqref{C1a} and \eqref{C2b},
\begin{align} 
	&C_{1,2} \propto e^{- \beta  \varepsilon_0}, \quad \text {A-graph}, \\
		&C_{1,2} = \tilde C_{1,2}+ {\mathcal O}(e^{- \beta  \varepsilon_0} ), \quad \text{B-graph},
\end{align}
where $ \tilde C_{1,2} = \lim_{\varepsilon_0 \rightarrow \infty} C_{1,2}(\mu,T,\varepsilon_0)$.
It follows that in the limit of $\varepsilon_0 \rightarrow \infty$, the A-graph becomes sparse, and its properties do not depend on the temperature. However, for the B-graph, this is not true. We consider its features below in detail.
}

\subsection*{Fitness model}

As the first application of the developed approach, we consider the fitness model proposed in Ref. \cite{CGCA}. In this model for each vertex of the random network, a real non-negative hidden variable, called the {\em fitness}, is assigned. The fitness model belongs to the class of dense soft configuration 
networks described above, with $\varepsilon$ being the hidden variable, such that $\varepsilon_0 \rightarrow \infty$ and the chemical potential $\mu = \rm const$ \cite{CGCA, SVCG}. The linking probability is described by Eq. \eqref{EqP} with the density of states given by
\begin{align}\label{Eq11}
	\rho_f(\varepsilon) = \alpha\beta e^{-\alpha\beta \varepsilon }= \beta_c(\gamma - 1) e^{-\beta_c (\gamma -1)\varepsilon }.
\end{align}

The computation of the expected degree of a vertex,  the expected value of links and the Landau free energy yields: 
\begin{widetext}
	\begin{align}\label{Eq12a}	
&\bar  k_f (\varepsilon) =  \frac{  \alpha (N-1)e^{ \beta (\mu -\varepsilon )}}{1 +\alpha } \,{}_{2}F_{1} \big(1, 1+\alpha  ;  2+\alpha   ;-e^{ \beta (\mu -\varepsilon )} \big) ,  \\
&L_f = \frac{N(N-1)\alpha ^2 e^{\beta  \mu}}{2(1+\alpha )^2}\,
	 {}_{3}F_{2} \big (1, 1+\alpha , 1+\alpha  ; 2+\alpha  , 2+\alpha   ;- e^{ \beta \mu}\big ),\\
	 	 &\Omega_f=   \frac{1}{\alpha  \beta}L_f -  \frac{N(N-1) }{2 \beta } \Big (  \ln 
		\big (1 + e^{\beta \mu} \big ) +\Phi\big (-e^{-\beta \mu},1,-\alpha  \big) \Big) .
				\label{Eq12b}	
\end{align}
\end{widetext}

\begin{figure}[tbh]
\includegraphics[width=0.9\linewidth]{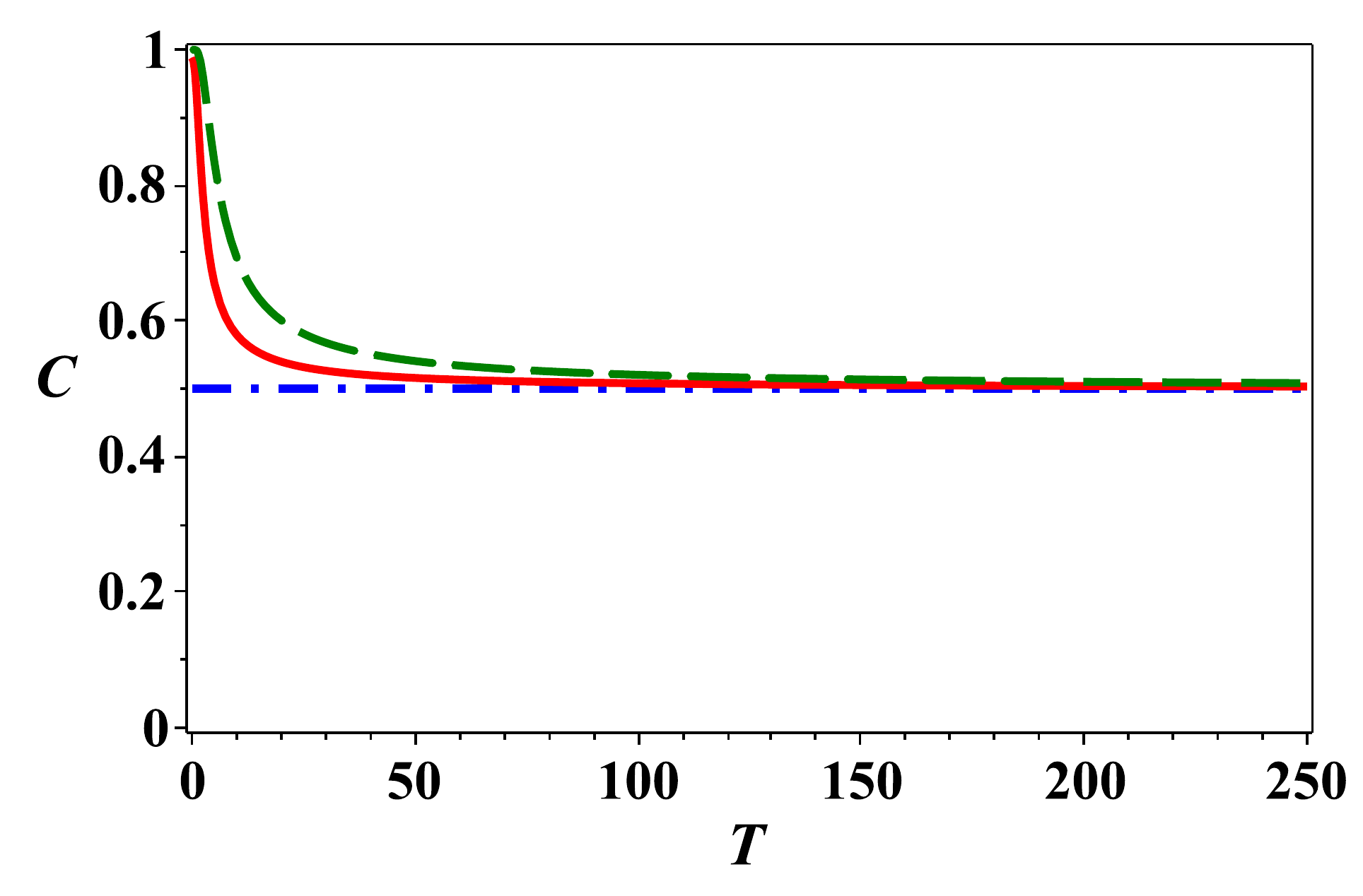} 
	\caption{Fitness model. The average clustering coefficient, $C$, in the whole network as a function of $T$ is depicted. Green dashed line: $\mu =10$, red solid line: $\mu =5$. Blue dash-dotted line presents the asymptotic value of the clustering coefficient.}
	\label{fig3b}
\end{figure}

\begin{figure}[tbh]
            \includegraphics[width=1\linewidth]{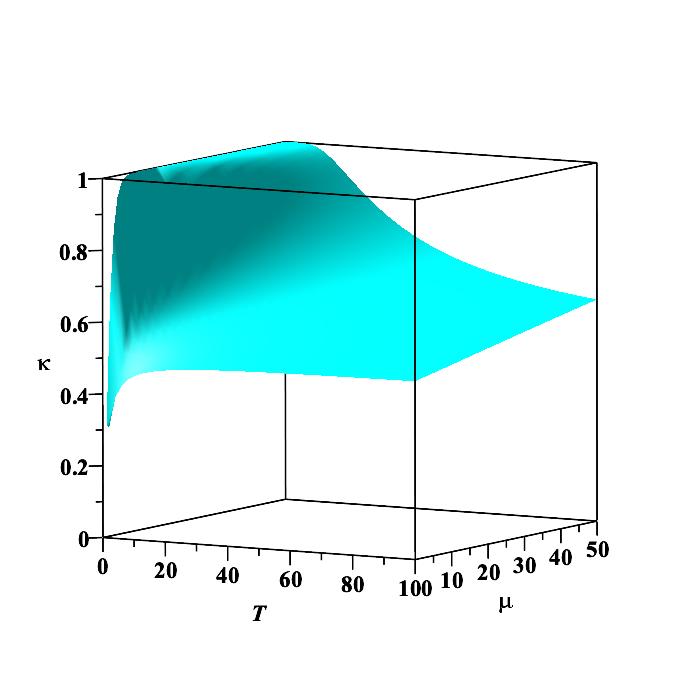} 
            	\caption{Fitness model. The average node degree per node, $ \kappa =\langle k \rangle/N$, as a function of  the chemical potential $\mu$ and temperature $T$.}
            	\label{fig3c}
            \end{figure}

Our numerical simulations show that in the fitness model both definitions of the clustering coefficient yield the same result; therefore, we omit indices, writing $ C $ instead of $ C_{1,2} $. In Figs \ref{fig3b},  \ref{fig3c} the  average clustering coefficient $C$, in the whole network and the average node degree per node, $ \kappa =\langle k \rangle/N$, are depicted.  As shown in Fig. \ref{fig3b}, the clustering coefficient $C\simeq 1$ for low temperatures,  $T \lesssim T_c$, and $C\simeq 1/2$ when $T \gg T_c$. The behavior of the average node degree is typical for Type B graphs: for low temperatures $\kappa \simeq 1$, and $\kappa \rightarrow 1/2$ in the limit of high temperatures, $T \gg T_c$ (Fig. \ref{fig3c}).  	

\section{Critical phenomena}

One of the important specific cases of configuration models is when $\varepsilon_0 = \mu$. Our choice of cut-off, $\varepsilon_0$, leads to the following modifications of Eqs. \eqref{Eq5a} -- \eqref{Eq9c}. The density of states for a Type A/Type B networks is given now by 
\begin{align}\label{Eq10h}
	\rho_g(\varepsilon) =&\frac{\alpha\beta e^{a \beta (\varepsilon- 
\mu/2)}}{2\sinh(a\beta\mu/2)},  \quad 
	\text{ Type A }\\
	\rho_d(\varepsilon) =&\frac{\alpha\beta e^{-\alpha \beta (\varepsilon- 
\mu/2)}}{2\sinh(a\beta\mu/2)}, \quad \text{ Type B }
\label{Eq10g}
\end{align}

Further we assume that the number of nodes $N \gg$1.  Then the expressions for the expected vertex degree, number of links and Landau free  energy are modified as follows:
	\begin{widetext}
\begin{itemize}
	\item Type A
	\begin{align}	\label{Eq10a}	
	 &\bar k_g(\varepsilon)=  \frac{N-1}{2\sinh(\alpha\beta\mu/2)} \Big ( e^{\alpha \beta \mu/2 }{}_{2}F_{1} \big (1, \alpha;  1+\alpha  ;-e^{ \beta \varepsilon } \big )  -   e^{-\alpha \beta \mu /2}{}_{2}F_{1} \big (1, \alpha ;  1+\alpha  ;- e^{ \beta(\varepsilon -\mu)} \big ) \Big ), \\
	&L_g=  \frac{N(N-1) }{8\sinh^2(\alpha\beta\mu/2)} \Big ( e^{\alpha \beta \mu} {}_{3}F_{2} \big (1, \alpha, \alpha ; 1+\alpha , 1+\alpha  ;- e^{ \beta \mu } \big )  - 2 {}_{3}F_{2} \big (1, \alpha, \alpha ; 1+\alpha , 1+\alpha  ;- 1\big ) \nonumber \\
&	+ e^{ -\alpha \beta \mu} {}_{3}F_{2} \big (1, \alpha, \alpha ; 1+\alpha , 1+\alpha  ;- e^{ -\beta \mu} \big )\Big ), 	\label{Eq10b}\\
	&\Omega_g=  - \frac{1}{\alpha \beta}L_g -  \frac{N(N-1)  }{8 
	\beta\sinh^2(\alpha\beta\mu/2)} \Big ( e^{\alpha \beta  
				\mu}\ln \big (1 + e^{-\beta \mu} \big ) -2\ln 2 +e^{-\alpha \beta\mu} \ln 
		\big (1 + e^{\beta \mu} \big ) + e^{\alpha \beta\mu} \Phi \big(-e^{\beta \mu},1,\alpha\big )\nonumber \\
		&-2 \Phi (-1,1,\alpha  ) + e^{-\alpha \beta  \mu}\Phi \big (-e^{-\beta \mu},1,\alpha \big)\Big) .
				\label{Eq10c}	
	\end{align}
	\item Type B 
\begin{align}\label{Eq11a}	
&\bar k_d (\varepsilon)=  \frac{\alpha (N -1)}{2(1 +\alpha)\sinh(\alpha\beta\mu/2)} \Big ( e^{\alpha\beta \mu/2 } {}_{2}F_{1} \big(1, 1+\alpha ;  2+\alpha  ;-e^{ \beta (\mu -\varepsilon )} \big)  -   e^{-\alpha\beta \mu /2}{}_{2}F_{1} \big(1, 1+\alpha ;  2 +\alpha  ;- e^{ -\beta \varepsilon } \big) \Big ),  \\
	&L_d = \frac{N(N-1) \alpha^2}{8(1+\alpha)^2\sinh^2(\alpha\beta\mu/2)} \Big ( e^{-(1+\alpha)  \beta \mu}{}_{3}F_{2}\big (1, 1+\alpha, 1+\alpha ; 2+\alpha , 2+\alpha  ;- e^{-\beta \mu} \big ) \nonumber \\
	& - 2 {}_{3}F_{2}(1,1+ \alpha,1+ \alpha ; 2+\alpha , 2+\alpha  ;- 1) 
	+ e^{(1+\alpha) \beta \mu}  {}_{3}F_{2} \big (1, 1+\alpha, 1+\alpha ; 2+\alpha , 2+\alpha  ;- e^{ \beta \mu}\big )\Big ),	\label{Eq11b} \\
		&\Omega_d=   \frac{1}{a \beta}L_d -  \frac{N(N-1) }{8 
		\beta\sinh^2(\alpha\beta\mu/2)} \Big ( e^{-\alpha\beta  
					\mu}\ln \big (1 + e^{-\beta \mu} \big ) -2\ln 2 +e^{\alpha \beta\mu} \ln \big (1 + e^{\beta \mu} \big ) + e^{-\alpha\beta\mu} \Phi \big(-e^{\beta \mu},1,-\alpha \big )\nonumber \\
			&-2 \Phi(-1,1,-\alpha)  + e^{a \beta  
					\mu}\Phi \big(-e^{-\beta \mu},1,-\alpha \big) \Big) .
					\label{Eq11c}	
		\end{align}
	\end{itemize}
\end{widetext}

\begin{figure}[tbh]
            \includegraphics[width=0.9\linewidth]{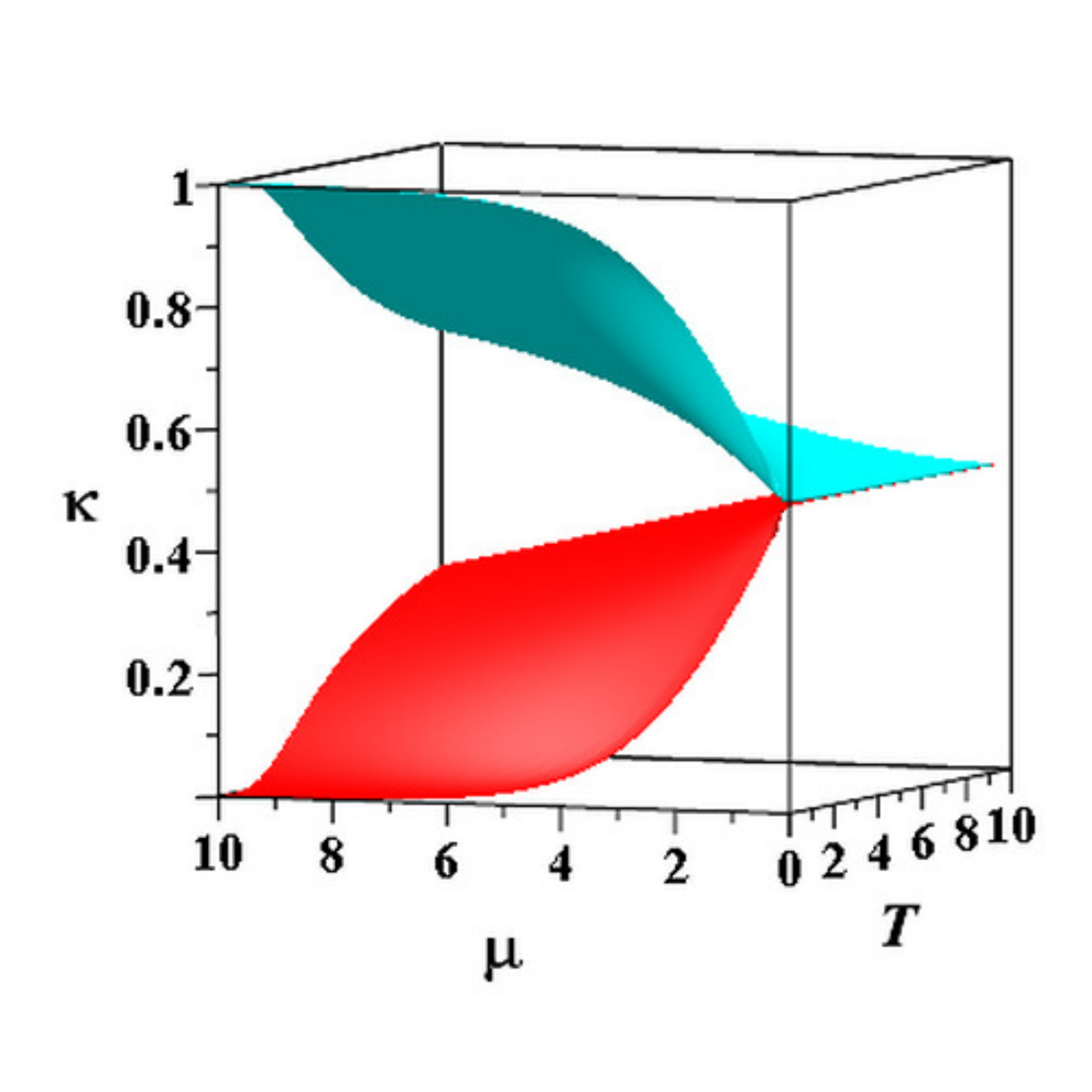} 
            	\caption{The average node degree per node, $ \kappa =\langle k \rangle/N$, as a function of chemical potential and temperature ($\gamma = 2.1$). Upper (cyan) surface depicts $ \kappa$ for the Type B graph. Lower (red) surface presents the results for the Type A graph.}
            	\label{fig3}
            \end{figure}     
            \begin{figure}[htb]
\includegraphics[width=0.9\linewidth]{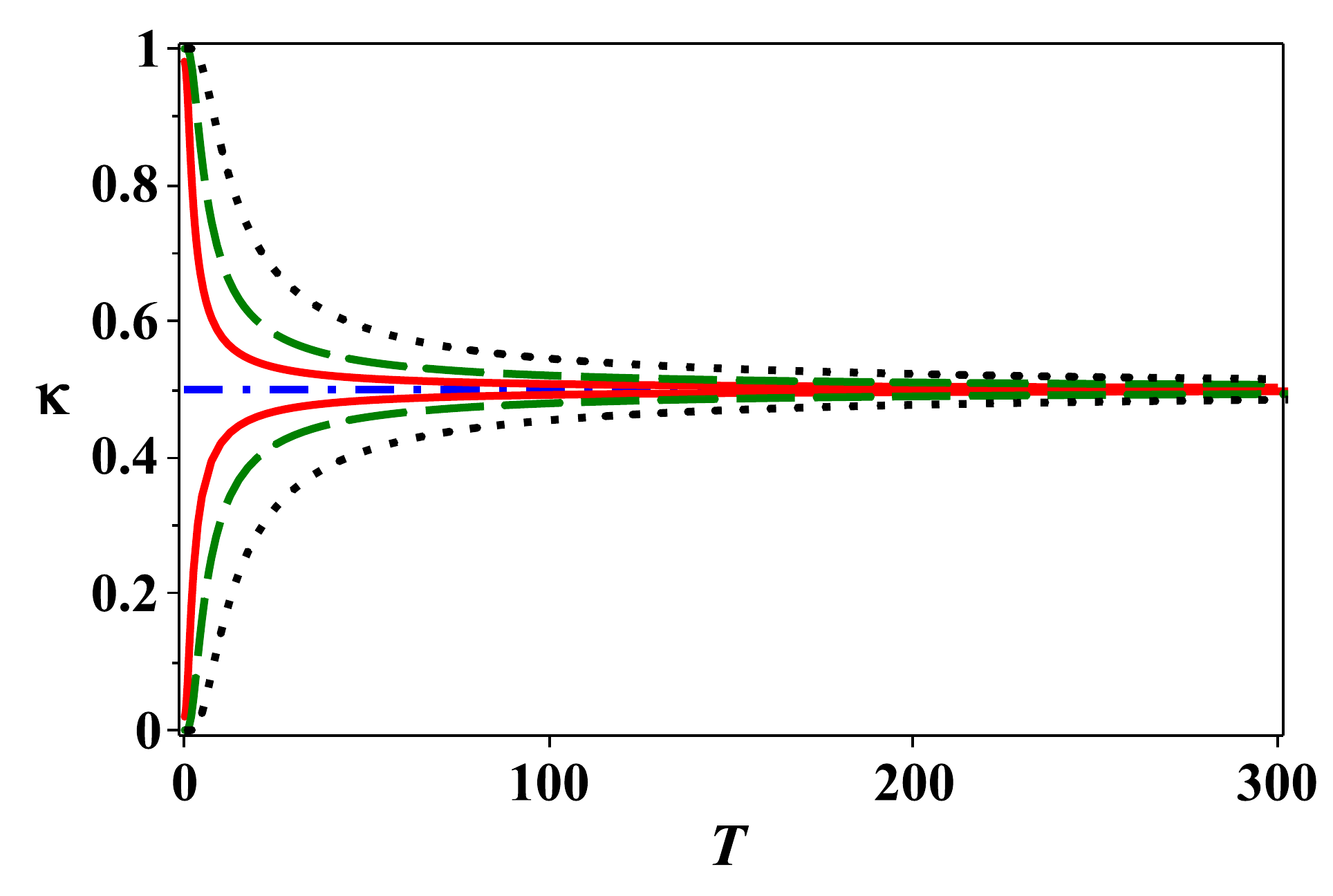} 
	\caption{ The average node degree per node, $ \kappa =\langle k \rangle/N$, as 
            	a function of temperature.  Upper curves: Type B graph. Lower curves: Type A graph. Black dotted line: $\mu =20$, green dashed line: $\mu =10$, red solid line: $\mu =5$. Blue dash-dotted line presents the asymptotic value of the clustering coefficient. }
	\label{fig1q}
\end{figure}

\begin{figure}[tbh]
\includegraphics[width=0.4925\linewidth]{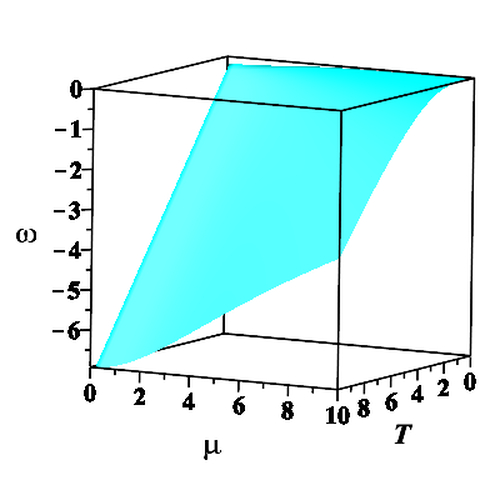} 
\includegraphics[width=0.4925\linewidth]{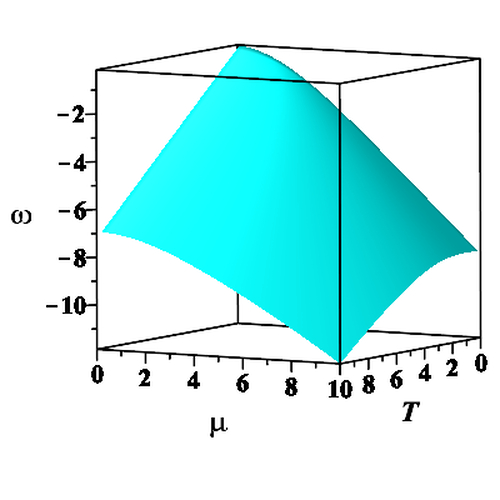} 
\caption{ Landau free energy per link as 
	a function of $T$ and $\mu$. Left: Type A graph. Right: Type B graph. }
	\label{fig1p}
\end{figure}
  \begin{figure}[tbh]
  \includegraphics[width=0.495\linewidth]{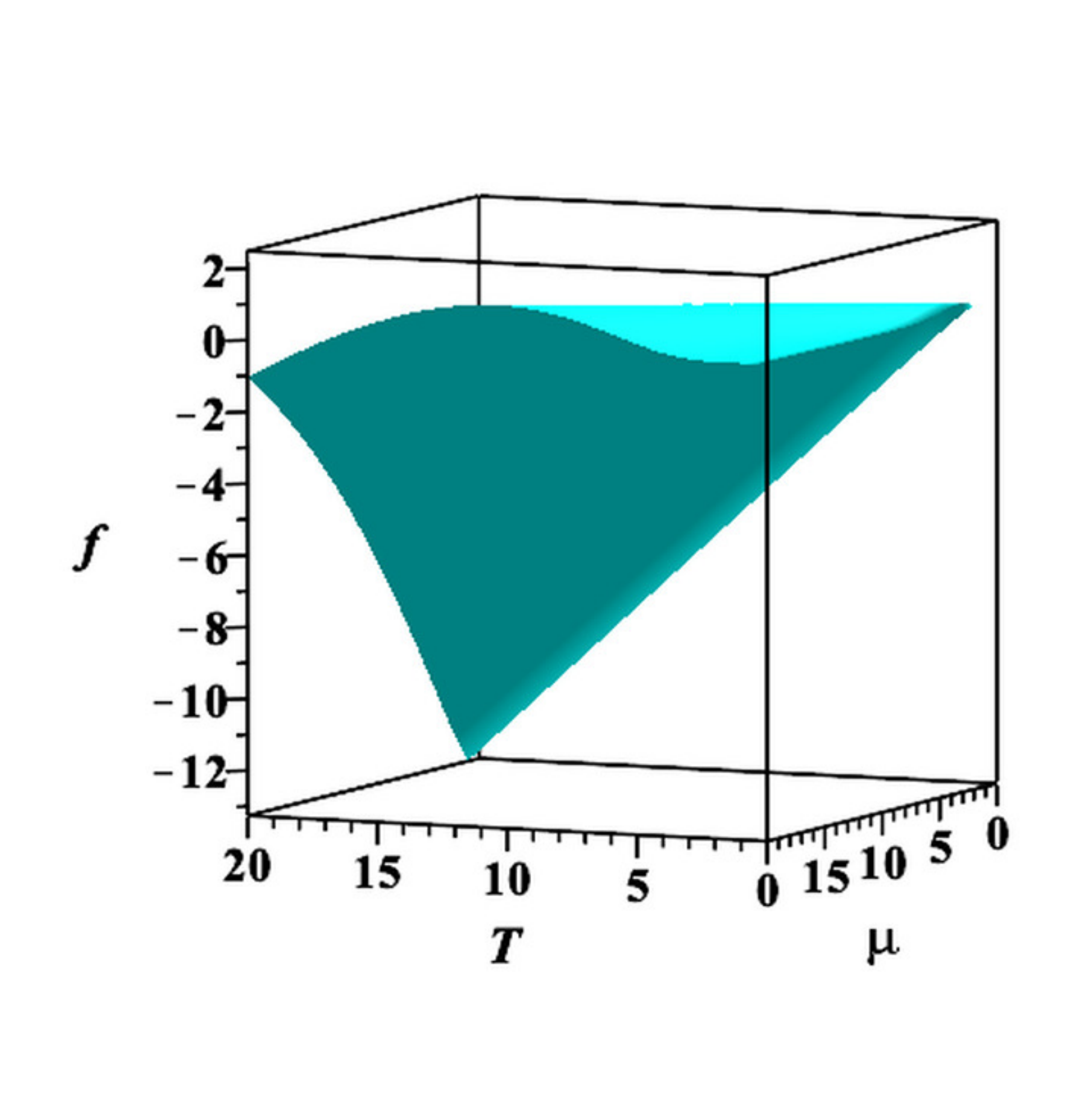} 
\includegraphics[width=0.49\linewidth]{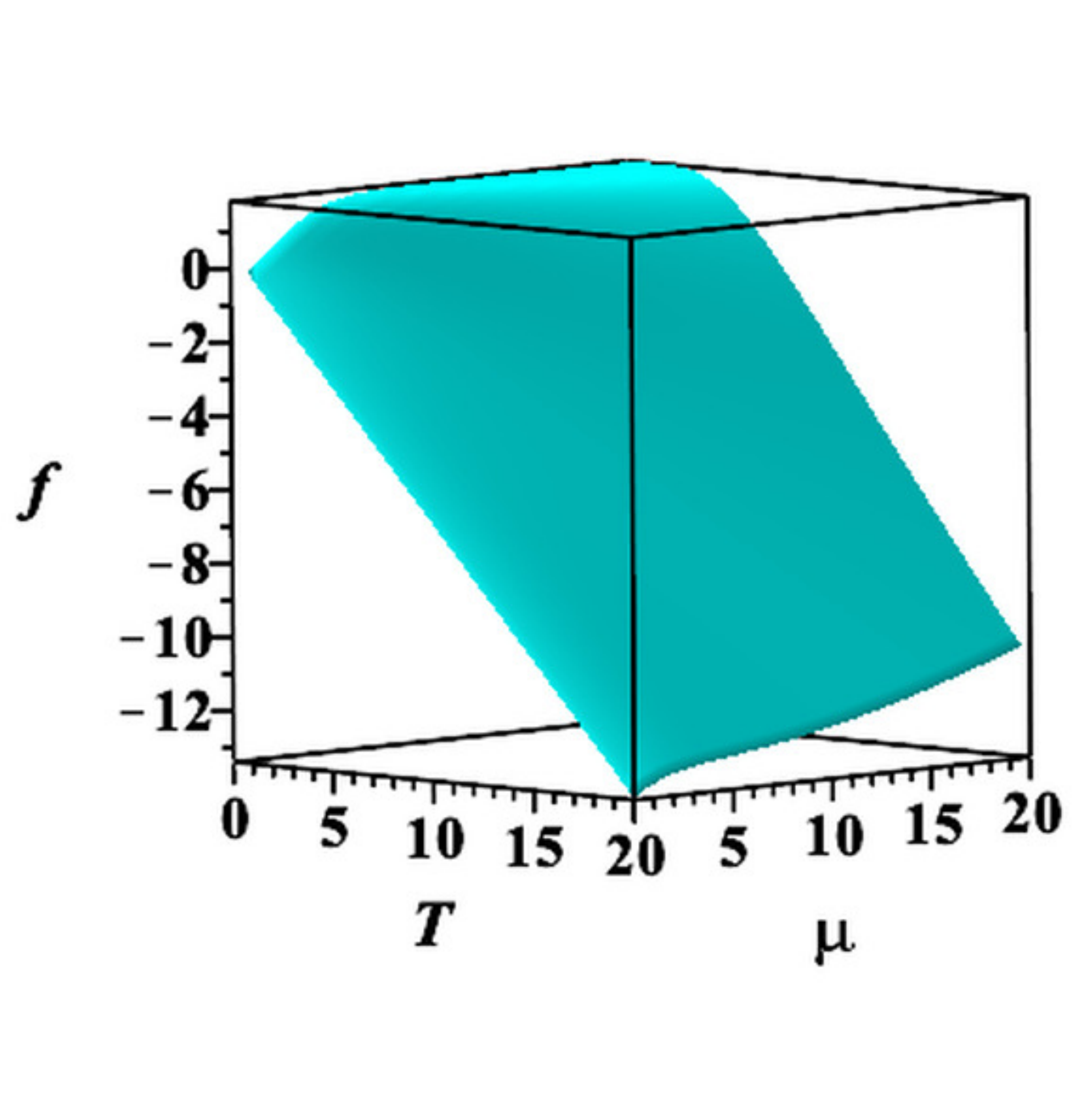} 
\caption{ Helmholtz free energy per link as 
a function of $T$ and $\mu$. Left: Type A graph. Right: Type B graph.}
	\label{fig1r}
\end{figure}

Numerical simulations presented in Figs. \ref{fig3}, \ref{fig1q}  validate our analytical predictions for the behavior of average node degree as a function of temperature for Type A and Type B graphs. As one can see, $\kappa \rightarrow 1/2$ for $T \gg \mu$. The behavior of the average node degree near zero temperature coincides for both graphs in the limit of $\mu \rightarrow 0$ yielding $\kappa \rightarrow 1/2$ (Fig. \ref{fig3}).  {Note, a {\em true sparse (dense) graph}, that implies $\langle k \rangle =0$ ($\langle k \rangle =1$) at $T =0$, only exists in the limit of $\mu \rightarrow \infty$ when $T \rightarrow 0$.}
  \begin{figure}[tbh]
\includegraphics[width=0.9\linewidth]{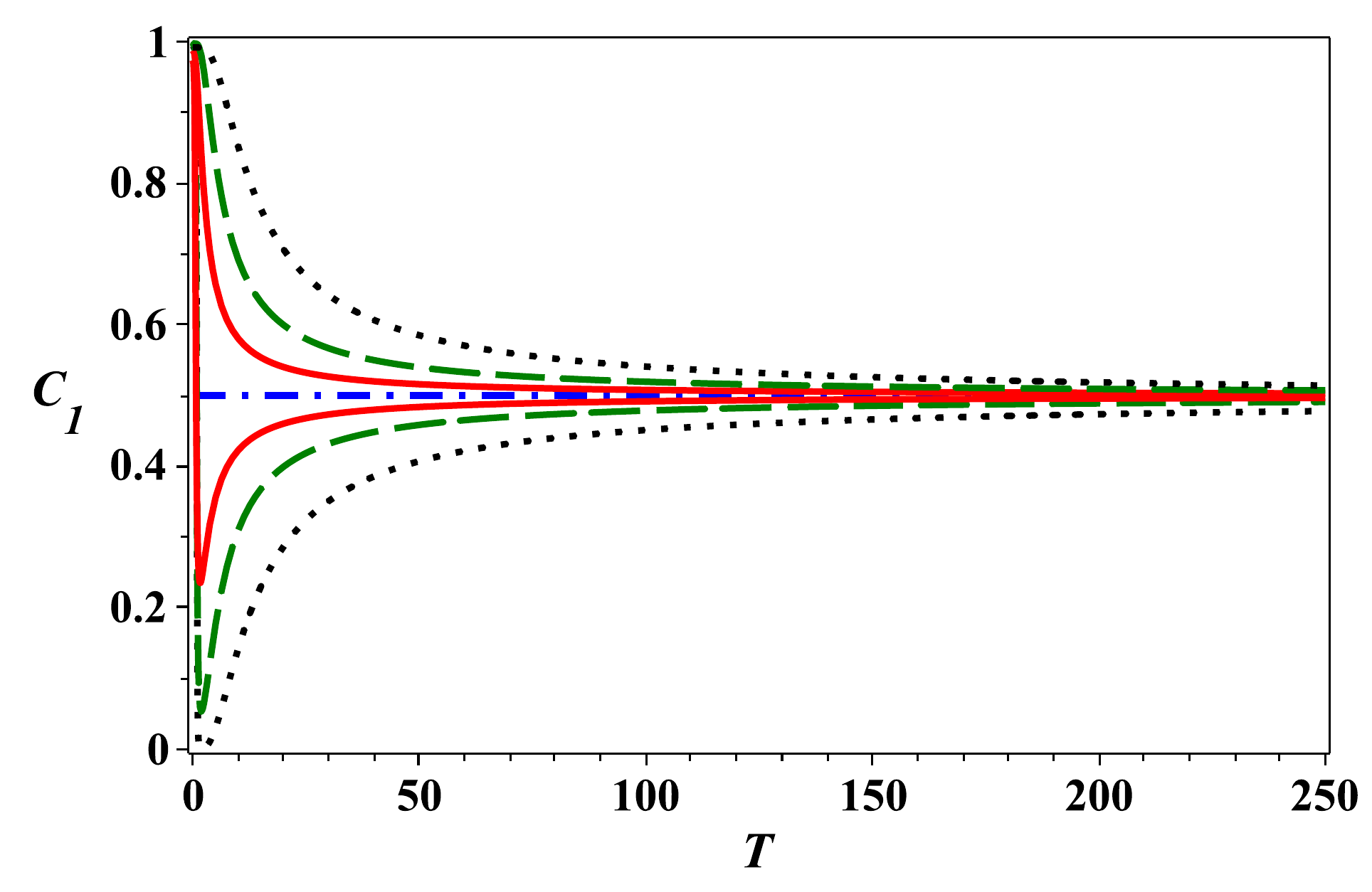} 
	\caption{ The average clustering coefficient, $C_1$, in the whole network as 
	a function of $T$. Upper curves: Type B graph. Lower curves: Type A graph. Black dotted line: $\mu =20$, green dashed line: $\mu =10$, red solid line: $\mu =5$. Blue dash-dotted line presents the asymptotic value of the clustering coefficient. }
	\label{fig3d}
\end{figure}
 \begin{figure}[tbh]
\includegraphics[width=0.9\linewidth]{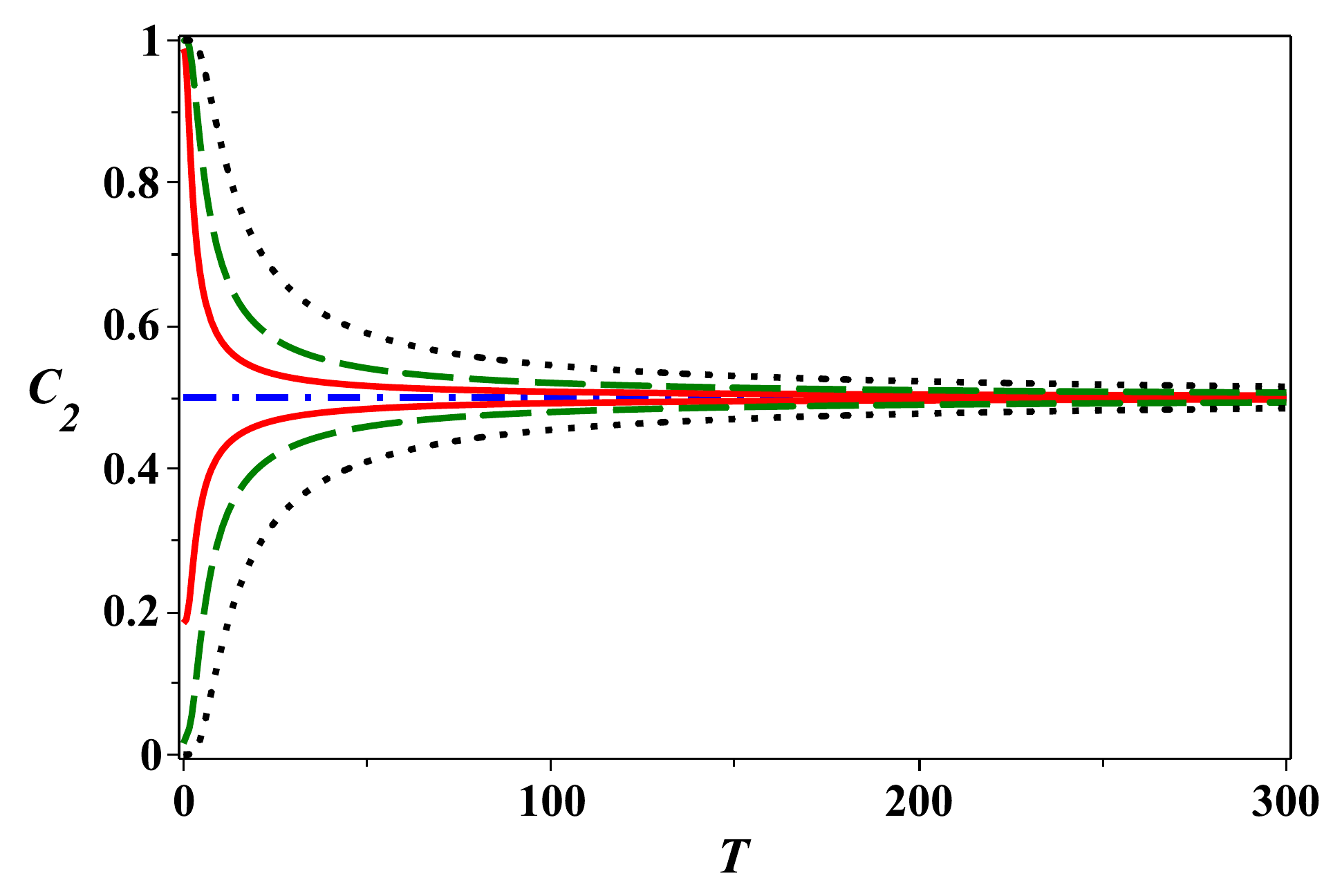} 
	\caption{ The average clustering coefficient, $C_2$, in the whole network as 
	a function of $T$. Upper curves: Type B graph. Lower curves: Type A graph. Black dotted line: $\mu =20$, green dashed line: $\mu =10$, red solid line: $\mu =5$. Blue dash-dotted line presents the asymptotic value of the clustering coefficient. }
	\label{fig3e}
\end{figure}

In Figs. \ref{fig1p}, \ref{fig1r}, the Landau and Helmholtz free energies are depicted as functions of temperature and chemical potential. For low temperatures and high values of the chemical potential, one can observe a slightly pronounced minimum in the behavior of the Helmholtz free energy and a flat Landau free energy for the Type A graph (Fig. \ref{fig1r}).

In Figs. \ref{fig3d}, \ref{fig3e} the average clustering coefficients in the whole network, $C_{1,2}$ are presented for different magnitudes of the chemical potential. For high temperatures, both clustering coefficients behave according to the theoretical 
predictions, $C_{1,2} \simeq1/2$. However, for low temperatures, the results are quite 
different. For Type B graphs, both definitions lead to the clustering coefficient's correct behavior in the limit of the low $T$-regime: $C \rightarrow 1$ when $T \rightarrow 0$. 

 {The low density of the graph characterizes type A in the limit of low temperatures and $\mu \gg 1$, i.e., $\langle k \rangle \ll 1$ as  $T\rightarrow 0$. Therefore, one expects the clustering coefficient to behave in the same way, $ C\ll 1$ as  $T\rightarrow 0$. While the first definition yields the wrong result, $ C\rightarrow 1$ when $T\rightarrow 0$, the behavior of $C_{2}$ is in agreement with the behavior of the average node degree (Fig. \ref{fig1q}).

 Our numerical simulations are confirmed by Appendix A's analytical results, where the comparison between the two measures is made. We show that in a low-temperature regime, the clustering coefficients behave as
\begin{align}
	 C_1&=	1-  e^{-\alpha\beta \mu/2}  +{\mathcal O}(\alpha ^2), \\
	 C_2 &\propto  e^{- \alpha\beta \mu/2} + {\mathcal O}(\alpha^2) .
\end{align}
 Based on these findings, we will use the definition of $C_{2}$ for the global clustering coefficient.}

\subsection{Phase transitions in asymptotically sparse networks}

 {We are interested in the asymptotically sparse network model with large number of nodes, $N \gg 1 $, and vanishing average node degree in the range of temperature $0 \leq T \leq T_c$. Since $N \gg 1 $, we will replace $N-1$ by $N$ in all formulas having this factor. We consider a particular model with the chemical potential defined as $\mu = T_c\ln(\nu N/\langle k \rangle) $, where $\nu$ is a temperature-independent parameter \cite{KDPF1,KDPF2,ANHM}. As follows from our previous analysis, the chemical potential should be infinite at $T=T_c$.

To determine $\nu $, we use the relation $L_g = \langle 
k  \rangle N/2$. After substitution of $\langle k \rangle=N\nu e^{- \beta_c\mu}$ in Eq. \eqref{Eq10b}, we obtain
\begin{widetext}
\begin{align}
	\nu= & \frac{e^{ \beta_c\mu}}{4\sinh^2(\alpha\beta\mu/2)} \Big ( e^{\alpha \beta \mu} {}_{3}F_{2} \big (1, \alpha, \alpha ; 1+\alpha , 1+\alpha  ;- e^{ \beta \mu } \big )  + 2 \alpha^2 \beta'(\alpha)	+ e^{ -\alpha \beta \mu} {}_{3}F_{2} \big (1, \alpha, \alpha ; 1+\alpha , 1+\alpha  ;- e^{ -\beta \mu} \big )\Big ), 
		\label{Eq10g}
\end{align}
	\end{widetext}
where we use the relation \cite{PBM3}
\begin{align}
{}_{3}F_{2}(1, \alpha, \alpha ; 1 +\alpha, 1+ \alpha ;-1) = - \alpha^2 \beta'(\alpha),
\end{align}
to replace the second term in Eq. \eqref{Eq10b}. Here
\begin{align}
	\beta(z)=\frac{1}{2}\left[\psi\left(\frac{z+1}{2}\right)- \psi\left(\frac{z}{2}\right)\right],
\end{align}
and $\psi (z)$ denotes the digamma function \cite{NIST}. 

We  make use of the asymptotic properties of the generalized hypergeometric functions \cite{AEWM,NIST,abr} to get
\begin{align}
	\nu = \bigg (\frac{\gamma -1}{\gamma -2}\bigg)^2 e^{(\beta_c - \beta)\mu}+ {\mathcal O} (e^{-(\gamma -2 )\beta_c \mu} ).
\end{align}
As seen, the  asymptotic series converges when $\gamma > 2$. Still supposing $\gamma  > 2$ and, in addition, assuming that $ \mu(T) \rightarrow \infty$ when $T \rightarrow T_c$, we obtain
\begin{align}
	\nu = \bigg (\frac{\gamma -1}{\gamma -2}\bigg)^2. 
\end{align}

For high temperatures, $T \gg T_c$, similar consideration yields
\begin{align}
	\nu e^{-\beta_c \mu} = - \alpha^2 \beta'(\alpha) +  {\mathcal O} (1 -e^{-\beta \mu} ).
\end{align}
Substituting  $\alpha = \beta_c (\gamma -1)/\beta$ and taking the limit of $T \rightarrow \infty$, we get
  \begin{align}
  \mu \rightarrow \mu_0 = \ln  \Big (\frac{2(\gamma -1)^2}{(\gamma -2)^2} \bigg )\quad {\rm and } 
  \quad \frac{\langle k \rangle}{N}  \rightarrow \frac{1}{2}.
  \end{align}
}
Now that we have obtained the constant $\nu$, we can find the dependence of the chemical potential on temperature for Type A and Type B networks employing Eqs. \eqref{Eq10b}, \eqref{Eq11b}. We obtain
\begin{align}
	\nu e^{- \beta_c\mu} = \frac{2L(N,T,\mu)}{N^2},
\end{align}
where $L(N,T,\mu)$ is the expected number of links. (Hereafter we omit the subindices $g/d$ in all calculations.)

Since an analytical solution of this equation does not exist, we solve it numerically. In Fig. \ref{fig2c} the chemical potential for Type A (blue curve) and Type B (red curve) networks is depicted. For the Type A graph we have $\mu(T) \rightarrow \infty$ as $T\rightarrow T_c +$. The Type B network's chemical potential is a continuous and bounded function of temperature and for both networks $\mu(T) \rightarrow \mu_0$ when $T\rightarrow \infty$.
\begin{figure}[tbh]
     \includegraphics[width=0.9\linewidth]{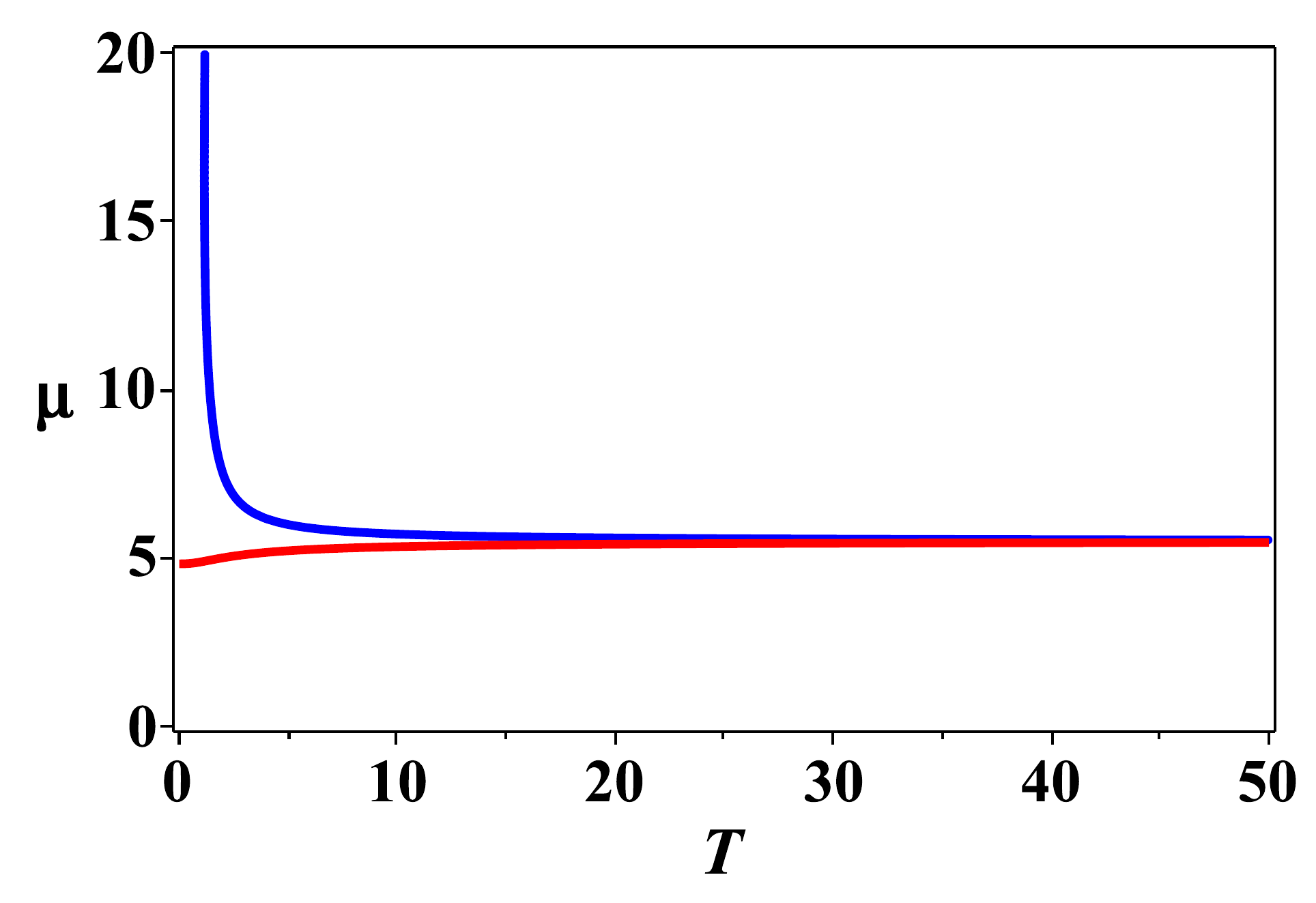} 
     	\caption{ Graph of the chemical potential $\mu$ as a function of temperature ($T_c =1$). Upper (blue) line depicts the behavior of the chemical potential for Type A graph.  Lower (red) line presents the Type B graph.}
     	\label{fig2c}
     \end{figure}
    
     \begin{figure}[tbh]
       \includegraphics[width=0.9\linewidth]{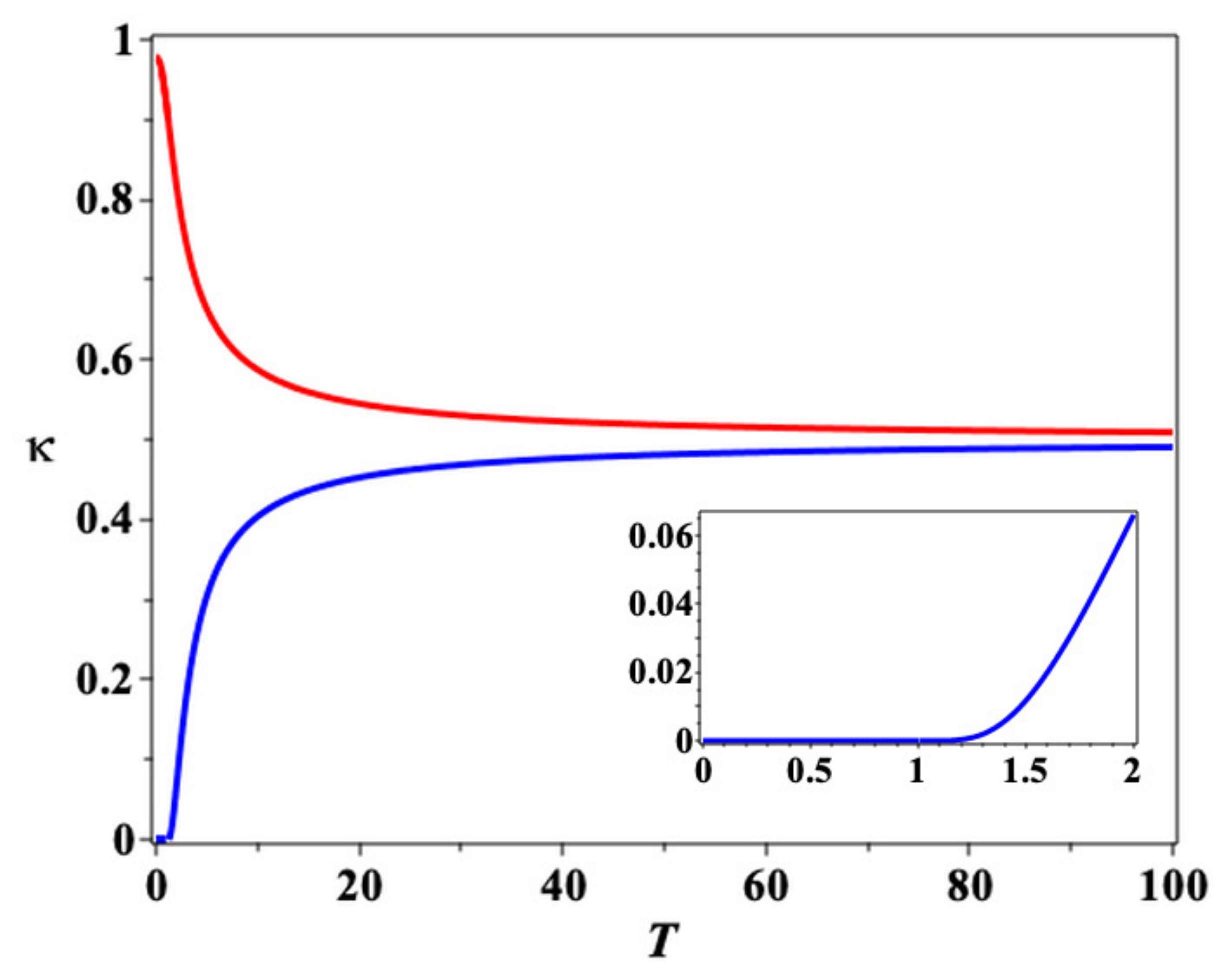} 
       	\caption{The average node degree per node, $ \kappa =\langle k \rangle/N$, as a function of temperature. Upper (red) line: Type B graph.  Lower (blue) curve: Type A graph. Inset: Zoom of the main plot for the Type A graph.}
       	\label{fig1a}
       \end{figure}
    
      \begin{figure}[tbh]
      \includegraphics[width=0.9\linewidth]{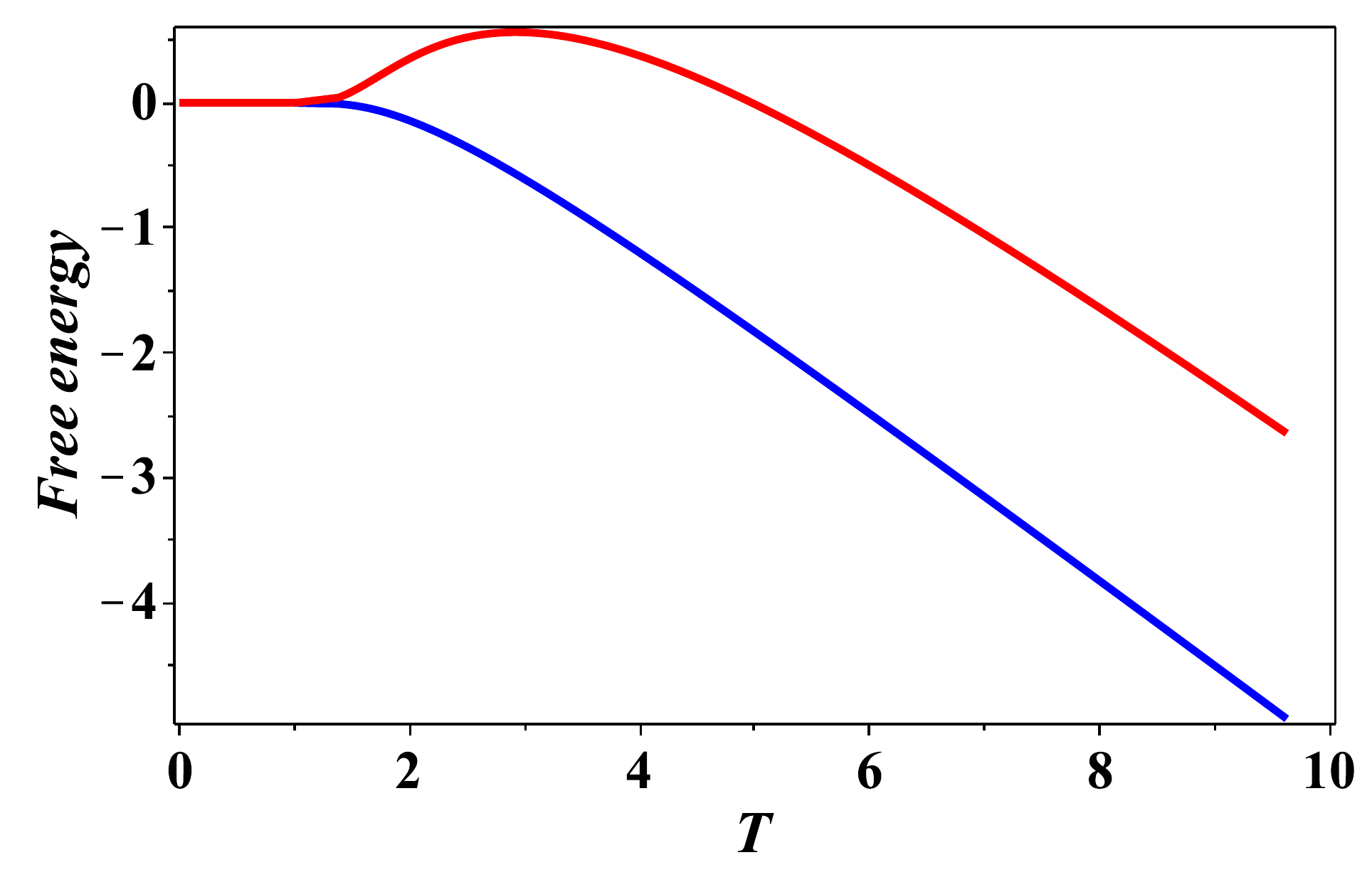} 
      	\caption{Type A graph. Free energy per link as a function of temperature. Upper (red) line depicts the Helmholtz free energy. Lower (blue) line presents the Landau free energy. }
      	\label{fig5}
      \end{figure}  
      
In Fig. \ref{fig1a} the average node degree per node, $ \kappa =\langle k \rangle/N$, is presented. For both graphs, $\kappa \rightarrow 1/2$ when $T \rightarrow \infty$, as expected. However, the average node degree's behavior for Type A and Type B graphs is highly different for low temperatures.  While $\kappa$ is a smooth function of temperature for the Type B graph, for the Type A graph, this is not true. At the point $T=T_c$, the system experiences a phase transition. Below the critical temperature, the graph is completely disconnected, $ \kappa =0$. In Fig. \ref{fig5} the Landau and Helmholtz free energies of Type A graphs are depicted. Both energies are continuous functions of temperature; however, they lost their analytical properties at the critical point . 

Near the critical temperature the chemical potential behaves as 
\begin{align}
	\mu \sim -\lambda\ln \tau, 
	\label{EqCP}
	\end{align}
where $\tau=(T-T_c)/T_c$ is the reduced temperature. The constant, $\lambda$, is calculated by  performing the numerical simulations (see SM for details). We obtain
\begin{align}
	\lambda = \left \{
	\begin{array} {ll}
		1/(\gamma - 2), &  2 < \gamma   <3 \\
		1, &  \gamma   \geq 3
	\end{array}
	\right .
	\label{EqL}
\end{align} 

To describe the phase transition, we introduce the order parameter, $\eta = 2 \langle k \rangle /N $, which ranges between zero and one.  We find that near the critical temperature the thermodynamic potentials behave as (see SM for details)
\begin{align}		
	\Omega & \approx -\frac{N^2}{4\nu\beta_c} \Big ( A - \tau \ln\Big ( \frac{\eta}{2\nu}\Big ) \Big ) \eta\label{LFEa}, \\
		F & \approx -\frac{N^2}{4\nu\beta_c} \Big ( A + (1-\tau) \ln\Big ( \frac{\eta}{2\nu}\Big ) \Big )\eta ,
	\label{HFEb} \\
	E & \approx -\frac{N^2}{4\nu\beta_c} \Big ( A + (2-\tau) \ln\Big ( \frac{\eta}{2\nu}\Big ) \Big )\eta ,
	\label{HFEc} \\
	S& \approx -\frac{N^2}{2\nu} \ln\Big ( \frac{\eta}{2\nu}\Big ) \eta  ,
	\label{EqS} \\
	C_N & \approx -\frac{N^2}{2\nu\beta_c}\Big (1 + \ln\Big ( \frac{\eta}{2\nu}\Big ) \Big )\frac{d\eta}{dT} ,
	\label{EqC}
	\end{align}
where
\begin{align}
	A= \frac{\gamma^2 -3\gamma +3}{(\gamma -2)^2}.
\end{align}

Next, substituting  $\eta = 2 \langle k \rangle /N $ in Eqs. \eqref{LFEa} -- \eqref{EqC}, we obtain the dependence of the thermodynamic potentials on the average node degree in the whole network 
\begin{align}		
	\Omega & \approx -\frac{N \langle k \rangle }{2\nu\beta_c} \Big ( A - \tau \ln\Big ( \frac{\langle k \rangle }{\nu N}\Big )  \Big ) \label{LFr}, \\
		F & \approx -\frac{N \langle k \rangle }{2\nu\beta_c}  \Big ( A + (1-\tau) \ln\Big ( \frac{\langle k \rangle }{\nu N}\Big )\Big ) ,
	\label{HFb} \\
	E & \approx -\frac{N \langle k \rangle }{2\nu\beta_c} \Big ( A + (2-\tau)\ln\Big ( \frac{\langle k \rangle }{\nu N}\Big ) \Big ) ,
	\label{HFd} \\
	S& \approx -\frac{N \langle k \rangle }{2\nu\beta_c} \ln\Big ( \frac{\langle k \rangle }{\nu N}\Big )  ,
	\label{EqSa} \\
	C_N & \approx -\frac{N^2}{2\nu\beta_c}\Big (1 + \ln\Big ( \frac{\langle k \rangle }{\nu N}\Big ) \Big )\frac{d \langle k \rangle }{dT} ,
	\label{EqCd}
	\end{align}
Finally, employing Eq. \eqref{EqCP} and keeping the leading terms in Eqs.\eqref{LFr} -- \eqref{EqCd}, one can obtain the dependence of thermodynamic functions on the reduced temperature. In Table I, we summarize our results.

\begin{center}
{Table I. Critical exponents for $0<\tau <1$} \\

\begin{tabular}{ll}
  \hline \hline
Thermodynamic functions\quad \quad \quad &  Relation \quad\\
\hline
Chemical potential & $\mu \propto- \lambda \ln \tau $   \\
Order parameter & $\eta \propto \tau^\lambda $    \\
Landau free energy  &$\Omega \propto -\tau^\lambda $ \\
Helmholtz  free energy  &$ F \propto -\tau^\lambda \ln \tau$    \\ 
Internal energy  &$ E \propto -\tau^\lambda \ln \tau$    \\             
Entropy & $ S \propto -\tau^{\lambda }\ln \tau$  \\
Heat capacity & $ C_N \propto -\tau^{\lambda -1}\ln \tau$  \\
  \hline \hline
\end{tabular}
\end{center}

 \subsection{Degree distribution}

We are now ready to analyze the topological properties of the network. First, we are interested in the degree distribution, $P_k$. To proceed, we use generating functions approach presented in Sec. II. The generating function for sparse networks is given by Eq. \eqref{Eq13p}, written as 
\begin{align}
G_0(z)    =\int^{\mu}_0e^{(z-1 ) \bar k (\varepsilon ) }   \rho (\varepsilon) d \varepsilon ,
\label{Eq14}    
\end{align}
where  $ \bar k (\varepsilon ) = N \int_0^\mu p(\varepsilon, \varepsilon' ) \rho(\varepsilon' ) d \varepsilon' $ is the expected degree of the node with the hidden variable $\varepsilon$ (see Eq. \eqref{Eq10a}). 

The computation of the degree distribution yields
\begin{align}
P_k =  \frac{1}{k!}\int^{\mu}_0  e^{-\bar k (\varepsilon ) } 
\big (\bar  k (\varepsilon )\big)^k  \rho (\varepsilon) d \varepsilon.
\label{Eq15}	
\end{align}
\begin{figure}[htb]
\includegraphics[width=0.9\linewidth]{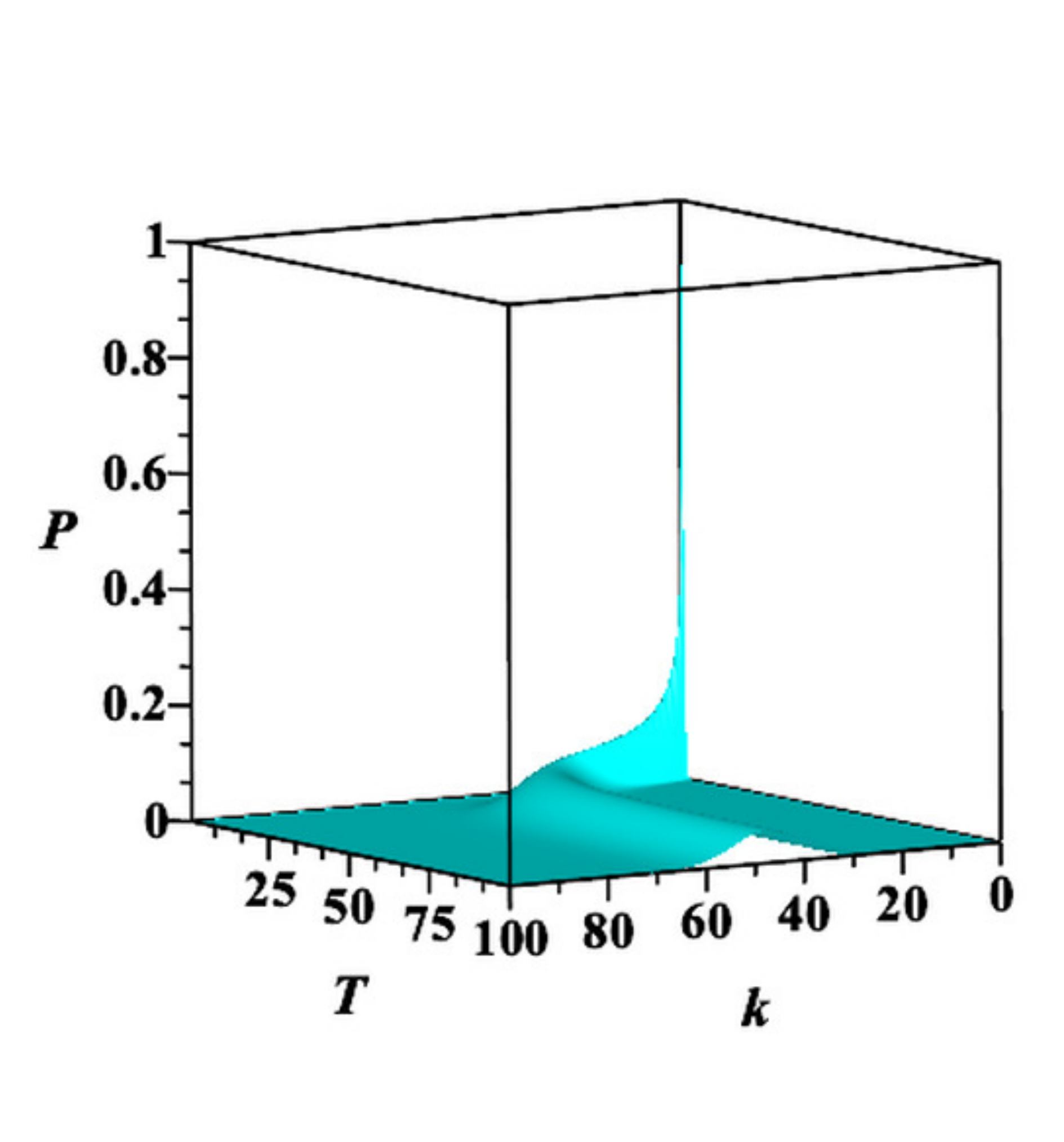} 
	\caption{Degree distribution as a function of $k$ and $T$. Number of nodes is $N=10^2$. }
	\label{fig6b}
\end{figure}

In Fig. \ref{fig6b} the degree distribution, $P_k$, is depicted as a function of $k$ and temperature. With increasing temperature, the dependence of  $P_k$  on $k$ is changed from power-degree, for lower temperatures, to a Poisson-like distribution for high temperatures. To prove this conjecture, below we will consider two limited cases, $T\simeq T_c$ and $T \gg T_c$, and derive the approximate formulas for their degree distribution.

\subsubsection*{Low temperatures}

 {Near the critical temperature, taking into account that $\beta \mu \gg 1$,  we obtain
\begin{align}		\label{Eq16a}	
&\rho (\varepsilon) \approx \alpha \beta e^{\alpha\beta (\varepsilon - \mu )}, \\
&	\bar  k(\varepsilon) \approx  N {}_{2}F_{1} \big (1, \alpha ;  1+\alpha  ;-e^{ \beta \varepsilon } \big ),\\
	&\langle k \rangle \approx N {}_{3}F_{2} \big (1, \alpha, \alpha ; 1+\alpha , 1+\alpha  ;- e^{ \beta \mu } \big ).
	 	 \label{Eq16b}	
\end{align}
As one can see, the main contribution in computation of integrals \eqref{Eq14},  \eqref{Eq15} yield high energies,  $\beta \varepsilon \gg 1 $. Using the asymptotic properties of the hypergeometric functions \cite{abr,NIST},  we obtain
\begin{align}			
&	\bar  k(\varepsilon) \approx N \xi e^{ -\beta \varepsilon } , \, \beta \varepsilon \gg 1 \\
	&\langle k \rangle 	\approx N \xi^2 e^{-\beta \mu},
\end{align}
where $ \xi  = {\alpha}/{(\alpha -1)}$. Note that at the critical point $ \xi(T_c) = \sqrt{\nu}$.
}

Using these results, we rewrite  Eq. \eqref{Eq14} as
\begin{align}
G_0(z)	=\alpha x_0^\alpha\int^1_{x_0}e^{-(1-z)\langle k \rangle x/(\xi  x_0) }   x^{-\alpha -1}dx,
\label{Eq18}	
\end{align}	 
 {where $x = e^{-\beta \varepsilon}$ and $x_0 = e^{-\beta \mu} $.}  Performing the integration, we obtain
\begin{align}
G_0(z)	=\alpha y^\alpha\big(\Gamma (-\alpha, y) - \Gamma (-\alpha, y/ x_0) \big ),
\label{Eq19}	
\end{align}	 
where $y = (1-z) \langle k \rangle /\xi $. To verify our results, we derive $G_0(1)$ and $G'_0(1)$. We find that $G_0(1)=1$ and $G'_0(1) =\langle k \rangle$, as expected.

In order to get the degree distribution, we use expression \eqref{Eq18} for the generating function. Taking the derivatives at the point $z=0$, we find
\begin{align}
	\frac{d^k}{dz^k}G_0(0)= \alpha \bigg(\frac{\langle k \rangle}{\xi} \bigg )^k x_0^{\alpha -k}\int^1_{x_0}e^{-\langle k \rangle x/(\xi  x_0) }   x^{k-\alpha -1}dx.
	\label{Eq19c}	
\end{align}
Performing the integration, we obtain
\begin{align}
P_k =& \alpha \bigg(\frac{\langle k \rangle}{\xi} \bigg )^\alpha\frac{\Gamma (k - \alpha,\langle k \rangle/\xi )}{k!} \nonumber \\
&-\alpha \bigg(\frac{\langle k \rangle}{\xi x_0} \bigg )^\alpha\frac{\Gamma (k - \alpha,\langle k \rangle /(\xi  x_0) }{k!} .
\label{Eq19b}	
\end{align}
where we have used Eq. \eqref{Eq13a} in computing the degree distribution.

Since $\langle k \rangle /(\xi  x_0) \gg 1$, one can neglect the last term and write
\begin{align}
P_k = \alpha \bigg(\frac{\langle k \rangle}{\xi} \bigg )^\alpha\frac{\Gamma (k - \alpha,\langle k \rangle/\xi )}{k!} \sim  k^{- \alpha -1}.
\label{Eq20a}	
\end{align}
Thus, near the critical point the degree distribution scales as $P_k \sim k^{-\tau}$, with $\tau = \alpha +1 =(T/T_c)(\gamma -1) +1 \approx \gamma$. At the critical point we have $\tau = \gamma$.

\subsubsection*{High temperatures} 

 Now let us consider the case of high temperatures,  $T \gg T_c$. Proceeding as above, we obtain
\begin{align}\label{Eq20a}	
\rho (\varepsilon) =&\frac{\alpha \beta e^{\alpha \beta(\varepsilon- 
\mu/2)}}{2\sinh(a \beta\mu/2)}, \\
 \bar k(\varepsilon) \approx &  \frac{N}{2} e^{ -\beta\varepsilon }, \\
 \langle k \rangle \approx  &  \frac{Ne^{ -\beta\mu/2} \sinh\big((\alpha - 1)\beta\mu/2 \big)  }{2 \sinh(\alpha \beta\mu/2)}.
	 	 	 \label{Eq20b}	
\end{align}
 {When $\beta \rightarrow 0$ we have $\langle k \rangle \rightarrow N/2$, as was predicted by the model.}

Now we can repeat the same procedure that we did in the case of of  low temperatures ($T \simeq T_c$).  Straightforward computation of the generating function yields
\begin{align}
G_0(z)	= e^{- (1-z)\langle k \rangle}.
\label{Eq21}	
\end{align}
In the calculation of this expression we have used the asymptotic properties of the incomplete gamma function $\Gamma(a,z)$ for large $a$ \cite{NIST}. This leads to a Poisson distribution for the degree distribution:
\begin{align}
P_k = \frac{e^{- \langle k \rangle}\langle k \rangle^k}{k!} .
\label{Eq22}	
\end{align}
Thus, we find that in the limit of high temperatures the degree distribution does not depend on $\gamma$, and the graph becomes a random graph. This is a universal behavior of scale-free networks. 
\begin{figure}[htb]
\includegraphics[width=0.9\linewidth]{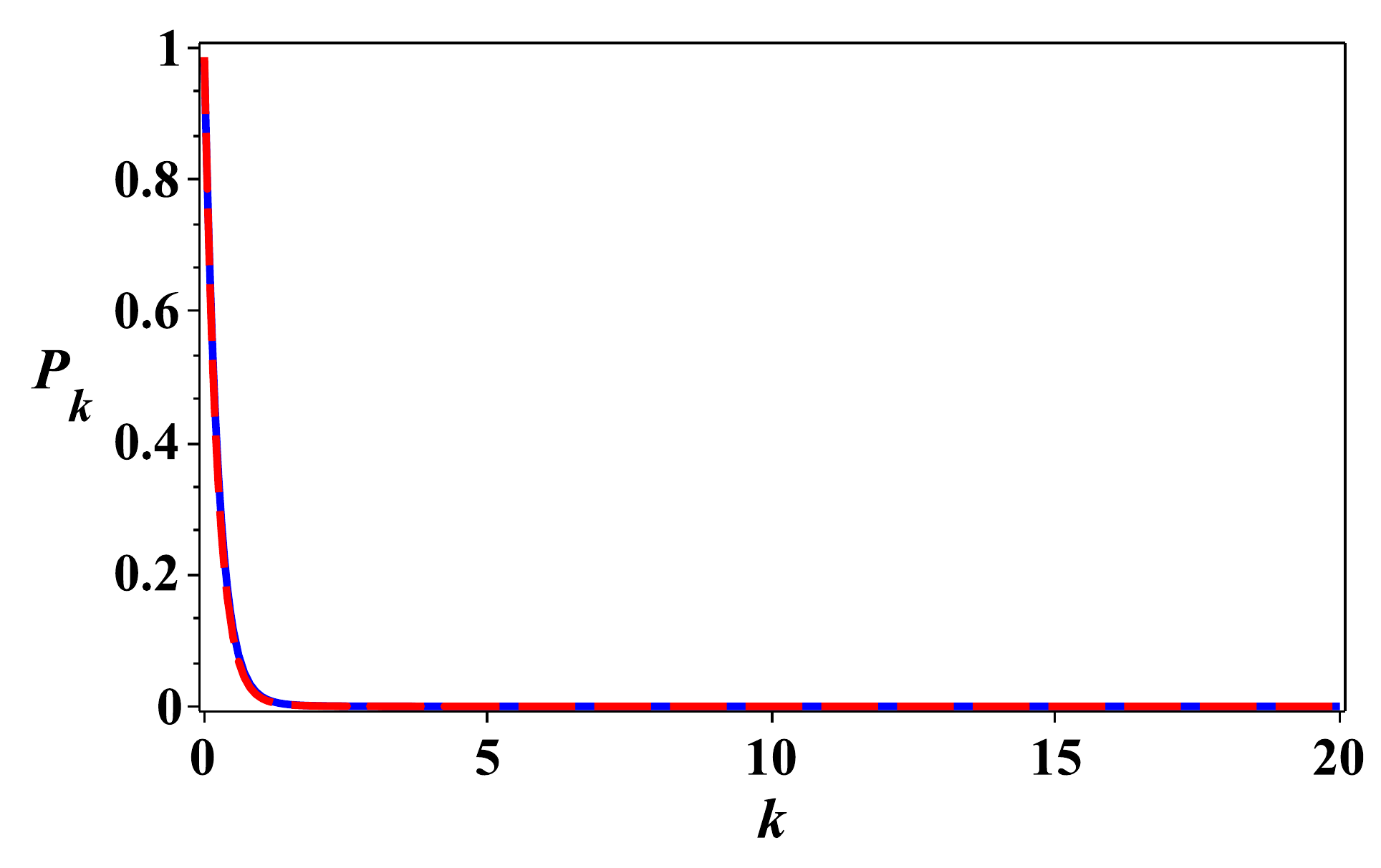} 
(a)\\
\includegraphics[width=0.9\linewidth]{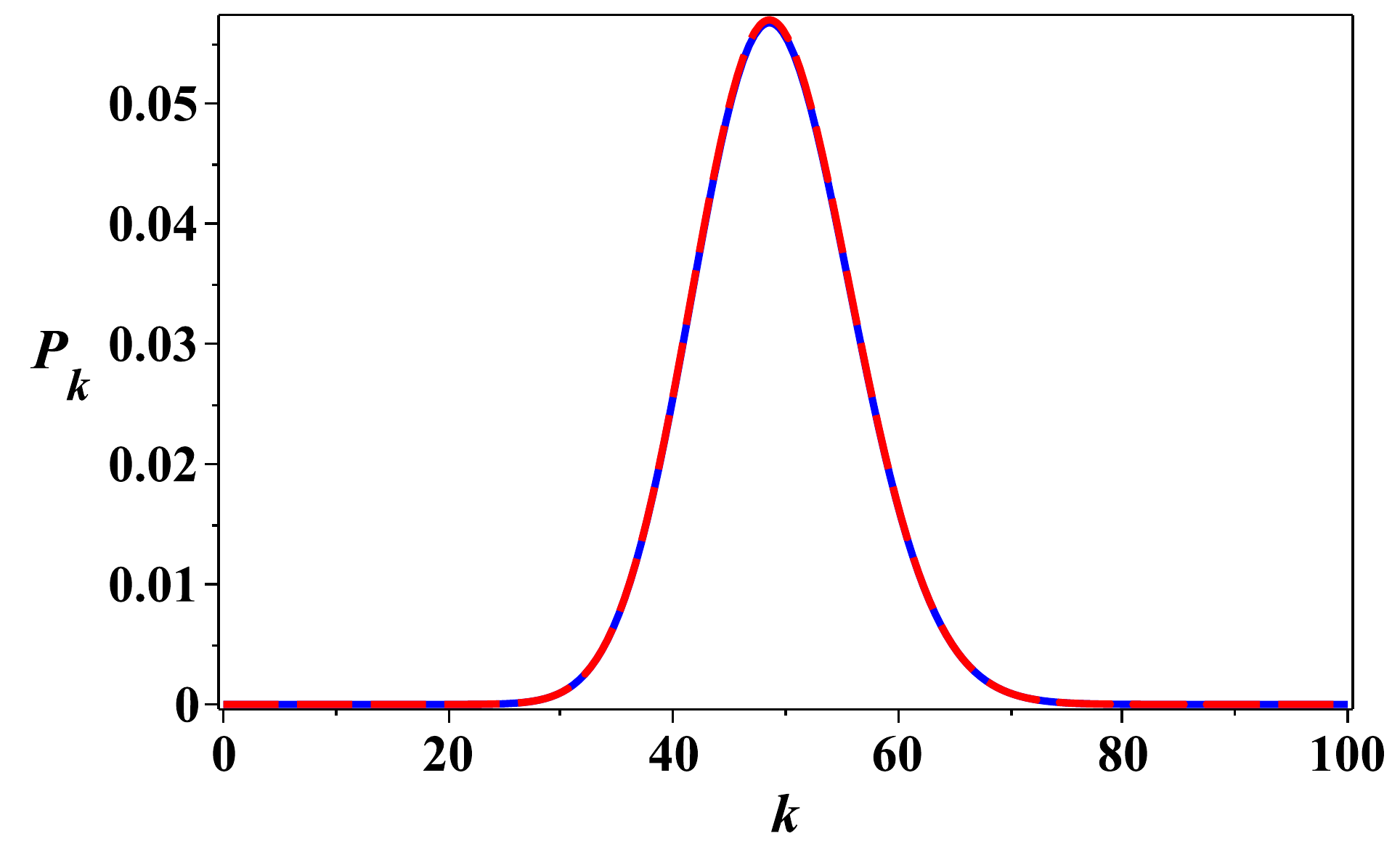} 
(b)
	\caption{Degree distribution $P_k$. Number of nodes is $N=10^2$.  Left: $T=	1.	1$. Right: $T= 100$. Blue curves depict $P_k$ defined by exact formula \eqref{Eq15}. Red dashed curves present the approximate expression \eqref{Eq20a} (a) and the Poisson distribution \eqref{Eq22} (b). }
	\label{fig6a}
\end{figure}

In Fig. \ref{fig6a} we compare the approximate (red dashed curves) and exact (solid blue curves) expressions for the degree distribution. One can see an excellent agreement between both the approximate and exact results for low and high temperatures.    
	
\subsection{Formation of a giant component}

Most real networks exhibit inhomogeneity in their link distribution leading to the natural clustering of the network into groups or communities. Within the same community, node-node connections are dense, but between groups, connections are less dense. A group of nodes forms a {\em component} when all of them are connected, directly or indirectly \cite{NMSW,GMNM}. 

A ``giant component" contains a significant part of the total number of nodes. For instance, if the degree connection $k=N$, the whole network is a giant component. In particular, this is valid for a Type B graph in the limit of $T \rightarrow 0$. In Sec. II, we have shown that, for both dense and Type A graphs, $k \rightarrow N/2$ in the limit of high temperatures. Thus, one can expect that with increasing temperature the dense network should be fragmented in a finite number of giant components, and in a sparse network, giant components should arise. In what follows, we will show that only one giant component arises in our model.

A giant component is formed in the network when the following condition holds \cite{GMNM}:
\begin{align}
	\langle k^2\rangle - 2\langle k\rangle \geq 0.
\label{Eq22}	
\end{align}
Using the relations $ \langle k \rangle = z_1$ and $ \langle k^2 \rangle = z_2  + z_1$, one can recast \eqref{Eq22} as
\begin{align}
z_2 - z_1       \geq 0.
\label{Eq23}
\end{align}

The computation of $z_n$ employing the generating function \eqref{Eq18} yields
\begin{align}
z_n \approx \frac{\gamma -1}{n+1 - \gamma }	\bigg (\frac{\langle k \rangle}{\sqrt{\nu}} \bigg )^n \Big ( e^{(n+1 -\gamma)\beta_c \mu}-1\Big ).
\end{align}
Using this in combination with Eq. \eqref{Eq23} leads to
\begin{align}
z_2 - z_1  \sim  \left \{
	\begin{array} {ll}
		\langle k \rangle, &  2 < \gamma   <3 \\
		\langle k \rangle  - k_0, &  \gamma   \geq 3
	\end{array}
	\right .
	\label{EqZ}
\end{align} 
where $k_0 =(\gamma -3)(\gamma -1) /(\gamma -2)^2$ is the threshold.

The structural phase transition, leading to percolation, results in the giant component emerging 
and occurs at the temperature $T_0$, where $z_2 - z_1=0$. Employing \eqref{EqZ} we find that 
the critical temperature $T_0 =T_c$, if $ 2 < \gamma < 3$, and $T_0 > T_c$ when  $\gamma 
>3$. We see that for networks with $2<\gamma < 3$, the transition associated with 
percolation still exists, though at a vanishing threshold this result was reported before in 
\cite{RAS}.).

Let $S$ be the fraction of the graph occupied by the
giant component. Then the size of the giant component is defined by \cite{NMSW}
\begin{align}
    S= 1-G_0(u),
    \label{Eq23c}
\end{align}
where $u$ is the smallest non-negative real solution of the equation
\begin{align}
    u = G_1(u) \equiv \frac{G'_0(u)}{\langle k\rangle}.    
    \label{Eq25}
\end{align}

Fig. \ref{fig9} shows the giant component's size for a network with $N=10^2$ and $N=10^4$ nodes. As one can see, the size of the giant component is a rapidly increasing function. The saturation occurs for comparatively low temperatures and values of the average node degree.
 \begin{figure}[htb]
\includegraphics[width=0.9\linewidth]{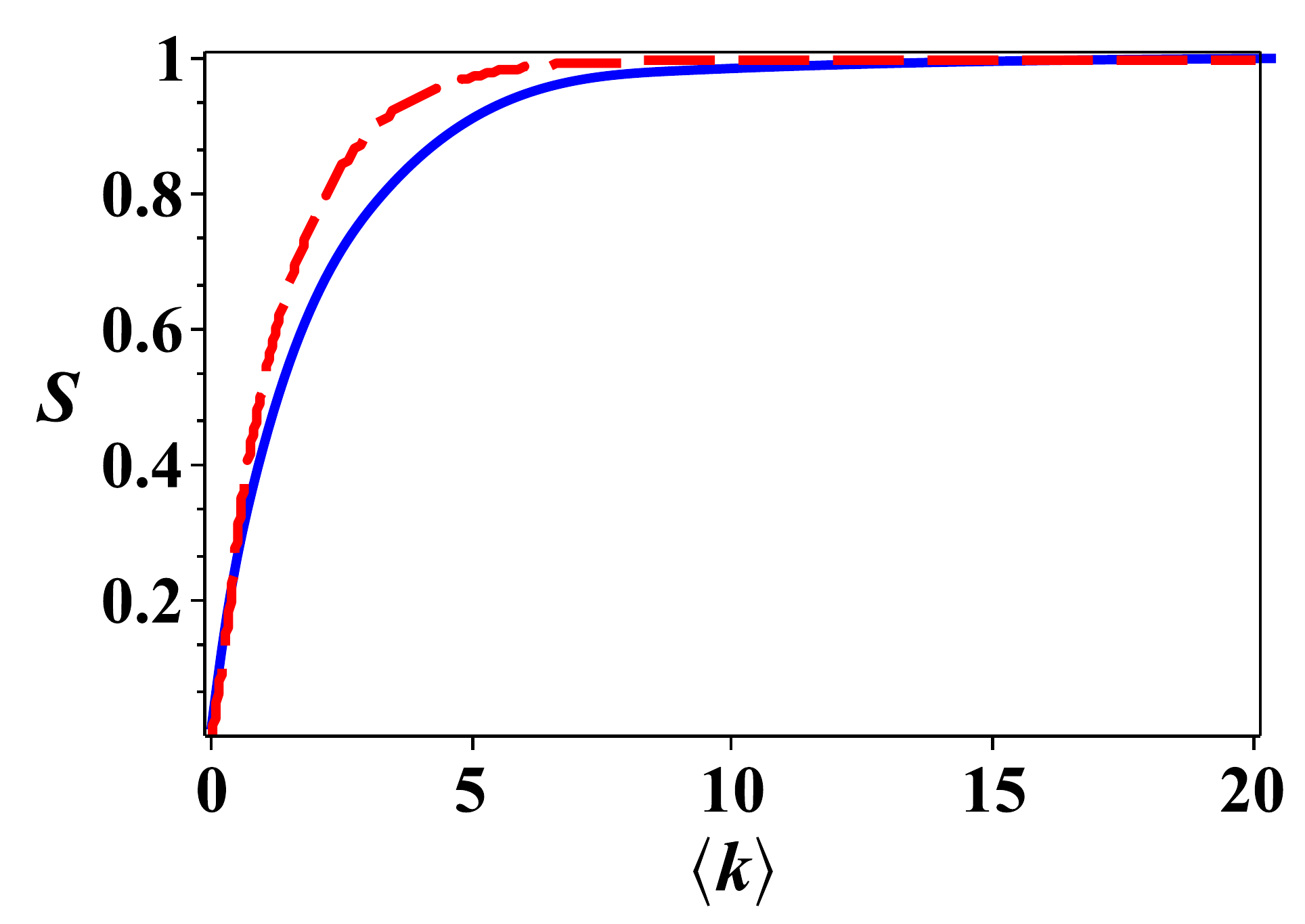} 
(a) \\
\includegraphics[width=0.9\linewidth]{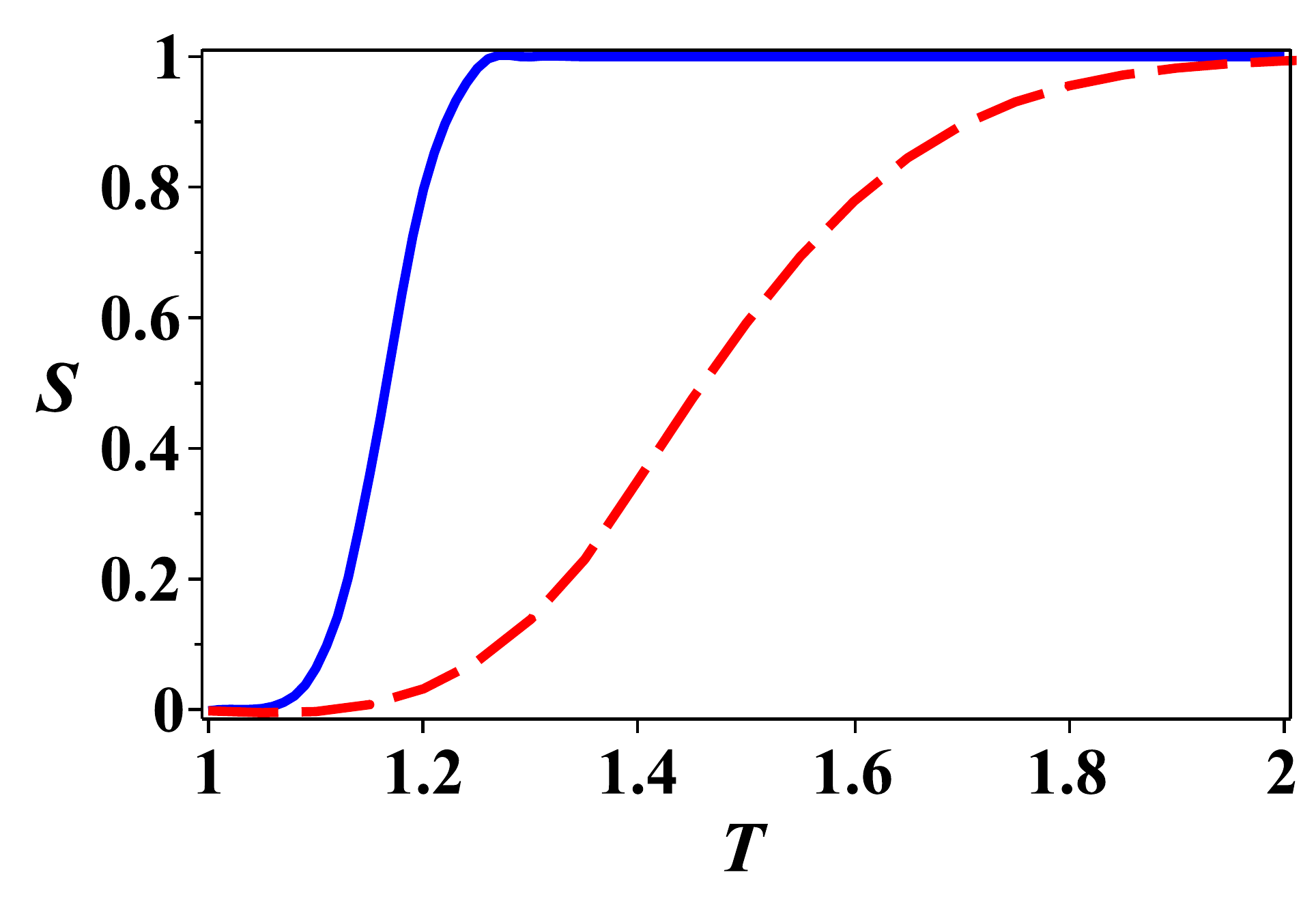} 
(b)
	\caption{ Size of a giant component as a function of average node degree (a), and temperature (b). Blue curves: $N=10^2$. Red dashed lines: $N=10^4$. }
	\label{fig9}
\end{figure}
Below we study two important cases in detail: low $(T \simeq T_c)$ and high temperature limits $(T \gg T_c)$. 

\subsubsection*{Low temperatures}

As has been shown above, for low temperatures the generating function can be approximated as
\begin{align}
G_0(z)	=\alpha y^\alpha\big(\Gamma (-\alpha, y) - \Gamma (-\alpha, y/ x_0) \big ),
\label{Eq19a}	
\end{align}	 
where $y = (1-z) \langle k \rangle /\xi $ and $x_0 =  \langle k \rangle /(N\xi^2) $.	
In the thermodynamic limit, one can neglect the last term and write
\begin{align}
G_0(z)	=\alpha y^\alpha \Gamma (-\alpha, y) .
\label{Eq19c}	
\end{align}	 
 
We now examine {Eq. \eqref {Eq25} in detail. Using the series expansion for the incomplete 
gamma function \cite{NIST},
 \begin{equation}
\Gamma(a, z)=\Gamma(a)-\sum_{k=0}^{\infty} \frac{(-1)^{k} z^{a+k}}{k !(a+k)},
\end{equation}
 we recast Eq.\eqref {Eq25} as
 \begin{align}
\epsilon = -\bigg ( \frac{\epsilon \langle k \rangle}{\xi}\bigg )^{\gamma- 2}\Gamma(3-\gamma)+(\gamma -2) \sum_{k=1}^{\infty} \frac{(-1)^{k} (\epsilon  \langle k \rangle)^{k}}{k !(k+2 - \gamma) \xi^k},
\label{Eq19e}
\end{align}
 where $\epsilon = 1-u$.
 
First we consider the case with $2 < \gamma <3$. Keeping only dominant terms as $\epsilon \rightarrow 0$, we obtain
 \begin{align}
 \epsilon 	\sim {\langle k \rangle}^{1/(3 - \gamma)-1}.
 \label{Eq19g}
 \end{align}
 For networks with $\gamma > 3$ we have a non-vanishing threshold, $k_0$, therefore it is convenient to introduce a new small parameter, $ \langle k \rangle - k_0 \ll 1$. Returning to Eq. \eqref{Eq19e} we find
\begin{align}
\frac{\xi}{k_0} (\langle k \rangle - k_0)=&(\gamma -3) \sum_{k=1}^{\infty} \frac{(-1)^{k} (  k_0)^{k} \epsilon^k}{(k +1)!(k+3 - \gamma) \xi^k}\nonumber \\
& -\bigg ( \frac{ k_0}{\xi}\bigg )^{\gamma- 2}\Gamma(3-\gamma) \epsilon^{\gamma -3}.
\label{Eq19f}
\end{align}
 Considering only the leading terms as $\epsilon \rightarrow 0$, we obtain
 \begin{align}
\epsilon  \sim  \left \{
	\begin{array} {ll}
		(\langle k \rangle - k_0)^{1/(\gamma -3)}, &  3 < \gamma   <4 \\
		1, &  \gamma   > 4
	\end{array}
	\right .
	\label{EqZ1}
\end{align} 

Returning  to the size of the giant component we find that $S \approx  \epsilon \langle k \rangle$. In conjunction with \eqref{Eq19g} and \eqref{EqZ1} this yields
 \begin{align}
S  \sim  \left \{
	\begin{array} {ll}
	 {\langle k \rangle}^{1/(3 - \gamma)},&  2 < \gamma   < 3  \\
		 (\langle k \rangle - k_0)^{1/(\gamma -3)}, &  3 < \gamma   <4 \\
		\langle k \rangle, &  \gamma   > 4
	\end{array}
	\right .
	\label{Eq33c} 
\end{align} 
 
To compare our results with those known for the percolation phase transitions, we use the relation
\begin{align}
z_2 - z_1 \sim q - q_c,	
\end{align}
where $q = 1-p$, $q_c$ is the threshold, and $p$ denotes the fraction of nodes (and their links)  removed from the network \cite{RPMR,RAS}. 

The size of the giant component near the critical point behaves as $S \sim ( q - q_c)^\delta$ \cite{RPMR}. Employing Eq. \eqref{EqZ} in combination with Eq. \eqref{Eq33c}, we find that the critical exponent, $\delta$, is given by
\begin{align}
\delta  =  \left \{
	\begin{array} {ll}
	{1/(3 - \gamma)},&  2 < \gamma   < 3  \\
		{1/(\gamma -3)}, &  3 < \gamma   <4 \\
		1, &  \gamma   > 4
	\end{array}
	\right .
	\label{Eq33k} 
\end{align} 
This is in agreement with the results reported in \cite{RAS}.

 \subsubsection*{High temperatures} 

 In this case, in order to to derive the size of the giant component, we employ Eq.\eqref{Eq21} for the generating function,
\begin{align} 
G_0(z)    = e^{- (1-z)\langle k \rangle}.
\label{Eq34}    
\end{align}    
Proceeding as above, we find that $ S $ satisfies the functional equation:
\begin{align}
1 -S =G_0(1-S)  = e^{-\langle k \rangle S}.
    \label{Eq35}
\end{align}
In this limit $S \simeq 1$, and the size of the giant component can be estimated as follows:
\begin{align}
     S\approx 1- e^{- \langle k \rangle}.
     \label{Eq36}
\end{align}
When $T \rightarrow \infty$ we obtain $S=e^{-N/2}$. Thus,  for $T \gg T_c$, almost all nodes 
of the network belong to the giant component. However, since $S < 1$, the giant component 
does not fill the entire graph. Moreover, only one giant component can be formed in the 
network, in agreement with the known results for graphs with purely power-law distributions \cite{NMSW}.

\section*{Conclusion} 

We demonstrated some diverse critical effects and phenomena occurring in networks, which significantly differ from those in lattices. We have shown how to treat random and scale-free networks within the conventional statistical physics approach and elucidated the role of network temperature. The temperature is a parameter that controls the average node degree in the whole network and \hmb{governs} the transition from unconnected to power-degree (scale-free) and random graphs. With increasing temperature, the degree distribution is changed from power-degree, for lower temperatures, to a Poisson-like distribution for high temperatures. Temperature can act \textit{contra} common sense in networks; for instance, increasing the sparse scale-free networks' temperature results in a high \hmb{connection degree}. For dense networks, the opposite is valid.

We introduced a configuration network model with hidden variables and found a finite-temperature phase transition for an asymptotically sparse network. The phase transition leads to fundamental structural changes in the network topology. The low-temperature phase yields a wholly disconnected graph. Above the critical temperature, the graph becomes connected, and a giant component appears. Near the critical temperature, the size of the giant component $S \ll 1$. This implies many vertices with a low degree and a small number with a high degree in the network. For high temperatures, almost all nodes of the network belong to the giant component. However, it turns out that the giant component does not fill the entire graph, even for the network's infinite temperature.

Our results suggest that a network temperature might be an inalienable property of real networks placing conditions on degree distribution, the topology of networks, and spreading information across these systems. We believe that our approach provides a unified statistical description of real networks' properties, from their scale-free and random graphs behavior to community structure and topology change.

\acknowledgements
The authors acknowledge the support by the CONACYT.

\appendix
 {
\section{Clustering coefficients}

In this Appendix, we obtain the analytic expressions for the global clustering coefficients $C_{1,2}$ for type A networks in the limit of low temperatures, $\beta \mu\gg \beta_c \mu \gg1$. We assume that the chemical potential $\mu = \text {const}$. The small parameters in the asymptotic expansions been used are: $ \alpha  = T \beta_c (\gamma -1 ) $ and $\epsilon = e^{-\beta \mu}$. As one can see $\epsilon \ll\alpha$, therefore, when it is applied, we keep the term $\mathcal{O}(\alpha)$
and omit $\mathcal{O}(\epsilon)$.

The  global clustering coefficients, $C_{1,2}$, are given by 
\begin{align}
   & C_1= \int d\varepsilon\rho(\varepsilon)  \frac{\iint  p(\varepsilon, \varepsilon'  ) p(\varepsilon', \varepsilon''  ) p(\varepsilon, \varepsilon''  )\rho(\varepsilon' )\rho(\varepsilon'')  d \varepsilon' d \varepsilon'' }{\big (\int p(\varepsilon, \varepsilon' ) \rho(\varepsilon' ) d \varepsilon' \big )^2  },
    \label{C1A} \\
   & C_2= \frac{\iiint  p(\varepsilon, \varepsilon'  ) p(\varepsilon', \varepsilon''  ) p(\varepsilon, \varepsilon''  )\rho(\varepsilon) \rho(\varepsilon' )\rho(\varepsilon'') d \varepsilon d \varepsilon' d \varepsilon'' }{\int \big (\int p(\varepsilon, \varepsilon' ) \rho(\varepsilon' ) d \varepsilon' \big )^2 \rho(\varepsilon )d \varepsilon }.
    \label{C2A}
\end{align}
 where $\rho (\varepsilon) =B\alpha \beta e^{\alpha\beta (\varepsilon - \mu/2 )}$, $B=  1/(2\sinh(\alpha\beta \mu /2))$, 
\begin{align}
    p(\varepsilon, \varepsilon') = \frac{1}{e^{\beta \left(\varepsilon + \varepsilon' -\mu\right) }+1},
    \label{EqPA}
\end{align}
and so on. 

It is convenient to introduce a new variable $x =e^{\alpha\beta (\varepsilon - \mu/2 )}$ and  recast \eqref{C1A} and \eqref{C2A}, as
\begin{widetext}
	\begin{align}
	C_1 &=  1-\int_\epsilon^{1/\epsilon} \frac{ I_1(x) }{ I_0(x) } \rho(x) d x,
	\label{EqAC1} \\
	 C_2&= \frac{1}{I}\iiint_\epsilon^{1/\epsilon}  p(x, x'  ) p(x', x''  ) p(x, x''  )\rho(x) \rho(x' )\rho(x'') d x d x' d x'' ,
	\label{EqAC2}
\end{align}
\end{widetext}
where $\epsilon = e^{-\beta\mu/2}$, $\rho (x) =B\alpha x^{\alpha -1}$ and
\begin{align}
    p(x,x') = \frac{1}{1 + x x'}.
 \end{align}
In Eqs. \eqref{EqAC1}  and \eqref{EqAC2}, we set
  \begin{align}   \label{EqA1a}
  & I=\int_\epsilon^{1/\epsilon} I_0(x)  \rho(x ) d x, \\
   & I_0(x)= \Big (\int_\epsilon^{1/\epsilon} p(x, x' ) \rho(x' ) d x' \Big )^2.
   \label{EqA1b}
   \end{align}
   \begin{align} 
  &  I_1(x)=    \int_\epsilon^{1/\epsilon} x' p(x, x'  ) Z(x, x'  ) \rho(x' ) d x', \label{EqA1c}\\
   &Z (x, x')  = \int_\epsilon^{1/\epsilon} x''  p(x', x''  ) p(x, x''  )\rho(x'')  d x''.
    \label{EqA1d}
    \end{align}

The computation of $I_0$ in the low-temperature limit yields $ I_0(x) = B^2 x^{-2\alpha} $. Next, substituting $I_0(x)$ in \eqref{EqA1a}, we obtain $I=B^2$. The computation of $I_0$ in the low-temperature limit yields $ I_0(x) = B^2 x^{-2\alpha} $. Next, substituting $I_0(x)$ in \eqref{EqA1a}, we obtain $I=B^2$. To calculate $Z$ we take into account that the leading term is obtained when $\epsilon =0$. Then taking the integral \eqref{EqA1d}, we obtain
\begin{align}
  Z(x,x')  = \alpha B\int^\infty_0 \frac{z^{\alpha} dz}{ (xz+1)(x' z +1)} = \frac{\alpha B\ln(x/x')}{x-x'} .
    \label{EqA2}
\end{align}
Substitution of $Z$ in \eqref{EqA1c}  yields
\begin{align}
  I_1 (x)=(\alpha B)^2 \int_0^\infty\frac{x'^{\alpha }\ln(x'/x) dx'}{(1+  xx') (x' - x)}.    
 \label{EqE1}
\end{align}
The computation of the integral  with help of the formula from the table of integrals \cite{PBM1},
\begin{equation}
\int_{0}^{\infty} \frac{\ln ^{n} x d x}{(x+a)(x+b)}=I_{n} \quad[\operatorname{Im} a=\operatorname{Im} b=0]
\label{EqI1}
\end{equation}
where
\begin{align} 	
I_{0}&=\frac{1}{a-b} \ln \left|\frac{a}{b}\right|, \\
I_{1}&=\frac{1}{2(a-b)}\left(\ln ^{2}|a|-\ln ^{2}|b|+A \operatorname{sgn} a\right) \nonumber \\
&[A=0 \, \text { if } a b>0 ;\, A=\pi^{2} \, \text {if}  \,a b<0],
 \end{align}
yields 
\begin{align}
  I_1 (x)=\frac{2(\alpha B)^2}{1+  x^2} \Big( \ln^2 x +\frac{\pi^2}{4} \Big ) + {\mathcal O}(\alpha ^2) .
 \label{EqE1}
\end{align}

\subsection*{Global clustering coefficient $C_1$}

In the low-temperature regime, the clustering coefficient's behavior $C_1$ is described by the following Proposition.

\begin{prop}
In the low-temperature limit, $\beta \mu \gg \beta_c \mu\gg 1$, the global clustering coefficient $C_1$ behaves as 
	 $$C_1=	1-  e^{-\alpha\beta \mu/2}  +{\mathcal O}(\alpha ^2).
$$  
where $\alpha  = T \beta_c (\gamma -1 ) $.
\end{prop}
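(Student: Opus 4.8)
The plan is to reduce everything to the one–dimensional integral already prepared in the Appendix and then evaluate it asymptotically. Starting from \eqref{EqAC1}, i.e. $C_1 = 1 - \int_\epsilon^{1/\epsilon}\frac{I_1(x)}{I_0(x)}\rho(x)\,dx$, I would insert the explicit low-temperature forms $I_0(x)=B^2 x^{-2\alpha}$, $I_1(x)=\frac{2(\alpha B)^2}{1+x^2}\big(\ln^2 x+\tfrac{\pi^2}{4}\big)$, and $\rho(x)=B\alpha x^{\alpha-1}$. The factors of $B$ and the powers of $x$ combine cleanly, so the clustering deficit collapses to
\[
1-C_1 = 2\alpha^3 B\int_\epsilon^{1/\epsilon}\frac{x^{3\alpha-1}\big(\ln^2 x+\tfrac{\pi^2}{4}\big)}{1+x^2}\,dx .
\]
The entire content of the Proposition is then the asymptotics of this single integral in the double limit $\alpha=T\beta_c(\gamma-1)\to 0$ with $\alpha\beta\mu=\beta_c(\gamma-1)\mu\gg 1$.

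I would evaluate the integral by a Mellin transform, using the kernel $g(s)=\int_0^\infty x^{s-1}/(1+x^2)\,dx=\tfrac{\pi}{2}\csc(\tfrac{\pi s}{2})$, valid for $0<s<2$; differentiating twice in $s$ brings down the factor $\ln^2 x$, so that
\[
\int_0^\infty\frac{x^{3\alpha-1}\big(\ln^2 x+\tfrac{\pi^2}{4}\big)}{1+x^2}\,dx = g''(3\alpha)+\tfrac{\pi^2}{4}\,g(3\alpha).
\]
Before invoking this infinite-range formula I must justify replacing the true limits $(\epsilon,1/\epsilon)=(e^{-\beta\mu/2},e^{\beta\mu/2})$ by $(0,\infty)$. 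Setting $x=e^u$, the integrand concentrates near $u\sim -1/\alpha$, which sits a factor $\alpha\beta\mu/2=\beta_c(\gamma-1)\mu/2\gg1$ inside the left endpoint $u=-\beta\mu/2$; the truncation error is of order $e^{-c\,\beta_c\mu}$ and hence negligible against every power of $\alpha$, in keeping with the Appendix's rule of retaining $\mathcal{O}(\alpha)$ and dropping $\mathcal{O}(\epsilon)$.

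Finally I would expand in $\alpha$. From $g(s)=\tfrac{1}{s}+\tfrac{\pi^2 s}{24}+\cdots$ one gets $g(3\alpha)\sim\tfrac{1}{3\alpha}$ and $g''(3\alpha)\sim\tfrac{2}{27\alpha^{3}}$, so the $\ln^2 x$ term dominates while the $\tfrac{\pi^2}{4}$ piece is smaller by a relative factor $\mathcal{O}(\alpha^2)$ -- precisely the order absorbed into the stated error. The prefactor $2\alpha^3 B$ then cancels the $\alpha^{-3}$ growth, leaving a pure multiple of $B=1/\big(2\sinh(\alpha\beta\mu/2)\big)$; and since $\alpha\beta\mu/2\gg1$ one has $B=e^{-\alpha\beta\mu/2}\big(1+\mathcal{O}(e^{-\alpha\beta\mu})\big)$, which turns the deficit into the announced exponential $e^{-\alpha\beta\mu/2}$ up to $\mathcal{O}(\alpha^2)$. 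I expect the decisive obstacle to be exactly this two-scale bookkeeping: because the small prefactor $\alpha^3$ is reinstated by the $\ln^2 x$-induced $\alpha^{-3}$ in $g''$, the numerical coefficient multiplying $e^{-\alpha\beta\mu/2}$ is delicate, and one has to carry the Mellin expansion (together with the $1/(1+x^2)$ localization) to high enough order to be certain that no contribution of the same order is overlooked.
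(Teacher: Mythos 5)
Your reduction is exactly the paper's: the same identity $C_1=1-\int (I_1/I_0)\rho\,dx$, the same low-temperature forms of $I_0$, $Z$ and $I_1$, and the same one-dimensional integral $2\alpha^3 B\int x^{3\alpha-1}\bigl(\ln^2x+\tfrac{\pi^2}{4}\bigr)(1+x^2)^{-1}dx$. The only methodological difference is at the last step: the paper substitutes $x=\sqrt{y}$, folds the integral onto $[0,1]$ and invokes $\int_0^1 x^{a-1}\ln^n x\,(1+x)^{-1}dx=\beta^{(n)}(a)$, whereas you use the Mellin kernel $g(s)=\tfrac{\pi}{2}\csc(\pi s/2)$ and differentiate in $s$. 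These are equivalent, and your route is arguably cleaner; your justification for extending the limits from $(\epsilon,1/\epsilon)$ to $(0,\infty)$ is also fine.

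The genuine gap is that the decisive step --- the numerical coefficient of $e^{-\alpha\beta\mu/2}$ --- is not established, and your own asymptotics contradict the statement. From $g''(3\alpha)\sim 2/(27\alpha^3)$ and $g(3\alpha)\sim 1/(3\alpha)$ you get
\begin{equation*}
2\alpha^{3}\Bigl(g''(3\alpha)+\tfrac{\pi^{2}}{4}\,g(3\alpha)\Bigr)=\tfrac{4}{27}+\tfrac{\pi^{2}}{6}\,\alpha^{2}+\cdots,
\end{equation*}
so the deficit comes out as $\tfrac{4}{27}\,e^{-\alpha\beta\mu/2}$, not $e^{-\alpha\beta\mu/2}$; saying the result is ``a pure multiple of $B$'' and then announcing the Proposition silently replaces $4/27$ by $1$. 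You flag this as ``delicate'' but never resolve it, so the proof does not prove the stated formula. Worse, the two-scale problem you mention at the end is not merely a bookkeeping nuisance but an actual obstruction to the whole scheme: the $x$-integral is dominated by $\ln x\sim-1/\alpha$, precisely where the approximations $x'^{\alpha}\approx 1$ used to derive $I_0$, $Z$ and $I_1$ incur relative errors of order $\alpha\,|\ln x'|\sim 1$. For instance, the first correction $\alpha\ln x'$ inside $I_1$ feeds the integral $\int_0^\infty x^{3\alpha-1}\ln^3x\,(1+x^2)^{-1}dx\sim -6/(3\alpha)^4$, which after multiplication by the extra $\alpha$ and the prefactor $\alpha^3$ contributes at order $\alpha^{0}$ --- the same order as the leading term. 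So the constant in front of $e^{-\alpha\beta\mu/2}$ cannot be obtained from the truncated $I_0$, $Z$, $I_1$ at all; one must keep the exact $x^{\alpha}$ factors (or work uniformly in the variable $\alpha\ln x$) throughout. Neither your argument nor, for what it is worth, the paper's own evaluation (which asserts $A(\alpha)=1+\tfrac{\pi^2}{4}\alpha^2$ from intermediate formulas that yield $4/27$ at leading order) closes this step.
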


\begin{proof}
The leading contribution in $C_1$ is determined by  
\begin{align}
C_1 &=  1-\int_0^{\infty} \frac{ I_1(x) }{ I_0(x) } \rho(x) d x.	
\label{EqC}
\end{align}
Substitution of $I_0(x)$ and  $I_1(x)$ yields
	\begin{align}
	C_1 =  1- e^{-\alpha\beta \mu/2}  A(\alpha) +{\mathcal O}(\alpha ^2)   ,
	\label{BC1b}
\end{align}
where
\begin{align}
A(\alpha) = 2\alpha^3 \int_{0}^{\infty}\frac{ d x x^{3\alpha -1} }{1+x^2}\Big( \ln^2 x +\frac{\pi^2}{4} \Big )  .
\end{align}
Making a change of variables, $x = \sqrt{y}$, we obtain
\begin{align}
A(\alpha) = \frac{\alpha^3 }{2}\int_{0}^{1}\frac{ d y( y^{3\alpha/2 -1} + y^{-3\alpha/2 -1 }) }{1+y}\Big( \ln^2 y +\frac{\pi^2}{4} \Big )  .
\end{align}

To proceed further we employ the formula from the table of integrals  \cite{PBM1}
\begin{align}
\int_{0}^{1}\frac{ x^{\alpha -1} \ln^n x \,dx}{1+x}  = \beta^{(n)}(\alpha),
\end{align}
where 
\begin{align}
	\beta(z)=\frac{1}{2}\left[\psi\left(\frac{z+1}{2}\right)- \psi\left(\frac{z}{2}\right)\right],
\end{align}
and $\psi (z)$ denotes the digamma function \cite{NIST}. The computation yields
\begin{align}
	A (\alpha) = 1+ \frac{\pi^2}{4} \alpha^2.
\end{align}
Substitution of this result in \eqref{BC1b} yields
\begin{align}
C_1=	1-   e^{-\alpha\beta \mu/2}  +{\mathcal O}(\alpha ^2).
\end{align}

\end{proof}

\subsection*{Global clustering coefficient $C_2$}

In the low temperature regime, the clustering coefficient's behavior $C_2$ is described by the following Proposition.

\begin{prop}
In the low-temperature limit, $\beta \mu \gg \beta_c \mu\gg 1$ the global clustering coefficient $C_2$ behaves as 
	 $$C_2 \propto  e^{- \alpha\beta \mu/2} + {\mathcal O}(\alpha^2) .$$  
where $\alpha  = (T/T_c )(\gamma -1 ) $.
\end{prop}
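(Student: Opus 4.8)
The plan is to evaluate $C_2=N_2/I$ directly, where by \eqref{EqAC2} the numerator is $N_2=\iiint_\epsilon^{1/\epsilon}p(x,x')p(x',x'')p(x,x'')\,\rho(x)\rho(x')\rho(x'')\,dx\,dx'\,dx''$ and the denominator is the already-computed $I=B^2$. It is tempting to mimic the proof of the previous Proposition: the identity $1-p(x',x'')=x'x''p(x',x'')$ turns the inner double integral into $I_0(x)-I_1(x)$, so $N_2=\int\rho(I_0-I_1)\,dx=I-\int\rho I_1\,dx$ and formally $C_2=1-I^{-1}\!\int\rho I_1\,dx$. However, I do \emph{not} expect the bulk asymptotics $I_1(x)=2(\alpha B)^2(\ln^2x+\pi^2/4)/(1+x^2)$ to be usable here, because $\int\rho I_1\,dx$ is controlled by the high-degree edge $x\to\epsilon$, where $I_1(x)\to I_0(x)\sim B^2x^{-2\alpha}$ rather than its small bulk value; there $I^{-1}\!\int\rho I_1\,dx=1-\mathcal O(B)$, so the two leading terms of $C_2=1-I^{-1}\!\int\rho I_1\,dx$ cancel. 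I would therefore compute $N_2$ itself and read $C_2$ off as the surviving $\mathcal O(B)$ term.

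For the direct evaluation I pass to logarithmic variables $u=\ln x$, $v=\ln x'$, $t=\ln x''\in[-W,W]$ with $W=\beta\mu/2$, using $\rho(x)\,dx=B\alpha e^{\alpha u}\,du$ and $p(x,x')=(1+e^{u+v})^{-1}$, and then rescale $U=\alpha u$, $V=\alpha v$, $T=\alpha t$ (so $U\in[-w,w]$ with $w=\alpha\beta\mu/2$). The Jacobian $\alpha^{-3}$ cancels the prefactor $\alpha^3$, leaving $N_2=B^3\iiint_{[-w,w]^3}e^{U+V+T}\,[(1+e^{(U+V)/\alpha})(1+e^{(V+T)/\alpha})(1+e^{(U+T)/\alpha})]^{-1}\,dU\,dV\,dT$. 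In the low-temperature limit $\alpha\to0$ each factor tends pointwise to the indicator $\theta(-U-V)$, and since the integrand is dominated by $e^{U+V+T}\mathbf{1}_{[-w,w]^3}$, dominated convergence gives $N_2\to B^3K(w)$ with $K(w)=\iiint_{[-w,w]^3}e^{U+V+T}\theta(-U-V)\theta(-V-T)\theta(-U-T)\,dU\,dV\,dT$. The substitution $a=U+V$, $b=V+T$, $c=U+T$ (so $dU\,dV\,dT=\tfrac12\,da\,db\,dc$ and $U+V+T=\tfrac12(a+b+c)$) decouples the integral into $K(\infty)=\tfrac12\big(\int_{-\infty}^0 e^{a/2}\,da\big)^3=4$, so $N_2\simeq4B^3$ once $w\gg1$.

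Combining with $I=B^2$ gives $C_2=N_2/I\simeq4B=2/\sinh(\alpha\beta\mu/2)$, and since $\alpha\beta\mu/2=\beta_c(\gamma-1)\mu/2\gg1$ this is $C_2\simeq4\,e^{-\alpha\beta\mu/2}$, i.e.\ $C_2\propto e^{-\alpha\beta\mu/2}$ as claimed. The main obstacle is to upgrade the qualitative $\alpha\to0$ limit to the quantitative statement with an $\mathcal O(\alpha^2)$ error: at finite $\alpha$ the difference between each Fermi factor and its step is an \emph{odd} function across the transition, so its pairing against the smooth weight $e^{U+V+T}$ has vanishing first-order part and contributes only at $\mathcal O(\alpha^2)$; making this estimate uniform across the three coupled factors, and confirming rigorously the earlier claim that the $C_1$-style reduction via the bulk $I_1$ is inapplicable, are the steps that require genuine care.
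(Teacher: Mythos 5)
Your proposal is correct and follows essentially the same route as the paper's proof: both extract the factor $e^{-\alpha\beta\mu/2}$ from the normalization constants $B$ and reduce the claim to showing that $\alpha^3\iiint F\,dx\,dx'\,dx''$ tends to a finite nonzero constant ($Q_0$ in the paper, your $K(\infty)$) as $\alpha\to 0$, which is exactly your rescaled triple integral with the Fermi factors degenerating to step functions. Your version is in fact sharper in two respects the paper leaves implicit --- you evaluate the constant explicitly ($Q_0=K(\infty)=4$, so $C_2\simeq 4e^{-\alpha\beta\mu/2}$) and you justify the existence of the limit by dominated convergence after the logarithmic rescaling, whereas the paper only asserts it; your side remark that the $C_1$-style reduction via the bulk asymptotics of $I_1$ fails here (because $\int I_1\rho$ is edge-dominated and the leading terms cancel) is also correct and worth keeping.
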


\begin{proof}
Substituting $\rho(x)$ and $I$ in Eq. \eqref{EqAC2}, we obtain
\begin{align}
C_2 = e^{- \alpha\beta \mu/2} Q(\alpha, \epsilon)	.
\end{align}
where $ Q(\alpha, \epsilon)= \alpha^3	\iiint_\epsilon^{1/\epsilon} F(x,x',x'') d x d x' d x''$ and 
\begin{align}
F(x,x',x'')=\frac{ x^{\alpha -1} x'^{\alpha -1} x''^{\alpha -1}}{ (1 + x x') (1 + x x'')(1 + x' x'')}.
\end{align}
The leading contribution in $C_2$ is determined by the term $Q_0 e^{- \alpha\beta \mu/2}$, where
\begin{align}
Q_0  = \lim_{\alpha\rightarrow 0 }  \alpha^3	\iiint_0^{\infty} F(x,x',x'') d x d x' d x''.
\label{EqC2b}
\end{align}
The limit exists, since $ F(x,x',x'') \sim  x^{\alpha -1} x'^{\alpha -1} x''^{\alpha -1}$
as $x \rightarrow 0$. Using this result, we obtain 
\begin{align}
 	C_2 =  Q_0e^{- \alpha\beta \mu/2}   + {\mathcal O}(\alpha^2).	 	
 	\end{align}
Thus, in low-temperature regime 
\begin{align}
C_2 \propto  e^{- \alpha\beta \mu/2} + {\mathcal O}(\alpha^2) .	
\end{align}

\end{proof}

}

\section{Supplementary Material}

\begin{widetext}

 In the Supplementary Material (SM), we calculate the critical exponent and find the logarithmic 
 corrections to the leading power laws that govern the order parameter and thermodynamic 
 potentials as a phase transition point is approached. First, we consider the behavior of the chemical potential near the critical temperature. The 
dependence of the chemical potential on temperature is defined by the equation, $ \langle k 
\rangle = 2L/N $,  where  $ L $ is the expected number of links.  Assuming $N >> 1$ and 
substituting $\langle k  \rangle = N\nu e^{-\beta_c \mu}$ in the equation,
\begin{align}
&L=  \frac{N(N-1) }{8\sinh^2(\alpha\beta\mu/2)} \Big ( e^{\alpha \beta \mu} {}_{3}F_{2} \big 
(1, \alpha, \alpha ; 1+\alpha , 1+\alpha  ;- e^{ \beta \mu } \big )  - 2 {}_{3}F_{2} \big (1, 
\alpha, \alpha ; 1+\alpha , 1+\alpha  ;- 1\big ) \nonumber \\
&	+ e^{ -\alpha \beta \mu} {}_{3}F_{2} \big (1, \alpha, \alpha ; 1+\alpha , 1+\alpha  ;- e^{ 
-\beta \mu} \big )\Big ),
	\end{align}
 we obtain
\begin{align}
 \nu \equiv\bigg (\frac{\gamma -1}{\gamma -2}  \bigg )^2 =  &\frac{e^{\beta_c 
 \mu}}{4\sinh^2(a\beta\mu/2)} \Big ( e^{a \beta \mu} {}_{3}F_{2} 
\big (1, a, a ; 
1+a , 1+a  ;- e^{ \beta \mu } \big )  - 2 {}_{3}F_{2} \big (1, a, a ; 1+a , 1+a  ;- 1\big ) 
\nonumber \\
&	+ e^{ -a \beta \mu} {}_{3}F_{2} \big (1, a, a ; 1+a , 1+a  ;- e^{ -\beta \mu} \big )\Big ). 	
\label{SM1a}
\end{align}
The solution of this equation yields the  chemical potential as a function of temperature. 
\begin{figure}[tbh]
      \includegraphics[width=0.24\linewidth]{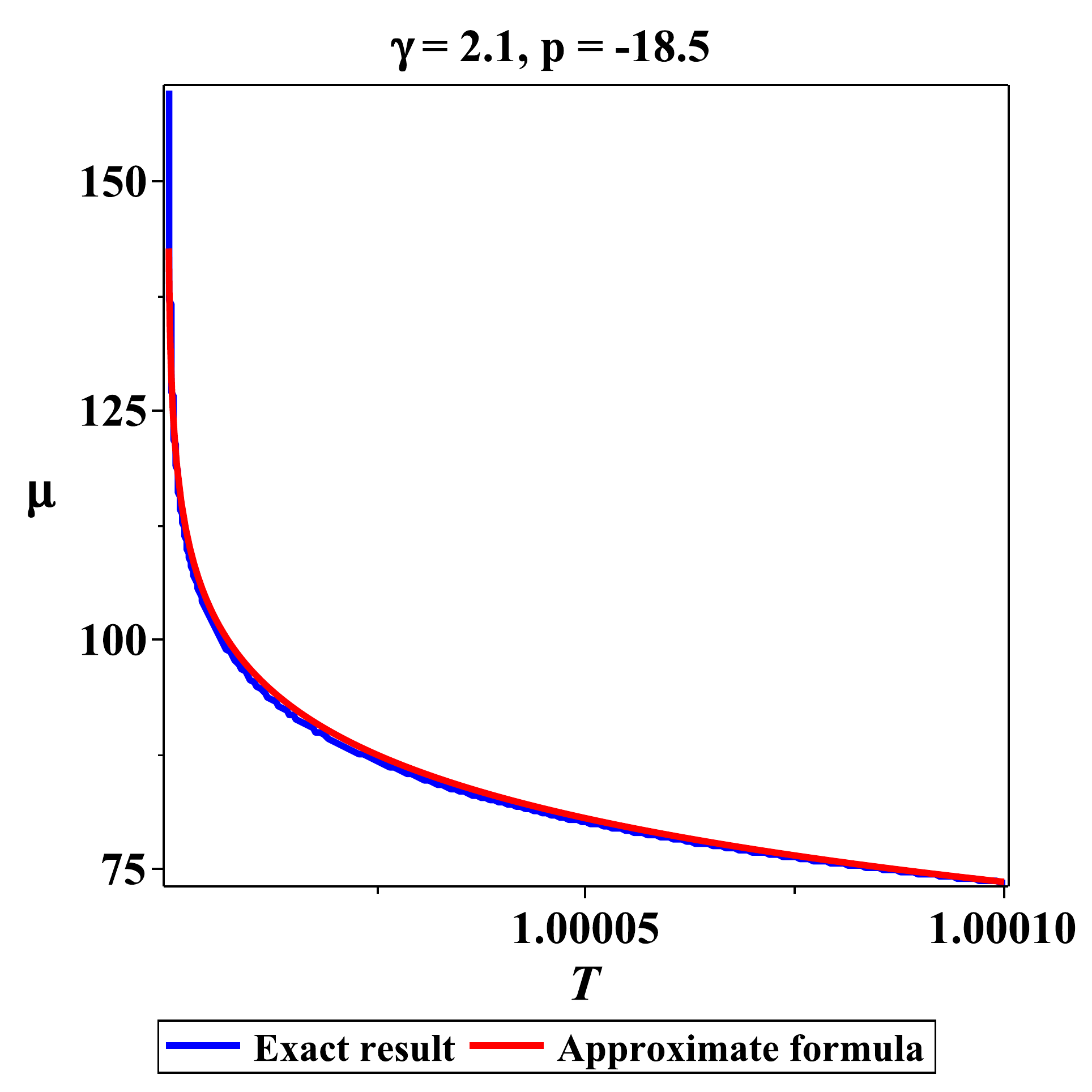} 
       \includegraphics[width=0.24\linewidth]{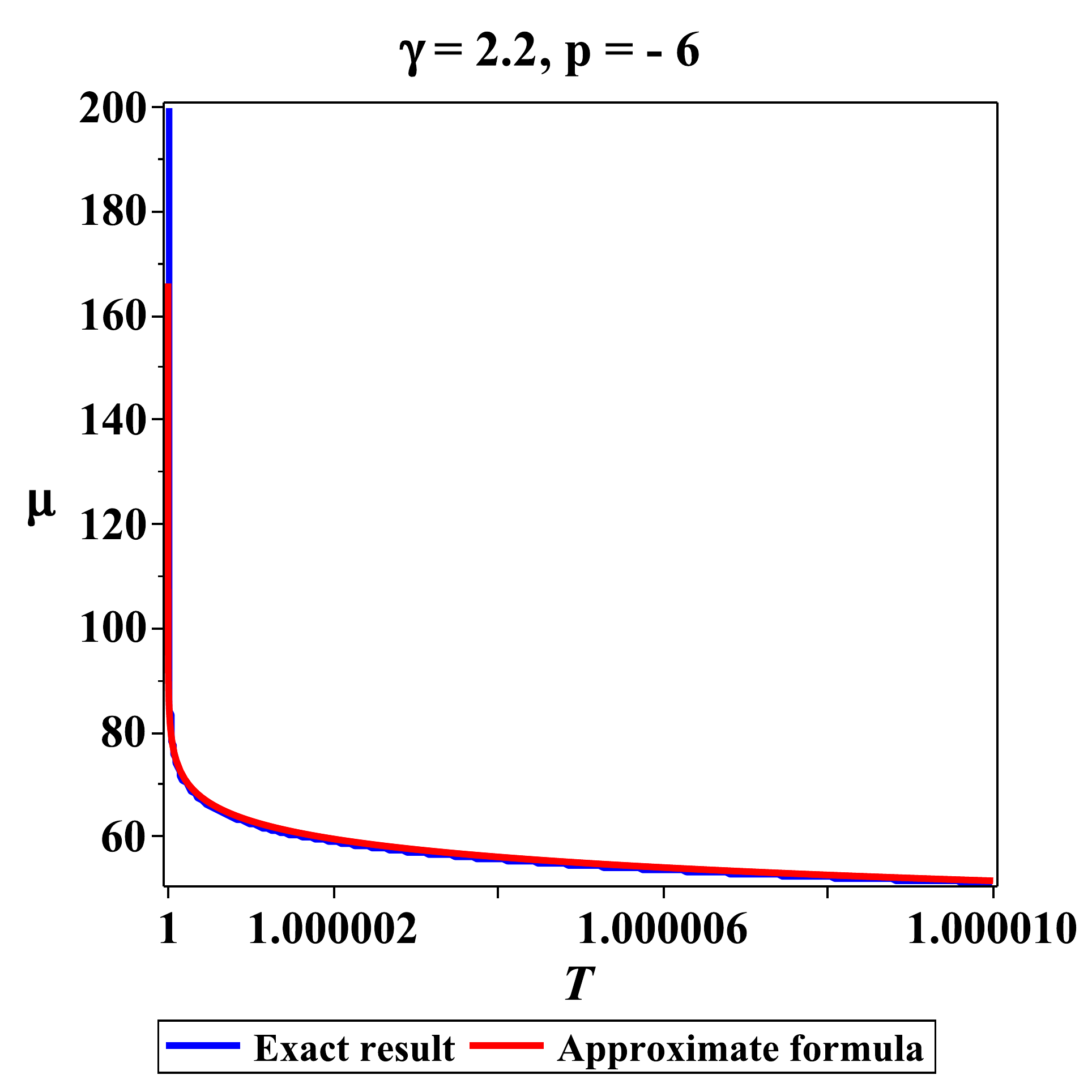} 
        \includegraphics[width=0.24\linewidth]{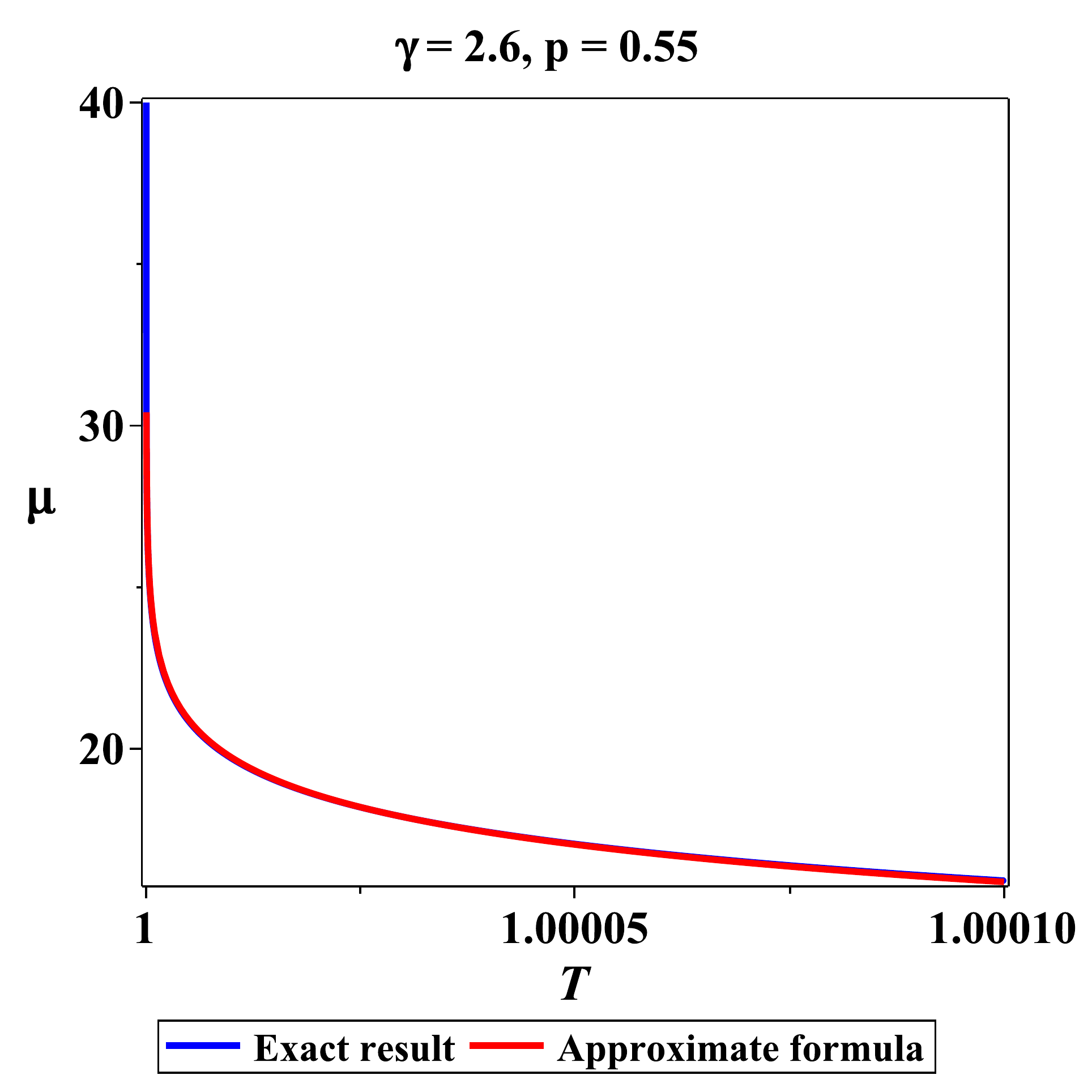} 
         \includegraphics[width=0.24\linewidth]{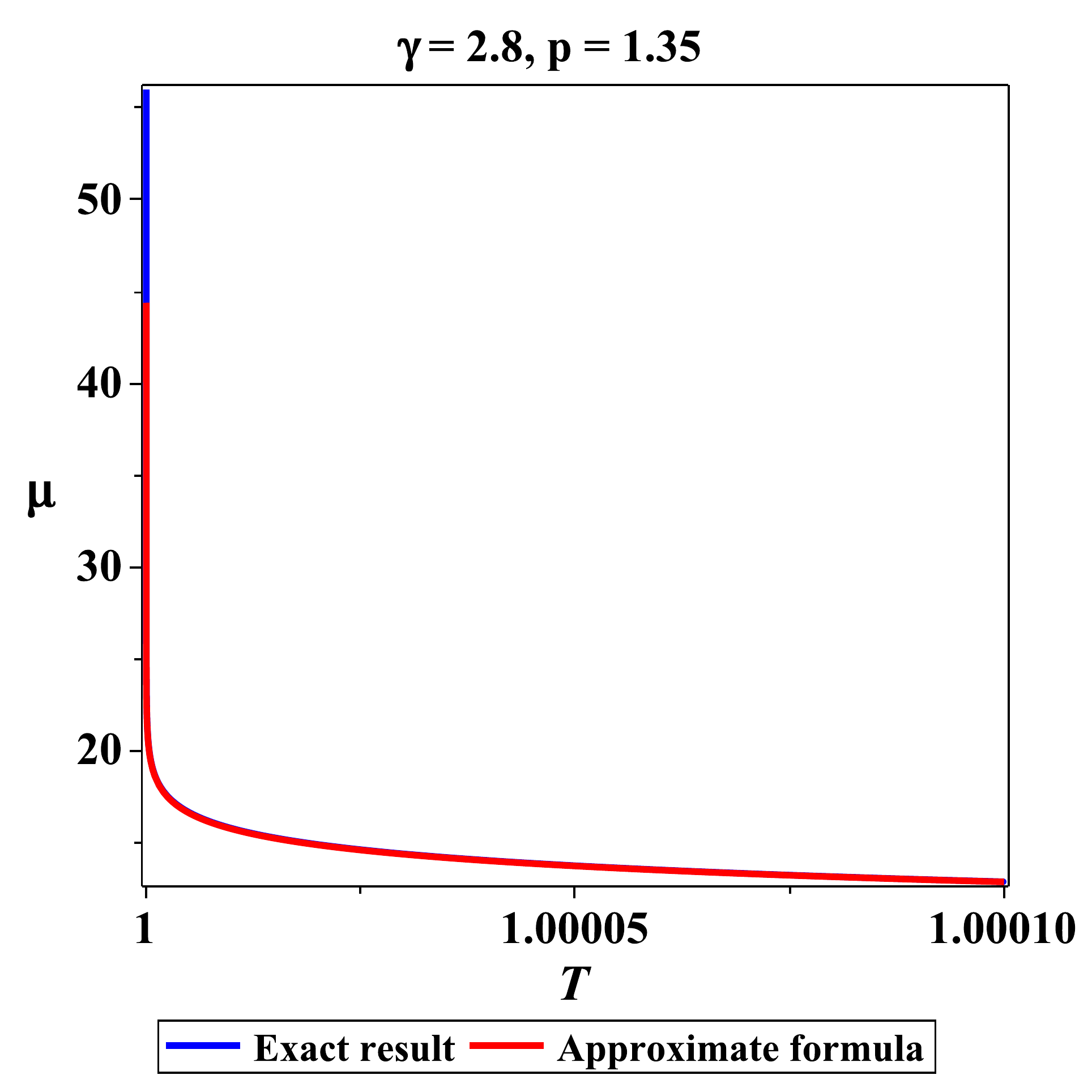} 
      \includegraphics[width=0.235\linewidth]{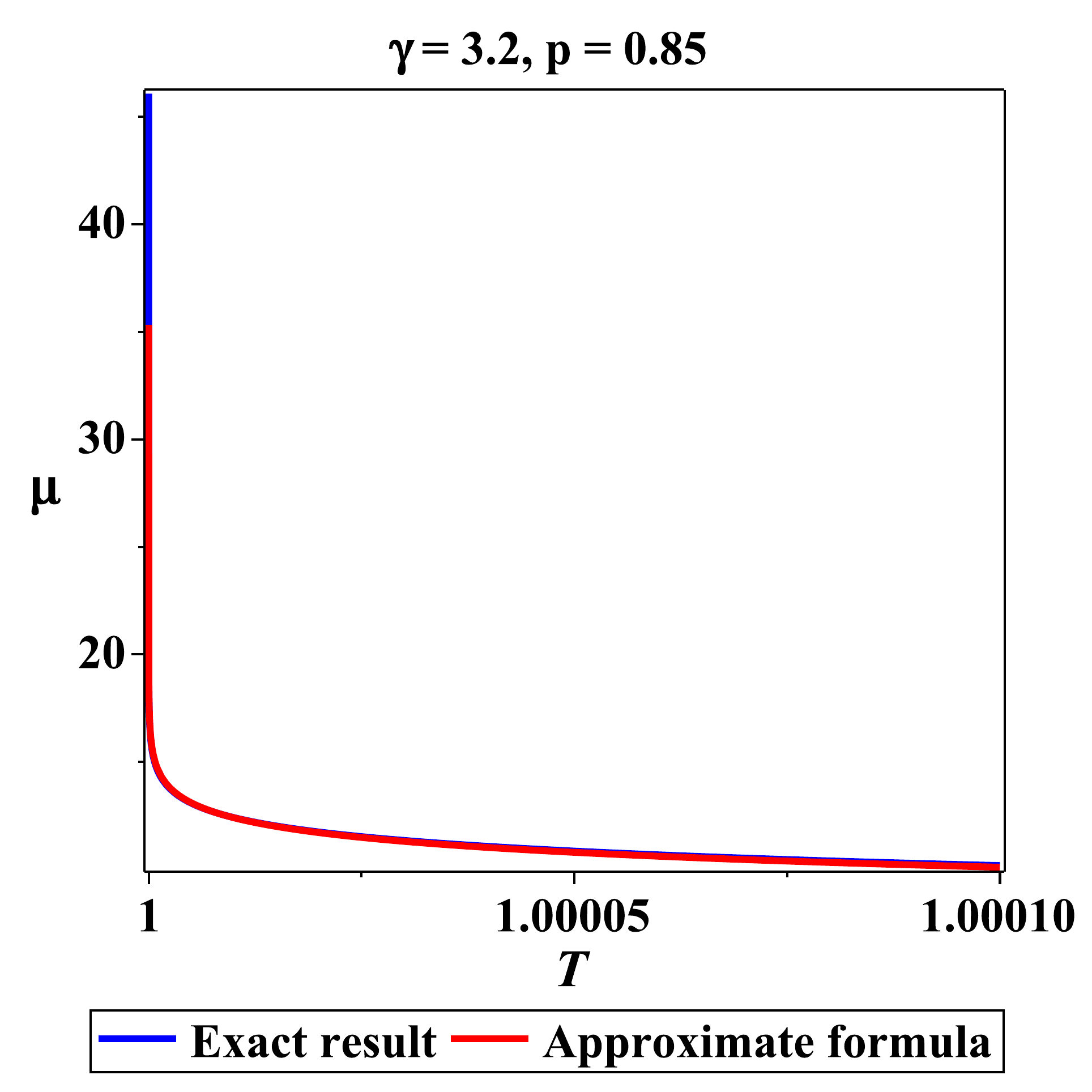} 
       \includegraphics[width=0.24\linewidth]{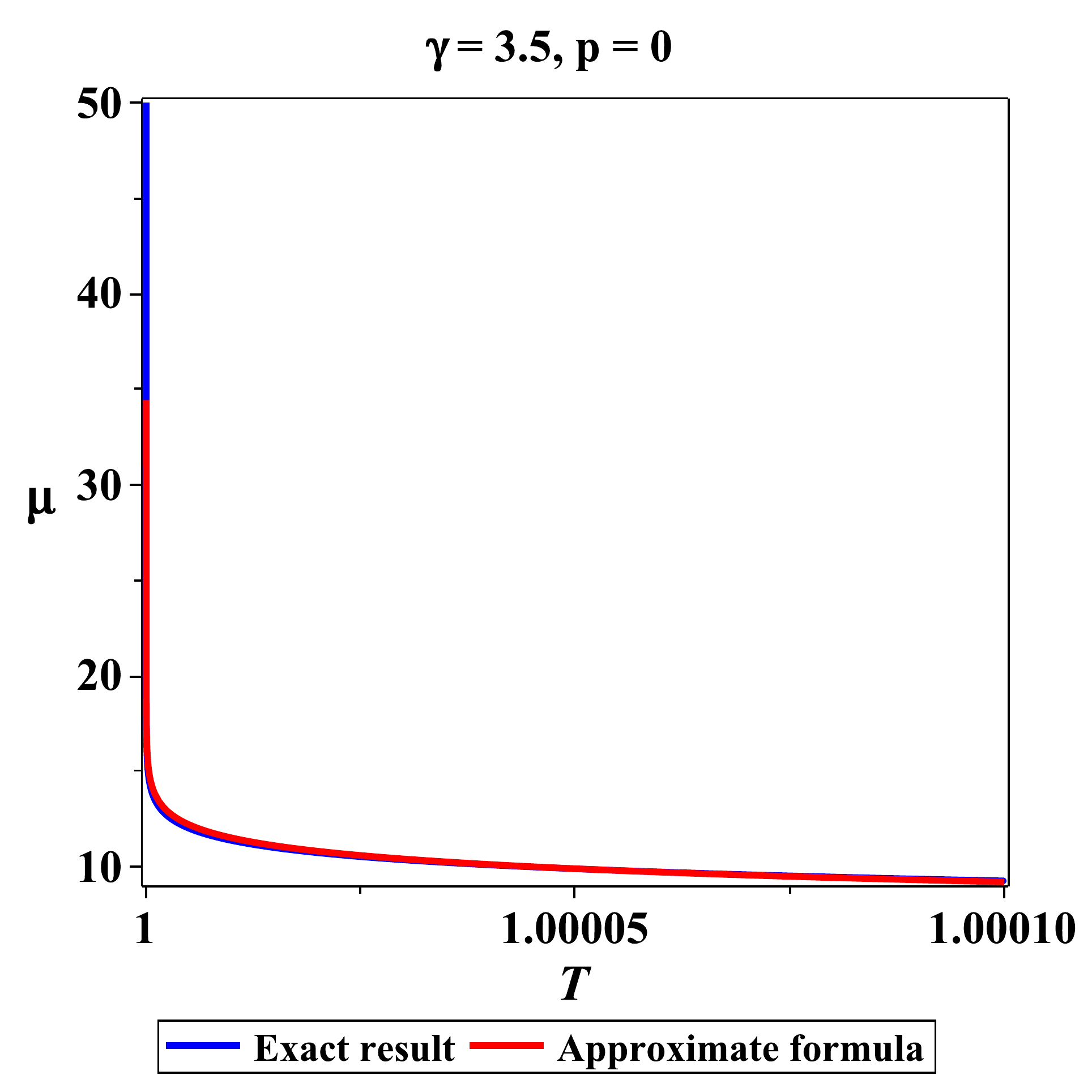} 
        \includegraphics[width=0.24\linewidth]{Fig4.eps} 
         \includegraphics[width=0.24\linewidth]{Fig5.eps} 
      	\caption{ Dependence of the chemical potential on  temperature. Blue curves present the 
      	exact result obtained as the solution of the Eq. \eqref{SM1a}. Red lines show the 
      	approximate formula.}
      	\label{figSM1}
      \end{figure}     
\begin{figure}[tbh]
      \includegraphics[width=0.7\linewidth]{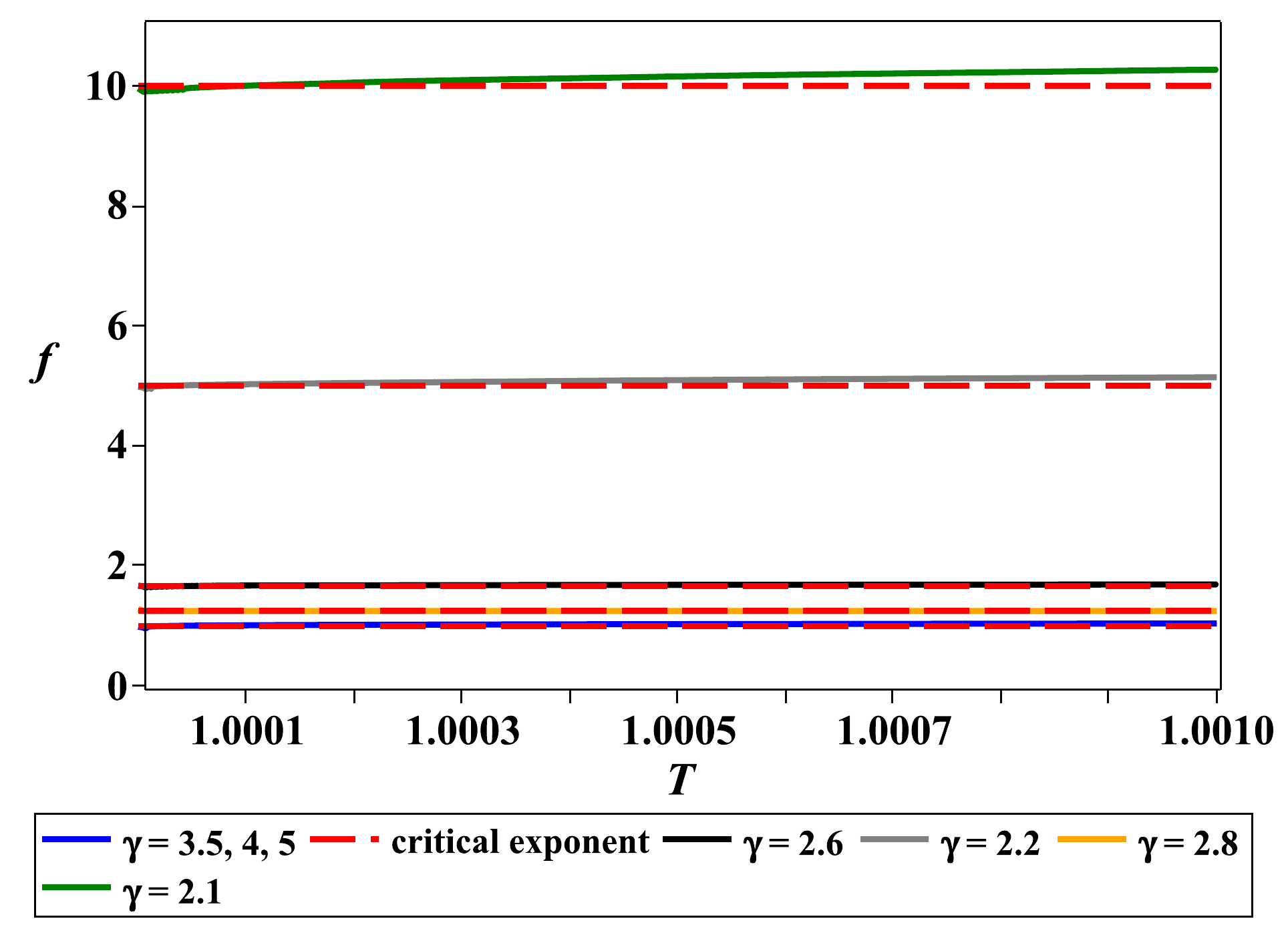} 
      	\caption{ Dependence of the function $f(\tau)$ on temperature. Blue curves depict the 
      	exact result obtained as the solution of Eq. \eqref{SM1a}. Red dashed lines present the 
      	critical exponent, $\lambda$.}
      	\label{figSM2}
      \end{figure}  

To describe the behavior of the chemical potential near the critical point, we use the trial 
function
\begin{align}
	\beta_c\mu \approx -\lambda\ln \tau +p, 
	\label{EqCP}
	\end{align}
where $\tau=(T-T_c)/T_c$ is the reduced temperature, and $p$ is a fitting constant.  As shown 
	in Fig.\ref{figSM1}, formula \eqref{EqCP} approximates  the exact solution of the Eq. 
\eqref{SM1a} well, with the choice of $\lambda$  being
\begin{align}
	\lambda = \left \{
	\begin{array} {ll}
		1/(\gamma - 2), &  2 < \gamma   <3 \\
		1, &  \gamma   \geq 3
	\end{array}
	\right .
	\label{EqL}
\end{align} 
Further it is convenient to introduce a new function $f(\tau) =(p -\beta_c\mu)/\ln \tau$. Now 
we can find $\lambda$ as a limit of $f(\tau)$, when $\tau \rightarrow 0$, i.e. $\lambda 
=\lim_{\tau \rightarrow 0}	 f(\tau)$.  After the substitution $\beta_c\mu = p -f(\tau) \ln 
\tau$ in Eq. \eqref{SM1a}, it becomes the equation that implicitly defines the dependence of the 
function $f(\tau)$ on temperature. In Fig. \ref{figSM2} we compare the results of our numerical 
solution of Eq. \eqref{SM1a} obtained for $f(\tau)$ with the critical exponent given by Eq. 
\eqref{EqL}. One can see a good agreement between both the approximate and exact results.	 

%\subsection*{Order parameter and thermodynamic potentials}

The critical exponent associated with the order parameter, $\eta = 2\langle k \rangle/N$, is 
defined as
\begin{align}
\lambda \stackrel {\rm def}{=} \lim_{\tau \rightarrow 0}	 \frac{\ln \eta}{|\ln \tau|}  .
\end{align}
In our model the order parameter is determined by the chemical potential, $\eta = 2 \nu 
e^{-\beta_c \mu}$. Substituting $\mu$ from Eq. \eqref{EqCP}, we find $\eta \propto 
\tau^\lambda$,  where the critical exponent, $\lambda$, is given by Eq. \eqref{EqL}. 

To obtain the Landau free energy dependence on the order parameter, we 
start with the exact expression for $\Omega$.
\begin{align}		
	\Omega= & -\frac{N^2}{2} \bigg(\frac{ \nu e^{-\beta_c \mu}}{\alpha  \beta} + \frac{1 }{4 
	\beta\sinh^2(\alpha \beta\mu/2)} \Big ( e^{a \beta  
				\mu}\ln \big (1 + e^{-\beta \mu} \big ) -2\ln 2 +e^{-a \beta\mu} \ln 
		\big (1 + e^{\beta \mu} \big ) + e^{a \beta\mu} \Phi \big(-e^{\beta \mu},1,\alpha \big 
		)\nonumber \\
		&-2 \Phi (-1,1,\alpha   ) + e^{-\alpha  \beta  \mu}\Phi \big (-e^{\beta \mu},1,\alpha  
		\big)\Big)\bigg).
				 \label{A11}
	\end{align}
Near the critical point one can approximate it as
\begin{align}		
	\Omega\approx & -\frac{N^2}{2 \beta_c} \big( A+\tau \beta_c \mu + e^{-\tau\beta_c \mu} 
	  {}_{2}F_{1} \big (1, a ; 1+a ;-  e^{\beta_c \mu} \big ) \Big)  e^{-\beta_c \mu} ,
				 \label{A11b}
	\end{align}
where $a \equiv\alpha |_{T_c} =\gamma -1$,
\begin{align}
	A= \frac{\gamma^2 - 3\gamma +3}{(\gamma -2)^2},
\end{align}
In derivation of Eq. \eqref{A11b} we use the relation \cite{AEWM,NIST,abr,PBM3},  
\begin{align}
{}_{q+1 }F_{q}\left(\begin{array}{c}
1, a, \ldots, a ; z \\
a+1, \ldots, a+1
\end{array}\right)=a^{q} \Phi(z, q, a).
\label{A3}
\end{align}
to replace the Lerch transcendent by the hypergeometric function  ${}_{2}F_{1} \big (1, a ; 1+a 
;-  e^{\beta_c \mu} \big )$. 

To proceed further, we use the linear transformation formulas of 
the hypergeometric function, yielding	
	\begin{align}
		{}_{2}F_{1} \big (1, a ; 1+a ;-z\big )  =& \frac{a z^{-1}}{a-1} {}_{2}F_{1} \big (1, 1-a ; 2-a 
		;-z^{-1}\big ) + \frac{\pi a z^{-a}}{\sin\pi a}, \quad a\neq 0, \pm 1,\pm 2,\dots  .
		\label{A6}
	\end{align}
For $|z| \gg 1$ we get
\begin{align}
  {}_{2}F_{1} \big (1, a ; 1+a ;- z\big ) 	= \frac{a z^{-1}}{a-1}+ \frac{\pi a  z^{ -a }}{\sin \pi a} 
  +  \mathcal O (z^{  - 2} ),
   \label{A7}
   \end{align}
and after substituting \eqref{A7} in Eq.\eqref{A11b} with $z=  e^{\beta_c \mu} $,  we obtain
\begin{align}		
	\Omega\approx  -\frac{N^2}{2 \beta_c} \Big( A+ \tau \beta_c \mu +  e^{-\tau \beta_c 
	\mu}\Big 
	(\frac{a  e^{-\beta_c 
	\mu}}{a-1}+ 	\frac{\pi a   e^{-a\beta_c \mu}}{\sin \pi a} \Big) \Big) e^{-\beta_c \mu}.
				 \label{A11b}
	\end{align}

The limiting case with $\gamma $ being integer, $\gamma = 3,4, \dots$, can be covered by 
using the linear transformation of variables \cite{abr}
\begin{equation}
{}_{2}F_{1}(a, b ; c ; -z)=(1+z)^{-a} {}_{2}F_{1}\left(a, c-b ; c ; \frac{z}{z+1}\right),
\end{equation}
followed by the hypergeometric series \cite{AEWM}
\begin{align}
	&{}_{2}F_{1}(a, b ; a+b+l ; z) =\frac{\Gamma(l) \Gamma(a+b+l)}{\Gamma(a+l) \Gamma(b+l)} 
	\sum_{n=0}^{l-1} \frac{(a)_{n}(b)_{n}}{(1-l)_{n} n !}(1-z)^{n} \nonumber \\
&+(1-z)^l(-1)^{l} \frac{\Gamma(a+b+l)}{\Gamma(a) \Gamma(b)} 
 \times  \sum_{n=0}^{\infty} \frac{(a+l)_{n}(b+l)_{n}}{n !(n+l) !}\left[k_{n}-\log 
 (1-z)\right](1-z)^{n},
\end{align}
where $l=0,1,2 \dots$ and
\begin{equation}
k_{n}=\psi(n+1)+\psi(n+1+l)-\psi(a+n+l)-\psi(b+n+l)
\end{equation}
The computation yields
\begin{align}
  {}_{2}F_{1} \big (1, a ; 1+a ;- z\big ) 	\approx \frac{a z^{-1}}{a-1} + (-1)^{a -1} k_0 z^{ -a },
   \label{A7a}
   \end{align}
where $k_0 = \psi(1)-\psi(a)$, and $\psi(z) $ denotes the Psi-function \cite{NIST}. Using 
\eqref{A7a} in Eq.\eqref{A11b}  we obtain
\begin{align}
\Omega\approx  -\frac{N^2}{2 \beta_c} \Big( A+ \tau \beta_c \mu +  e^{-\tau \beta_c 
	\mu}\Big 
	(\frac{a  e^{-\beta_c 
	\mu}}{a-1}+  (-1)^{a -1} k_0 e^{-\beta_c \mu}\Big) \Big) e^{-\beta_c \mu}. 
	\label{A8a}
		\end{align}

To complete our analysis, we substitute $e^{-\beta_c \mu} = \eta/2\nu$ in Eqs. \eqref{A11b} 
and 
\eqref{A8a}. Then leaving only the leading terms, we obtain
\begin{align}		
	\Omega  \approx -\frac{N^2}{4\nu\beta_c} \Big ( A - \tau \ln\Big ( 
	\frac{\eta}{2\nu}\Big ) \Big ) \eta\label{LFEa}.
	\end{align}
 Similar consideration for the Helmholtz free energy leads to
\begin{align}		
	F \approx -\frac{N^2}{4\nu\beta_c} \Big ( A + (1-\tau) \ln\Big ( \frac{\eta}{2\nu}\Big ) 
	\Big )\eta .
	\label{HFEb}
	\end{align}

Substituting $\eta = 2 \nu e^{-\beta_c \mu}$ in Eqs. \eqref{LFEa}, \eqref{HFEb}, we get
\begin{align}		\label{LFE3}
	\Omega & \approx -\frac{N^2}{2\beta_c} \big ( A + \tau \beta_c \mu \big )   e^{-\beta_c 
	\mu}  \\
	F& \approx -\frac{N^2}{2\beta_c} \big ( A - (1-\tau) \beta_c \mu \big )   e^{-\beta_c 
	\mu} 	\label{LFE3a}.
	\end{align}
	Next, using Eqs. \eqref{LFE3} - \eqref{LFE3a} and relations $ S  =- {\partial 
	\Omega}/{\partial T}\big |_{\mu} $ and $E=F+TS$, we obtain
\begin{align}		\label{LFE5}
	S& \approx \frac{N^2}{2} \beta_c \mu e^{-\beta_c \mu} , \\
		C_N& \approx \frac{N^2}{2} (1-\beta_c \mu )e^{-\beta_c \mu} \frac{d\mu}{dT} , \\
	E& \approx -\frac{N^2}{2\beta_c} \big ( A - (2-\tau) \beta_c \mu \big )   e^{-\beta_c 
	\mu} 	\label{HFE3}.
	\end{align}

\end{widetext}

%merlin.mbs apsrev4-1.bst 2010-07-25 4.21a (PWD, AO, DPC) hacked
%Control: key (0)
%Control: author (0) dotless jnrlst
%Control: editor formatted (1) identically to author
%Control: production of article title (0) allowed
%Control: page (1) range
%Control: year (0) verbatim
%Control: production of eprint (0) enabled
%

%\bibliography{nonass,cdt,GRG,CN,CNII}

\begin{thebibliography}{45}%
\makeatletter
\providecommand \@ifxundefined [1]{%
 \@ifx{#1\undefined}
}%
\providecommand \@ifnum [1]{%
 \ifnum #1\expandafter \@firstoftwo
 \else \expandafter \@secondoftwo
 \fi
}%
\providecommand \@ifx [1]{%
 \ifx #1\expandafter \@firstoftwo
 \else \expandafter \@secondoftwo
 \fi
}%
\providecommand \natexlab [1]{#1}%
\providecommand \enquote  [1]{``#1''}%
\providecommand \bibnamefont  [1]{#1}%
\providecommand \bibfnamefont [1]{#1}%
\providecommand \citenamefont [1]{#1}%
\providecommand \href@noop [0]{\@secondoftwo}%
\providecommand \href [0]{\begingroup \@sanitize@url \@href}%
\providecommand \@href[1]{\@@startlink{#1}\@@href}%
\providecommand \@@href[1]{\endgroup#1\@@endlink}%
\providecommand \@sanitize@url [0]{\catcode `\\12\catcode `\$12\catcode
  `\&12\catcode `\#12\catcode `\^12\catcode `\_12\catcode `\%12\relax}%
\providecommand \@@startlink[1]{}%
\providecommand \@@endlink[0]{}%
\providecommand \url  [0]{\begingroup\@sanitize@url \@url }%
\providecommand \@url [1]{\endgroup\@href {#1}{\urlprefix }}%
\providecommand \urlprefix  [0]{URL }%
\providecommand \Eprint [0]{\href }%
\providecommand \doibase [0]{http://dx.doi.org/}%
\providecommand \selectlanguage [0]{\@gobble}%
\providecommand \bibinfo  [0]{\@secondoftwo}%
\providecommand \bibfield  [0]{\@secondoftwo}%
\providecommand \translation [1]{[#1]}%
\providecommand \BibitemOpen [0]{}%
\providecommand \bibitemStop [0]{}%
\providecommand \bibitemNoStop [0]{.\EOS\space}%
\providecommand \EOS [0]{\spacefactor3000\relax}%
\providecommand \BibitemShut  [1]{\csname bibitem#1\endcsname}%
\let\auto@bib@innerbib\@empty
%</preamble>
\bibitem [{\citenamefont {Bollob\'as}(2001)}]{BB1}%
  \BibitemOpen
  \bibfield  {author} {\bibinfo {author} {\bibfnamefont {B.}\ \bibnamefont
  {Bollob\'as}},\ }\href@noop {} {\emph {\bibinfo {title} {{Random Graphs}}}}\
  (\bibinfo  {publisher} {Cambridge University Press, Cambridge},\ \bibinfo
  {year} {2001})\BibitemShut {NoStop}%
\bibitem [{\citenamefont {{ S. N. Dorogovtsev and J. F. F.
  Mendes}}(2003)}]{DSMF}%
  \BibitemOpen
  \bibfield  {author} {\bibinfo {author} {\bibnamefont {{ S. N. Dorogovtsev and
  J. F. F. Mendes}}},\ }\href@noop {} {\emph {\bibinfo {title} {Evolution of
  Networks: From Biological Nets to the Internet and WWW}}}\ (\bibinfo
  {publisher} {Oxford University Press, Oxford},\ \bibinfo {year}
  {2003})\BibitemShut {NoStop}%
\bibitem [{\citenamefont {Caldarelli}(2007)}]{GCal}%
  \BibitemOpen
  \bibfield  {author} {\bibinfo {author} {\bibfnamefont {G.}\ \bibnamefont
  {Caldarelli}},\ }\href@noop {} {\emph {\bibinfo {title} {Scale-Free Networks:
  Complex Webs in Nature, and Technology}}}\ (\bibinfo  {publisher} {Oxford
  University Press, Oxford},\ \bibinfo {year} {2007})\BibitemShut {NoStop}%
\bibitem [{\citenamefont {Barrat}\ \emph {et~al.}(2008)\citenamefont {Barrat},
  \citenamefont {Barthelemy},\ and\ \citenamefont {Vespignani}}]{BABM}%
  \BibitemOpen
  \bibfield  {author} {\bibinfo {author} {\bibfnamefont {A.}~\bibnamefont
  {Barrat}}, \bibinfo {author} {\bibfnamefont {M.}~\bibnamefont {Barthelemy}},
  \ and\ \bibinfo {author} {\bibfnamefont {A.}~\bibnamefont {Vespignani}},\
  }\href@noop {} {\emph {\bibinfo {title} {{Dynamical Processes on Complex
  Networks}}}}\ (\bibinfo  {publisher} {{Cambridge University Press,
  Cambridge}},\ \bibinfo {year} {2008})\BibitemShut {NoStop}%
\bibitem [{\citenamefont {Barab\'asi}(2016)}]{BAL}%
  \BibitemOpen
  \bibfield  {author} {\bibinfo {author} {\bibfnamefont {A.-L.}\
  \bibnamefont {Barab\'asi}},\ }
  {\emph {\bibinfo {title} {Network Science}}}\ (\bibinfo  {publisher}
  {Cambridge University Press, Cambridge},\ \bibinfo {year} {2016})\BibitemShut {NoStop}%
\bibitem [{\citenamefont {Newman}(2018)}]{MN2018}%
  \BibitemOpen
  \bibfield  {author} {\bibinfo {author} {\bibfnamefont {M.}\ \bibnamefont
  {Newman}},\ }\href@noop {} {\emph {\bibinfo {title} {Networks}}}\ (\bibinfo
  {publisher} {Oxford University Press, Oxford},\ \bibinfo {year}
  {2018})\BibitemShut {NoStop}%
\bibitem [{\citenamefont {Albert}\ and\ \citenamefont
  {Barab\'asi}(2002)}]{ARB}%
  \BibitemOpen
  \bibfield  {author} {\bibinfo {author} {\bibfnamefont {R.}\ \bibnamefont
  {Albert}}\ and\ \bibinfo {author} {\bibfnamefont {A.-L.}\
  \bibnamefont {Barab\'asi}},\ }\bibfield  {title} {\enquote {\bibinfo {title}
  {Statistical mechanics of complex networks},}\ }\href@noop {} {\bibfield
  {journal} {\bibinfo  {journal} {Rev. Mod. Phys.}\ }\textbf {\bibinfo {volume}
  {74}},\ \bibinfo {pages} {47--97} (\bibinfo {year} {2002})}\BibitemShut
  {NoStop}%
\bibitem [{\citenamefont {Watts}\ and\ \citenamefont {Strogatz}(1998)}]{WDSS}%
  \BibitemOpen
  \bibfield  {author} {\bibinfo {author} {\bibfnamefont {Duncan~J.}\
  \bibnamefont {Watts}}\ and\ \bibinfo {author} {\bibfnamefont {Steven~H.}\
  \bibnamefont {Strogatz}},\ }\bibfield  {title} {\enquote {\bibinfo {title}
  {Collective dynamics of `small-world' networks},}\ }\href@noop {} {\bibfield
  {journal} {\bibinfo  {journal} {Nature}\ }\textbf {\bibinfo {volume} {393}},\
  \bibinfo {pages} {440 -- 442} (\bibinfo {year} {1998})}\BibitemShut {NoStop}%
\bibitem [{\citenamefont {Kwapie\'{n}}\ and\ \citenamefont
  {Drozdz}(2012)}]{KJSD}%
  \BibitemOpen
  \bibfield  {author} {\bibinfo {author} {\bibfnamefont {J.}\
  \bibnamefont {Kwapie\'{n}}}\ and\ \bibinfo {author} {\bibfnamefont
  {S.}\ \bibnamefont {Drozdz}},\ }\bibfield  {title} {\enquote {\bibinfo
  {title} {Physical approach to complex systems},}\ }\href {\doibase
  https://doi.org/10.1016/j.physrep.2012.01.007} {\bibfield  {journal}
  {\bibinfo  {journal} {Physics Reports}\ }\textbf {\bibinfo {volume} {515}},\
  \bibinfo {pages} {115 -- 226} (\bibinfo {year} {2012})}\BibitemShut {NoStop}%
\bibitem [{\citenamefont {Barab{\'a}si}(2009)}]{BAL1}%
  \BibitemOpen
  \bibfield  {author} {\bibinfo {author} {\bibfnamefont
  {Albert-L{\'a}szl{\'o}}\ \bibnamefont {Barab{\'a}si}},\ }\bibfield  {title}
  {\enquote {\bibinfo {title} {Scale-free networks: A decade and beyond},}\
  }\href {\doibase 10.1126/science.1173299} {\bibfield  {journal} {\bibinfo
  {journal} {Science}\ }\textbf {\bibinfo {volume} {325}},\ \bibinfo {pages}
  {412--413} (\bibinfo {year} {2009})}\BibitemShut {NoStop}%
\bibitem [{\citenamefont {Girvan}\ and\ \citenamefont {Newman}(2002)}]{GMNM}%
  \BibitemOpen
  \bibfield  {author} {\bibinfo {author} {\bibfnamefont {M.}~\bibnamefont
  {Girvan}}\ and\ \bibinfo {author} {\bibfnamefont {M.~E.~J.}\ \bibnamefont
  {Newman}},\ }\bibfield  {title} {\enquote {\bibinfo {title} {Community
  structure in social and biological networks},}\ }\href {\doibase
  10.1073/pnas.122653799} {\bibfield  {journal} {\bibinfo  {journal}
  {Proceedings of the National Academy of Sciences}\ }\textbf {\bibinfo
  {volume} {99}},\ \bibinfo {pages} {7821--7826} (\bibinfo {year}
  {2002})}\BibitemShut {NoStop}%
\bibitem [{\citenamefont {Voitalov}\ \emph {et~al.}(2019)\citenamefont
  {Voitalov}, \citenamefont {van~der Hoorn}, \citenamefont {van~der Hofstad},\
  and\ \citenamefont {Krioukov}}]{VIHP}%
  \BibitemOpen
  \bibfield  {author} {\bibinfo {author} {\bibfnamefont {I.}\ \bibnamefont
  {Voitalov}}, \bibinfo {author} {\bibfnamefont {P.}\ \bibnamefont {van~der
  Hoorn}}, \bibinfo {author} {\bibfnamefont {R.}\ \bibnamefont {van~der
  Hofstad}}, \ and\ \bibinfo {author} {\bibfnamefont {Dmitri}\ \bibnamefont
  {Krioukov}},\ }\bibfield  {title} {\enquote {\bibinfo {title} {Scale-free
  networks well done},}\ }\href {\doibase 10.1103/PhysRevResearch.1.033034}
  {\bibfield  {journal} {\bibinfo  {journal} {Phys. Rev. Research}\ }\textbf
  {\bibinfo {volume} {1}},\ \bibinfo {pages} {033034} (\bibinfo {year}
  {2019})}\BibitemShut {NoStop}%
\bibitem [{\citenamefont {Newman}\ \emph {et~al.}(2001)\citenamefont {Newman},
  \citenamefont {Strogatz},\ and\ \citenamefont {Watts}}]{NMSW}%
  \BibitemOpen
  \bibfield  {author} {\bibinfo {author} {\bibfnamefont {M.~E.~J.}\
  \bibnamefont {Newman}}, \bibinfo {author} {\bibfnamefont {S.~H.}\
  \bibnamefont {Strogatz}}, \ and\ \bibinfo {author} {\bibfnamefont {D.~J.}\
  \bibnamefont {Watts}},\ }\bibfield  {title} {\enquote {\bibinfo {title}
  {Random graphs with arbitrary degree distributions and their applications},}\
  }\href@noop {} {\bibfield  {journal} {\bibinfo  {journal} {Phys. Rev. E}\
  }\textbf {\bibinfo {volume} {64}},\ \bibinfo {pages} {026118} (\bibinfo
  {year} {2001})}\BibitemShut {NoStop}%
\bibitem [{\citenamefont {Newman}(2003)}]{MEJN1}%
  \BibitemOpen
  \bibfield  {author} {\bibinfo {author} {\bibfnamefont {M.~E.~J.}\
  \bibnamefont {Newman}},\ }\bibfield  {title} {\enquote {\bibinfo {title} {The
  structure and function of complex networks},}\ }\href@noop {} {\bibfield
  {journal} {\bibinfo  {journal} {SIAM Review}\ }\textbf {\bibinfo {volume}
  {45}},\ \bibinfo {pages} {167--256} (\bibinfo {year} {2003})}\BibitemShut
  {NoStop}%
\bibitem [{\citenamefont {Romualdo Pastor-Satorras}(2003)}]{RPMR}%
  \BibitemOpen
  \bibfield  {author} {\bibinfo {author} {\bibfnamefont {A.
  Diaz-Guilera~(eds.)}\ \bibnamefont {R. Pastor-Satorras}, \bibfnamefont
  {M. Rubi}},\ }\href@noop {} {\emph {\bibinfo {title} {Statistical
  Mechanics of Complex Networks}}},\ Lecture Notes in Physics 625\ (\bibinfo
  {publisher} {Springer-Verlag, New York},\ \bibinfo {year} {2003})\BibitemShut
  {NoStop}%
\bibitem [{\citenamefont {Park}\ and\ \citenamefont {Newman}(2004)}]{PJNM}%
  \BibitemOpen
  \bibfield  {author} {\bibinfo {author} {\bibfnamefont {J.}\ \bibnamefont
  {Park}}\ and\ \bibinfo {author} {\bibfnamefont {M.~E.~J.}\ \bibnamefont
  {Newman}},\ }\bibfield  {title} {\enquote {\bibinfo {title} {Statistical
  mechanics of networks},}\ }\href@noop {} {\bibfield  {journal} {\bibinfo
  {journal} {Phys. Rev. E}\ }\textbf {\bibinfo {volume} {70}},\ \bibinfo
  {pages} {066117} (\bibinfo {year} {2004})}\BibitemShut {NoStop}%
\bibitem [{\citenamefont {Garlaschelli}\ and\ \citenamefont
  {Loffredo}(2008)}]{CDLM}%
  \BibitemOpen
  \bibfield  {author} {\bibinfo {author} {\bibfnamefont {D.}\ \bibnamefont
  {Garlaschelli}}\ and\ \bibinfo {author} {\bibfnamefont {M.~I.}\
  \bibnamefont {Loffredo}},\ }\bibfield  {title} {\enquote {\bibinfo {title}
  {Maximum likelihood: Extracting unbiased information from complex
  networks},}\ }\href {\doibase 10.1103/PhysRevE.78.015101} {\bibfield
  {journal} {\bibinfo  {journal} {Phys. Rev. E}\ }\textbf {\bibinfo {volume}
  {78}},\ \bibinfo {pages} {015101(R)} (\bibinfo {year} {2008})}\BibitemShut
  {NoStop}%
\bibitem [{\citenamefont {Cimini}\ \emph {et~al.}(2019)\citenamefont {Cimini},
  \citenamefont {Squartini}, \citenamefont {Saracco}, \citenamefont
  {Garlaschelli}, \citenamefont {Gabrielli},\ and\ \citenamefont
  {Caldarelli}}]{CGST}%
  \BibitemOpen
  \bibfield  {author} {\bibinfo {author} {\bibfnamefont {G.}\ \bibnamefont
  {Cimini}}, \bibinfo {author} {\bibfnamefont {T.}\ \bibnamefont
  {Squartini}}, \bibinfo {author} {\bibfnamefont {F.}\ \bibnamefont
  {Saracco}}, \bibinfo {author} {\bibfnamefont {D.}\ \bibnamefont
  {Garlaschelli}}, \bibinfo {author} {\bibfnamefont {A.}\ \bibnamefont
  {Gabrielli}}, \ and\ \bibinfo {author} {\bibfnamefont {G.}\ \bibnamefont
  {Caldarelli}},\ }\bibfield  {title} {\enquote {\bibinfo {title} {The
  statistical physics of real-world networks},}\ }\href {\doibase
  10.1038/s42254-018-0002-6} {\bibfield  {journal} {\bibinfo  {journal} {Nature
  Reviews Physics}\ }\textbf {\bibinfo {volume} {1}},\ \bibinfo {pages}
  {58--71} (\bibinfo {year} {2019})}\BibitemShut {NoStop}%
\bibitem [{\citenamefont {Mikulecky}(2001)}]{MDC}%
  \BibitemOpen
  \bibfield  {author} {\bibinfo {author} {\bibfnamefont {D.~C.}\
  \bibnamefont {Mikulecky}},\ }\bibfield  {title} {\enquote {\bibinfo {title}
  {Network thermodynamics and complexity: a transition to relational systems
  theory},}\ }\href {\doibase https://doi.org/10.1016/S0097-8485(01)00072-9}
  {\bibfield  {journal} {\bibinfo  {journal} {Computers $\&$ Chemistry}\
  }\textbf {\bibinfo {volume} {25}},\ \bibinfo {pages} {369 -- 391} (\bibinfo
  {year} {2001})}\BibitemShut {NoStop}%
\bibitem [{\citenamefont {Garlaschelli}\ \emph {et~al.}(2013)\citenamefont
  {Garlaschelli}, \citenamefont {Ahnert}, \citenamefont {Fink},\ and\
  \citenamefont {Caldarelli}}]{CDAS}%
  \BibitemOpen
  \bibfield  {author} {\bibinfo {author} {\bibfnamefont {D.}\ \bibnamefont
  {Garlaschelli}}, \bibinfo {author} {\bibfnamefont {S.~E.}\
  \bibnamefont {Ahnert}}, \bibinfo {author} {\bibfnamefont {T. M.~A.}\
  \bibnamefont {Fink}}, \ and\ \bibinfo {author} {\bibfnamefont {G.}\
  \bibnamefont {Caldarelli}},\ }\bibfield  {title} {\enquote {\bibinfo {title}
  {Low-temperature behaviour of social and economic networks},}\ }\href
  {\doibase 10.3390/e15083238} {\bibfield  {journal} {\bibinfo  {journal}
  {Entropy}\ }\textbf {\bibinfo {volume} {15}},\ \bibinfo {pages} {3148--3169}
  (\bibinfo {year} {2013})}\BibitemShut {NoStop}%
\bibitem [{\citenamefont {Krioukov}\ \emph {et~al.}(2009)\citenamefont
  {Krioukov}, \citenamefont {Papadopoulos}, \citenamefont {Vahdat},\ and\
  \citenamefont {Bogu\~n\'a}}]{KDPF1}%
  \BibitemOpen
  \bibfield  {author} {\bibinfo {author} {\bibfnamefont {D.}\ \bibnamefont
  {Krioukov}}, \bibinfo {author} {\bibfnamefont {F.}\ \bibnamefont
  {Papadopoulos}}, \bibinfo {author} {\bibfnamefont {A.}\ \bibnamefont
  {Vahdat}}, \ and\ \bibinfo {author} {\bibfnamefont {M.}\ \bibnamefont
  {Bogu\~n\'a}},\ }\bibfield  {title} {\enquote {\bibinfo {title} {Curvature
  and temperature of complex networks},}\ }\href@noop {} {\bibfield  {journal}
  {\bibinfo  {journal} {Phys. Rev. E}\ }\textbf {\bibinfo {volume} {80}},\
  \bibinfo {pages} {035101(R)} (\bibinfo {year} {2009})}\BibitemShut {NoStop}%
\bibitem [{\citenamefont {Krioukov}\ \emph {et~al.}(2010)\citenamefont
  {Krioukov}, \citenamefont {Papadopoulos}, \citenamefont {Kitsak},
  \citenamefont {Vahdat},\ and\ \citenamefont {Bogu\~n\'a}}]{KDPF2}%
  \BibitemOpen
  \bibfield  {author} {\bibinfo {author} {\bibfnamefont {D.}\ \bibnamefont
  {Krioukov}}, \bibinfo {author} {\bibfnamefont {F.}\ \bibnamefont
  {Papadopoulos}}, \bibinfo {author} {\bibfnamefont {M.}\ \bibnamefont
  {Kitsak}}, \bibinfo {author} {\bibfnamefont {A.}\ \bibnamefont {Vahdat}}, \
  and\ \bibinfo {author} {\bibfnamefont {M.}\ \bibnamefont
  {Bogu\~n\'a}},\ }\bibfield  {title} {\enquote {\bibinfo {title} {Hyperbolic
  geometry of complex networks},}\ }\href@noop {} {\bibfield  {journal}
  {\bibinfo  {journal} {Phys. Rev. E}\ }\textbf {\bibinfo {volume} {82}},\
  \bibinfo {pages} {036106} (\bibinfo {year} {2010})}\BibitemShut {NoStop}%
\bibitem [{\citenamefont {Nesterov}\ and\ \citenamefont
  {Mata~Villafuerte}(2020)}]{ANHM}%
  \BibitemOpen
  \bibfield  {author} {\bibinfo {author} {\bibfnamefont {A.~I.}\
  \bibnamefont {Nesterov}}\ and\ \bibinfo {author} {\bibfnamefont
  {P.~H.}\ \bibnamefont {Mata~Villafuerte}},\ }\bibfield  {title}
  {\enquote {\bibinfo {title} {Complex networks in the framework of
  nonassociative geometry},}\ }\href {\doibase 10.1103/PhysRevE.101.032302}
  {\bibfield  {journal} {\bibinfo  {journal} {Phys. Rev. E}\ }\textbf {\bibinfo
  {volume} {101}},\ \bibinfo {pages} {032302} (\bibinfo {year}
  {2020})}\BibitemShut {NoStop}%
\bibitem [{\citenamefont {Callaway}\ \emph {et~al.}(2000)\citenamefont
  {Callaway}, \citenamefont {Newman}, \citenamefont {Strogatz},\ and\
  \citenamefont {Watts}}]{DCNMS}%
  \BibitemOpen
  \bibfield  {author} {\bibinfo {author} {\bibfnamefont {D.~S.}\
  \bibnamefont {Callaway}}, \bibinfo {author} {\bibfnamefont {M.~E.~J.}\
  \bibnamefont {Newman}}, \bibinfo {author} {\bibfnamefont {S.~H.}\
  \bibnamefont {Strogatz}}, \ and\ \bibinfo {author} {\bibfnamefont
  {Duncan~J.}\ \bibnamefont {Watts}},\ }\bibfield  {title} {\enquote {\bibinfo
  {title} {Network robustness and fragility: Percolation on random graphs},}\
  }\href {\doibase 10.1103/PhysRevLett.85.5468} {\bibfield  {journal} {\bibinfo
   {journal} {Phys. Rev. Lett.}\ }\textbf {\bibinfo {volume} {85}},\ \bibinfo
  {pages} {5468--5471} (\bibinfo {year} {2000})}\BibitemShut {NoStop}%
\bibitem [{\citenamefont {Dorogovtsev}\ \emph {et~al.}(2008)\citenamefont
  {Dorogovtsev}, \citenamefont {Goltsev},\ and\ \citenamefont
  {Mendes}}]{DSGAM}%
  \BibitemOpen
  \bibfield  {author} {\bibinfo {author} {\bibfnamefont {S.~N.}\ \bibnamefont
  {Dorogovtsev}}, \bibinfo {author} {\bibfnamefont {A.~V.}\ \bibnamefont
  {Goltsev}}, \ and\ \bibinfo {author} {\bibfnamefont {J.~F.~F.}\ \bibnamefont
  {Mendes}},\ }\bibfield  {title} {\enquote {\bibinfo {title} {Critical
  phenomena in complex networks},}\ }\href {\doibase
  10.1103/RevModPhys.80.1275} {\bibfield  {journal} {\bibinfo  {journal} {Rev.
  Mod. Phys.}\ }\textbf {\bibinfo {volume} {80}},\ \bibinfo {pages}
  {1275--1335} (\bibinfo {year} {2008})}\BibitemShut {NoStop}%
\bibitem [{\citenamefont {Bogu\~n\'a}\ and\ \citenamefont
  {Pastor-Satorras}(2003)}]{BMPSR}%
  \BibitemOpen
  \bibfield  {author} {\bibinfo {author} {\bibfnamefont {M.}\
  \bibnamefont {Bogu\~n\'a}}\ and\ \bibinfo {author} {\bibfnamefont {R.}\
  \bibnamefont {Pastor-Satorras}},\ }\bibfield  {title} {\enquote {\bibinfo
  {title} {Class of correlated random networks with hidden variables},}\ }\href
  {\doibase 10.1103/PhysRevE.68.036112} {\bibfield  {journal} {\bibinfo
  {journal} {Phys. Rev. E}\ }\textbf {\bibinfo {volume} {68}},\ \bibinfo
  {pages} {036112} (\bibinfo {year} {2003})}\BibitemShut {NoStop}%
\bibitem [{\citenamefont {Serrano}\ \emph {et~al.}(2008)\citenamefont
  {Serrano}, \citenamefont {Krioukov},\ and\ \citenamefont
  {Bogu\~n\'a}}]{SMKD1}%
  \BibitemOpen
  \bibfield  {author} {\bibinfo {author} {\bibfnamefont {M.~\'A.}\
  \bibnamefont {Serrano}}, \bibinfo {author} {\bibfnamefont {D.}\
  \bibnamefont {Krioukov}}, \ and\ \bibinfo {author} {\bibfnamefont {Mari\'an}\
  \bibnamefont {Bogu\~n\'a}},\ }\bibfield  {title} {\enquote {\bibinfo {title}
  {Self-similarity of complex networks and hidden metric spaces},}\ }\href
  {\doibase 10.1103/PhysRevLett.100.078701} {\bibfield  {journal} {\bibinfo
  {journal} {Phys. Rev. Lett.}\ }\textbf {\bibinfo {volume} {100}},\ \bibinfo
  {pages} {078701} (\bibinfo {year} {2008})}\BibitemShut {NoStop}%
\bibitem [{\citenamefont {Boccaletti}\ \emph {et~al.}(2006)\citenamefont
  {Boccaletti}, \citenamefont {Latora}, \citenamefont {Moreno}, \citenamefont
  {Chavez},\ and\ \citenamefont {Hwang}}]{BSLV}%
  \BibitemOpen
  \bibfield  {author} {\bibinfo {author} {\bibfnamefont {S.}~\bibnamefont
  {Boccaletti}}, \bibinfo {author} {\bibfnamefont {V.}~\bibnamefont {Latora}},
  \bibinfo {author} {\bibfnamefont {Y.}~\bibnamefont {Moreno}}, \bibinfo
  {author} {\bibfnamefont {M.}~\bibnamefont {Chavez}}, \ and\ \bibinfo {author}
  {\bibfnamefont {D.-U.}\ \bibnamefont {Hwang}},\ }\bibfield  {title} {\enquote
  {\bibinfo {title} {Complex networks: Structure and dynamics},}\ }\href
  {\doibase https://doi.org/10.1016/j.physrep.2005.10.009} {\bibfield
  {journal} {\bibinfo  {journal} {Physics Reports}\ }\textbf {\bibinfo {volume}
  {424}},\ \bibinfo {pages} {175 -- 308} (\bibinfo {year} {2006})}\BibitemShut
  {NoStop}%
\bibitem [{\citenamefont {Squartini}\ and\ \citenamefont
  {Garlaschelli}(2017)}]{STCD}%
  \BibitemOpen
  \bibfield  {author} {\bibinfo {author} {\bibfnamefont {T.}\ \bibnamefont
  {Squartini}}\ and\ \bibinfo {author} {\bibfnamefont {D.}\ \bibnamefont
  {Garlaschelli}},\ }\href@noop {} {\emph {\bibinfo {title} {{Maximum-Entropy
  Networks: Pattern Detection, Network Reconstruction and Graph
  Combinatorics}}}}\ (\bibinfo  {publisher} {Springer International
  Publishing},\ \bibinfo {address} {Cham},\ \bibinfo {year} {2017})\BibitemShut
  {NoStop}%
\bibitem [{\citenamefont {van~der Hoorn}\ \emph {et~al.}(2018)\citenamefont
  {van~der Hoorn}, \citenamefont {Lippner},\ and\ \citenamefont
  {Krioukov}}]{HPLG}%
  \BibitemOpen
  \bibfield  {author} {\bibinfo {author} {\bibfnamefont {P.}\ \bibnamefont
  {van~der Hoorn}}, \bibinfo {author} {\bibfnamefont {G.}\ \bibnamefont
  {Lippner}}, \ and\ \bibinfo {author} {\bibfnamefont {D.}\ \bibnamefont
  {Krioukov}},\ }\bibfield  {title} {\enquote {\bibinfo {title} {{Sparse
  Maximum-Entropy Random Graphs with a Given Power-Law Degree Distribution}},}\
  }\href {\doibase 10.1007/s10955-017-1887-7} {\bibfield  {journal} {\bibinfo
  {journal} {Journal of Statistical Physics}\ }\textbf {\bibinfo {volume}
  {173}},\ \bibinfo {pages} {806--844} (\bibinfo {year} {2018})}\BibitemShut
  {NoStop}%
\bibitem [{\citenamefont {Anand}\ \emph {et~al.}(2014)\citenamefont {Anand},
  \citenamefont {Krioukov},\ and\ \citenamefont {Bianconi}}]{KAKD}%
  \BibitemOpen
  \bibfield  {author} {\bibinfo {author} {\bibfnamefont {K.}\ \bibnamefont
  {Anand}}, \bibinfo {author} {\bibfnamefont {D.}\ \bibnamefont
  {Krioukov}}, \ and\ \bibinfo {author} {\bibfnamefont {G.}\ \bibnamefont
  {Bianconi}},\ }\bibfield  {title} {\enquote {\bibinfo {title} {Entropy
  distribution and condensation in random networks with a given degree
  distribution},}\ }\href {\doibase 10.1103/PhysRevE.89.062807} {\bibfield
  {journal} {\bibinfo  {journal} {Phys. Rev. E}\ }\textbf {\bibinfo {volume}
  {89}},\ \bibinfo {pages} {062807} (\bibinfo {year} {2014})}\BibitemShut
  {NoStop}%
\bibitem [{\citenamefont {Voitalov}\ \emph {et~al.}(2020)\citenamefont
  {Voitalov}, \citenamefont {van~der Hoorn}, \citenamefont {Kitsak},
  \citenamefont {Papadopoulos},\ and\ \citenamefont {Krioukov}}]{VIHPKM}%
  \BibitemOpen
  \bibfield  {author} {\bibinfo {author} {\bibfnamefont {I.}\ \bibnamefont
  {Voitalov}}, \bibinfo {author} {\bibfnamefont {P.}\ \bibnamefont {van~der
  Hoorn}}, \bibinfo {author} {\bibfnamefont {M.}\ \bibnamefont {Kitsak}},
  \bibinfo {author} {\bibfnamefont {F.}\ \bibnamefont {Papadopoulos}},
  \ and\ \bibinfo {author} {\bibfnamefont {D.}\ \bibnamefont {Krioukov}},\
  }\bibfield  {title} {\enquote {\bibinfo {title} {Weighted hypersoft
  configuration model},}\ }\href {\doibase 10.1103/PhysRevResearch.2.043157}
  {\bibfield  {journal} {\bibinfo  {journal} {Phys. Rev. Research}\ }\textbf
  {\bibinfo {volume} {2}},\ \bibinfo {pages} {043157} (\bibinfo {year}
  {2020})}\BibitemShut {NoStop}%
\bibitem [{\citenamefont {Barrat}\ and\ \citenamefont {Weigt}(2000)}]{BAWM}%
  \BibitemOpen
  \bibfield  {author} {\bibinfo {author} {\bibfnamefont {A.}~\bibnamefont
  {Barrat}}\ and\ \bibinfo {author} {\bibfnamefont {M.}~\bibnamefont {Weigt}},\
  }\bibfield  {title} {\enquote {\bibinfo {title} {On the properties of
  small-world network models},}\ }\href {\doibase 10.1007/s100510050067}
  {\bibfield  {journal} {\bibinfo  {journal} {The European Physical Journal B -
  Condensed Matter and Complex Systems}\ }\textbf {\bibinfo {volume} {13}},\
  \bibinfo {pages} {547--560} (\bibinfo {year} {2000})}\BibitemShut {NoStop}%
\bibitem [{\citenamefont {Bollob\'as}\ and\ \citenamefont
  {Riordan}(2003)}]{BBOR}%
  \BibitemOpen
  \bibfield  {author} {\bibinfo {author} {\bibfnamefont {B.}\ \bibnamefont
  {Bollob\'as}}\ and\ \bibinfo {author} {\bibfnamefont {O.~M.}\
  \bibnamefont {Riordan}},\ }\enquote {\bibinfo {title} {Mathematical results
  on scale-free random graphs},}\ in\ \href@noop {} {\emph {\bibinfo
  {booktitle} {{Handbook of graphs and networks: from the genome to the
  Internet}}}},\ \bibinfo {editor} {edited by\ \bibinfo {editor} {\bibfnamefont
  {Stefan}\ \bibnamefont {Bornholdt}}\ and\ \bibinfo {editor} {\bibfnamefont
  {Heinz~Georg}\ \bibnamefont {Schuster}}}\ (\bibinfo  {publisher} {Wiley-VCH,
  Weinheim},\ \bibinfo {year} {2003})\BibitemShut {NoStop}%
\bibitem [{\citenamefont {Wilf}(2006)}]{WHS}%
  \BibitemOpen
  \bibfield  {author} {\bibinfo {author} {\bibfnamefont {H.~S.}\
  \bibnamefont {Wilf}},\ }{\emph {\bibinfo {title} {Generatingfunctionology}}},\ \bibinfo {edition}
  {3rd}\ ed.\ (\bibinfo  {publisher} {A K Peters, Wellesley, MA},\ \bibinfo
  {year} {2006})\BibitemShut {NoStop}%
\bibitem [{\citenamefont {Chung}\ and\ \citenamefont
  {Lu}(2002{\natexlab{a}})}]{CFLL}%
  \BibitemOpen
  \bibfield  {author} {\bibinfo {author} {\bibfnamefont {F.}\ \bibnamefont
  {Chung}}\ and\ \bibinfo {author} {\bibfnamefont {L.}\ \bibnamefont
  {Lu}},\ }\bibfield  {title} {\enquote {\bibinfo {title} {{Connected
  Components in Random Graphs with Given Expected Degree Sequences}},}\
  }\href@noop {} {\bibfield  {journal} {\bibinfo  {journal} {Annals of
  Combinatorics}\ }\textbf {\bibinfo {volume} {6}},\ \bibinfo {pages}
  {125--145} (\bibinfo {year} {2002}{\natexlab{a}})}\BibitemShut {NoStop}%
\bibitem [{\citenamefont {Chung}\ and\ \citenamefont
  {Lu}(2002{\natexlab{b}})}]{CFLL1}%
  \BibitemOpen
  \bibfield  {author} {\bibinfo {author} {\bibfnamefont {F.}\ \bibnamefont
  {Chung}}\ and\ \bibinfo {author} {\bibfnamefont {Linyuan}\ \bibnamefont
  {Lu}},\ }\bibfield  {title} {\enquote {\bibinfo {title} {The average
  distances in random graphs with given expected degrees},}\ }\href@noop {}
  {\bibfield  {journal} {\bibinfo  {journal} {Proceedings of the National
  Academy of Sciences}\ }\textbf {\bibinfo {volume} {99}},\ \bibinfo {pages}
  {15879} (\bibinfo {year} {2002}{\natexlab{b}})}\BibitemShut {NoStop}%
\bibitem [{\citenamefont {Caldarelli}\ \emph {et~al.}(2002)\citenamefont
  {Caldarelli}, \citenamefont {Capocci}, \citenamefont {De~Los~Rios},\ and\
  \citenamefont {Mu\~noz}}]{CGCA}%
  \BibitemOpen
  \bibfield  {author} {\bibinfo {author} {\bibfnamefont {G.}~\bibnamefont
  {Caldarelli}}, \bibinfo {author} {\bibfnamefont {A.}~\bibnamefont {Capocci}},
  \bibinfo {author} {\bibfnamefont {P.}~\bibnamefont {De Los Rios}}, \ and\
  \bibinfo {author} {\bibfnamefont {M.~A.}\ \bibnamefont {Mu\~noz}},\
  }\bibfield  {title} {\enquote {\bibinfo {title} {Scale-free networks from
  varying vertex intrinsic fitness},}\ }\href {\doibase
  10.1103/PhysRevLett.89.258702} {\bibfield  {journal} {\bibinfo  {journal}
  {Phys. Rev. Lett.}\ }\textbf {\bibinfo {volume} {89}},\ \bibinfo {pages}
  {258702} (\bibinfo {year} {2002})}\BibitemShut {NoStop}%
\bibitem [{\citenamefont {Servedio}\ \emph {et~al.}(2004)\citenamefont
  {Servedio}, \citenamefont {Caldarelli},\ and\ \citenamefont
  {Butt\`a}}]{SVCG}%
  \BibitemOpen
  \bibfield  {author} {\bibinfo {author} {\bibfnamefont {V. D.~P.}\
  \bibnamefont {Servedio}}, \bibinfo {author} {\bibfnamefont {G.}\
  \bibnamefont {Caldarelli}}, \ and\ \bibinfo {author} {\bibfnamefont {P.}\
  \bibnamefont {Butt\`a}},\ }\bibfield  {title} {\enquote {\bibinfo {title}
  {Vertex intrinsic fitness: How to produce arbitrary scale-free networks},}\
  }\href {\doibase 10.1103/PhysRevE.70.056126} {\bibfield  {journal} {\bibinfo
  {journal} {Phys. Rev. E}\ }\textbf {\bibinfo {volume} {70}},\ \bibinfo
  {pages} {056126(R)} (\bibinfo {year} {2004})}\BibitemShut {NoStop}%
\bibitem [{\citenamefont {Squartini}\ and\ \citenamefont
  {Garlaschelli}(2011)}]{TSGD}%
  \BibitemOpen
  \bibfield  {author} {\bibinfo {author} {\bibfnamefont {T.}\ \bibnamefont
  {Squartini}}\ and\ \bibinfo {author} {\bibfnamefont {D.}\ \bibnamefont
  {Garlaschelli}},\ }\bibfield  {title} {\enquote {\bibinfo {title} {Analytical
  maximum-likelihood method to detect patterns in real networks},}\ }\href
  {\doibase 10.1088/1367-2630/13/8/083001} {\bibfield  {journal} {\bibinfo
  {journal} {New Journal of Physics}\ }\textbf {\bibinfo {volume} {13}},\
  \bibinfo {pages} {083001} (\bibinfo {year} {2011})}\BibitemShut {NoStop}%
\bibitem [{\citenamefont {Park}\ and\ \citenamefont {Newman}(2003)}]{PYNM2}%
  \BibitemOpen
  \bibfield  {author} {\bibinfo {author} {\bibfnamefont {J.}\ \bibnamefont
  {Park}}\ and\ \bibinfo {author} {\bibfnamefont {M.~E.~J.}\ \bibnamefont
  {Newman}},\ }\bibfield  {title} {\enquote {\bibinfo {title} {{Origin of
  degree correlations in the Internet and other networks}},}\ }\href {\doibase
  10.1103/PhysRevE.68.026112} {\bibfield  {journal} {\bibinfo  {journal} {Phys.
  Rev. E}\ }\textbf {\bibinfo {volume} {68}},\ \bibinfo {pages} {026112}
  (\bibinfo {year} {2003})}\BibitemShut {NoStop}%
  \bibitem [{\citenamefont {Frank W. J.~Olver}(2010)}]{NIST}%
  \BibitemOpen
  \bibfield  {author} {\bibinfo {author} {\bibfnamefont {R. F. Boisvert, 
  C. W.~Clark},\ \bibnamefont {F. W. J.~Olver}, \bibfnamefont {D.
  W.~Lozier}},\ }\href@noop {} {\emph {\bibinfo {title} {{NIST Handbook of
  Mathematical Functions}}}}\ (\bibinfo  {publisher} {Cambridge University
  Press, Cambridge},\ \bibinfo {year} {2010})\BibitemShut {NoStop}%
  \bibitem [{\citenamefont {A.}\ \emph {et~al.}(1953)\citenamefont {A.},
  \citenamefont {W.},\ and\ \citenamefont {F.}}]{AEWM}%
  \BibitemOpen
  \bibfield  {author} {\bibinfo {author} {\bibfnamefont {E.}\
  \bibnamefont {A.}}, \bibinfo {author} {\bibfnamefont {Magnus}\ \bibnamefont
  {W.}}, \ and\ \bibinfo {author} {\bibfnamefont {Oberhettinger}\ \bibnamefont
  {F.}},\ }\href@noop {} {\emph {\bibinfo {title} {{ Higher Transcendental
  Functions, Vol. I.}}}}\ (\bibinfo  {publisher} {McGraw-Hill},\ \bibinfo
  {address} {New York, NY, USA},\ \bibinfo {year} {1953})\BibitemShut {NoStop}%
    \bibitem [{\citenamefont {Prudnikov}\ \emph {et~al.}(2002)\citenamefont
  {Prudnikov}, \citenamefont {Brychkov},\ and\ \citenamefont
  {Marichev}}]{PBM3}%
  \BibitemOpen
  \bibfield  {author} {\bibinfo {author} {\bibfnamefont {A.~P.}\ \bibnamefont
  {Prudnikov}}, \bibinfo {author} {\bibfnamefont {Yu.~A.}\ \bibnamefont
  {Brychkov}}, \ and\ \bibinfo {author} {\bibfnamefont {O.~I.}\ \bibnamefont
  {Marichev}},\ }\href@noop {} {\emph {\bibinfo {title} {Integrals and series.
  3, More special functions}}}\ (\bibinfo  {publisher} {Gordon and Breach
  Science Publishers},\ \bibinfo {year} {2002})\BibitemShut {NoStop}%
\bibitem [{\citenamefont {Abramowitz}\ and\ \citenamefont
  {Stegun}(1965)}]{abr}%
  \BibitemOpen
  \bibinfo {editor} {\bibfnamefont {M.}~\bibnamefont {Abramowitz}}\ and\
  \bibinfo {editor} {\bibfnamefont {I.~A.}\ \bibnamefont {Stegun}},\ eds.,\
  \href@noop {} {\emph {\bibinfo {title} {{Handbook of Mathematical
  Functions}}}}\ (\bibinfo  {publisher} {Dover},\ \bibinfo {address} {New
  York},\ \bibinfo {year} {1965})\BibitemShut {NoStop}%
\bibitem [{\citenamefont {Cohen}\ \emph {et~al.}(2002)\citenamefont {Cohen},
  \citenamefont {ben-Avraham},\ and\ \citenamefont {Havlin}}]{RAS}%
  \BibitemOpen
  \bibfield  {author} {\bibinfo {author} {\bibfnamefont {R.}\ \bibnamefont
  {Cohen}}, \bibinfo {author} {\bibfnamefont {D.}\ \bibnamefont {ben-Avraham}}, \ and\ \bibinfo {author} {\bibfnamefont {S.}\ \bibnamefont
  {Havlin}},\ }\bibfield  {title} {\enquote {\bibinfo {title} {Percolation
  critical exponents in scale-free networks},}\ }\href {\doibase
  10.1103/PhysRevE.66.036113} {\bibfield  {journal} {\bibinfo  {journal} {Phys.
  Rev. E}\ }\textbf {\bibinfo {volume} {66}},\ \bibinfo {pages} {036113}
  (\bibinfo {year} {2002})}\BibitemShut {NoStop}%
  \bibitem [{\citenamefont {Prudnikov}\ \emph {et~al.}(2002)\citenamefont
  {Prudnikov}, \citenamefont {Brychkov},\ and\ \citenamefont
  {Marichev}}]{PBM1}%
  \BibitemOpen
  \bibfield  {author} {\bibinfo {author} {\bibfnamefont {A.~P.}\ \bibnamefont
  {Prudnikov}}, \bibinfo {author} {\bibfnamefont {Yu.~A.}\ \bibnamefont
  {Brychkov}},  \bibinfo {author} {\bibfnamefont {O.~I.}\ \bibnamefont
  {Marichev}},\ and\ \bibinfo {author} {\bibfnamefont {N.~M.}\ \bibnamefont
  {Queen (Translator)}},\ }\href@noop {} {\emph {\bibinfo {title} {	Integrals and Series: elementary functions}}}\ (\bibinfo  {publisher} {CRC},\ \bibinfo {year} {1998})\BibitemShut {NoStop}%
\end{thebibliography}

\end{document}